%
%
%

\PassOptionsToPackage{prologue,dvipsnames}{xcolor}
\documentclass[natbib,acmsmall,nonacm,11pt]{acmart}
\pdfoutput=1 
\usepackage[T1]{fontenc}
\usepackage[utf8]{inputenc}
\usepackage[english]{babel}
\usepackage{geometry}
\geometry{paperwidth=8.3in, paperheight=11.7in}

\def\RATIO{50}
\def\ROUND#1{O_{#1}(1)}

\usepackage{graphicx,framed,calc,amsmath,amsthm,algorithm,algpseudocode}
\usepackage{wrapfig,etoolbox,anyfontsize,bm,textcase}
\usepackage{balance}
\usepackage[dvipsnames]{xcolor}
\usepackage{float,csquotes,xspace}

\usepackage{multicol,multirow,varwidth,outlines}

\usepackage{tikz,tikz-3dplot,tikz-cd,tkz-euclide,pgf,pgfplots,adjustbox}
\usetikzlibrary{math,patterns,positioning,arrows,arrows.meta,graphs,graphs.standard,decorations.pathmorphing,tikzmark,calc}
\pgfplotsset{compat=newest}

\begin{CCSXML}
<ccs2012>
   <concept>
       <concept_id>10003752.10003753.10003761.10003763</concept_id>
       <concept_desc>Theory of computation~Distributed computing models</concept_desc>
       <concept_significance>500</concept_significance>
       </concept>
   <concept>
       <concept_id>10003752.10003809.10003635</concept_id>
       <concept_desc>Theory of computation~Graph algorithms analysis</concept_desc>
       <concept_significance>500</concept_significance>
       </concept>
 </ccs2012>
\end{CCSXML}

\ccsdesc[500]{Theory of computation~Distributed computing models}
\ccsdesc[500]{Theory of computation~Graph algorithms analysis}

\NewDocumentCommand\set{sm}{\IfBooleanTF#1{\{{#2}\}}{\left\{{#2}\right\}}}
\NewDocumentCommand\ceil{sm}{\IfBooleanTF#1{\lceil{#2}\rceil}{\left\lceil{#2}\right\rceil}}
\NewDocumentCommand\floor{sm}{\IfBooleanTF#1{\lfloor{#2}\rfloor}{\left\lfloor{#2}\right\rfloor}}
\NewDocumentCommand\pare{sm}{\IfBooleanTF#1{({#2})}{\left({#2}\right)}}
\NewDocumentCommand\range{smm}{\IfBooleanTF#1{\set*{{#2},\dots,{#3}}}{\set{{#2},\dots,{#3}}}}
\def\pw{\mathrm{pw}}

\def\poly{\mathrm{poly}}

\sloppy
\parskip=0.75ex

\def\myfootnotemark#1#2{%
  {\let\thefootnote\relax\footnotemark\addtocounter{footnote}{-1}\hspace{-.5ex}}%
  \textsuperscript{\hyperref[#1:#2]{\the\numexpr\thefootnote+#2}}%
}

\def\myfootnotetext#1#2#3{%
  \addtocounter{footnote}{1}%
  \footnotetext{\label{#1:#2}#3}%
}

\definecolor{primary}{HTML}{13293D}
\colorlet{secondary}{Dandelion}
\definecolor{tertiary}{HTML}{fe4e63}
\definecolor{ourresults}{RGB}{200,0,0}

\usepackage[nameinlink,capitalize]{cleveref}
\usepackage[norefs,nocites]{refcheck}
\makeatletter
\newcommand{\refcheckize}[1]{%
  \expandafter\let\csname @@\string#1\endcsname#1%
  \expandafter\DeclareRobustCommand\csname relax\string#1\endcsname[1]{%
    \csname @@\string#1\endcsname{##1}\wrtusdrf{##1}}%
  \expandafter\let\expandafter#1\csname relax\string#1\endcsname
}
\makeatother

\refcheckize{\cref}
\refcheckize{\Cref}

\newcommand{\eps}{\varepsilon}

\newcommand{\A}{\mathcal{A}}
\newcommand{\B}{\mathcal{B}}
\newcommand{\cC}{\mathcal{C}}
\newcommand{\D}{\mathcal{D}}

\newcommand{\N}{\mathbb{N}}

\newcommand{\R}{\mathcal{R}}

\let\oldO\O
\renewcommand{\O}{\mathcal{O}}

\newtheorem{definition}{Definition}[section]
\newtheorem{proposition}[definition]{Proposition}

\newtheorem{theorem}[definition]{Theorem}
\newtheorem{lemma}[definition]{Lemma}
\newtheorem{corollary}[definition]{Corollary}

\newtheorem{claim}[definition]{Claim}
\crefname{claim}{claim}{claims}
\crefname{resta}{claim}{claims}
\crefname{subsection}{subsection}{subsections}

\renewenvironment{proof}{\noindent\textbf{Proof.~}}{{}\hfill$\Box$\\}

\usepackage{thmtools}

\interfootnotelinepenalty=10000

\newcommand{\LOCAL}{\textsf{LOCAL}\xspace}
\newcommand{\CONGEST}{\textsf{CONGEST}\xspace}

\DeclareMathOperator{\MDS}{MDS}

\DeclareMathOperator{\MVC}{MVC}

\DeclareMathOperator{\asdim}{asdim}

\let\le\leqslant
\let\ge\geqslant
\let\leq\leqslant
\let\geq\geqslant

\thickmuskip=5mu plus 1mu minus 2mu

\tikzset{%
    draw=primary,
    text=primary,
    vertex/.style={circle,fill=primary,line width=2pt},
}    

\title{Local Constant Approximation for Dominating Set on Graphs Excluding Large Minors}

\author{Marthe Bonamy}
\orcid{0000-0001-7905-8018}
\affiliation{%
  \institution{LaBRI, University of Bordeaux, CNRS}
  \city{Bordeaux}
  \country{France}
}
\email{marthe.bonamy@u-bordeaux.fr}

\author{Cyril Gavoille}
\orcid{0000-0003-3671-8607}
\affiliation{%
  \institution{LaBRI, University of Bordeaux, CNRS}
  \city{Bordeaux}
  \country{France}
}
\email{gavoille@labri.fr}

\author{Timothé Picavet}
\orcid{0000-0002-7129-0127}
\affiliation{%
  \institution{LaBRI, University of Bordeaux, CNRS}
  \city{Bordeaux}
  \country{France}
}
\email{timothe.picavet@u-bordeaux.fr}

\author{Alexandra Wesolek}
\orcid{0000-0003-4841-5937}
\affiliation{%
  \institution{Institut für Mathematik, Technische Universit{\"a}t Berlin}
  \city{Berlin}
  \country{Germany}
}
\email{wesolek@tu-berlin.de}

\date{}
\authorsaddresses{}

\def\minDS{\textsc{Minimum Dominating Set}\xspace}
\def\minVC{\textsc{Minimum Vertex Cover}\xspace}

\begin{abstract}
    We show that graphs excluding $K_{2,t}$ as a minor admit a $f(t)$-round $\RATIO$-approximation deterministic distributed algorithm for \minDS. The result extends to \minVC. Though fast and approximate distributed algorithms for such problems were already known for $H$-minor-free graphs, all of them have an approximation ratio depending on the size of $H$. To the best of our knowledge, this is the first example of a large non-trivial excluded minor leading to fast and constant-approximation distributed algorithms, where the ratio is independent of the size of $H$. 
    A new key ingredient in the analysis of these distributed algorithms is the use of \textit{asymptotic dimension}.
\end{abstract}

\begin{document}

\keywords{distributed algorithm, local model, dominating set, vertex cover, minor-free graph}

\maketitle

\tableofcontents

\newpage

\section{Introduction}

 \minDS (MDS) (and its weaker version, \minVC (MVC)) is a famous minimization problem on graphs, known to be NP-complete even in cubic planar graphs~\cite{GJ79,KYK80}. The goal is to find a smallest subset of vertices that intersects all radius-1 balls (MDS) or all edges (MVC). 
 
Applications of vertex covers, dominating sets, and other types of covers can be found in the context of wireless sensor networks \cite{CIQC02,Krishnamachari05,AFK21}. There, the goal is to minimize energy by keeping as few devices active as possible while maintaining the ability to awake sleeping devices through an active neighbor. For this purpose, the distributed version is also important. 

\paragraph{The \LOCAL Model.}

In this paper, we consider distributed algorithms in the \LOCAL model, popularized by Linial in his seminal papers~\cite{Linial87,Linial92}. In this model, the network is represented by an undirected connected graph $G$, the edges representing reliable communication links between computing devices (the vertices) that work in synchronous rounds. At each round, a vertex can exchange messages with each of its neighbors and perform arbitrary computations based on the information it has. Messages have no size limit, in contrast to the \CONGEST model. At the start of the algorithm, the processors each have a copy of the algorithm and a $O(\log{n})$-bit identifier, where $n$ is the number of vertices in the graph $G$. The main complexity measure in the \LOCAL model is the number of rounds to achieve a given task, taken as the maximum over all vertices. This measure gives an indication on the local nature of a problem, as it captures the minimum value $r$ such that each vertex can reach a good decision based on its radius-$r$ neighborhood. 

\paragraph{Fast Algorithms.}

In any $n$-vertex graph $G$, it is possible to $(1+\eps)$-approximate a MDS for $G$ in $\poly(\eps^{-1}\log{n})$ rounds by combining the techniques of \cite{GKM17} and of \cite{RG20}, see \cite[Cor.~3.11]{RG20}.
For more specific graphs, $O(\log^*{n})$ rounds may suffice. This is for instance the case in planar graphs~\cite{CHW08a}, or more generally in $K_t$-minor-free graphs~\cite{CHW18} and in sub-logarithmic expansion graphs~\cite{ASS19} -- we emphasize that hidden constants in the big-$O$ notation for the number of rounds    depend on $\eps$ and $t$. Conversely, \cite{CHW08a} showed that $o(\log^*{n})$ rounds do not suffice for computing an $(1+\eps)$-approximation of MDS on a cycle in the  \LOCAL model. More generally, using a different technique inspired by Linial, \cite{LW08} showed that the approximation-ratio times the round-complexity must be $\Omega(\log^*{n})$ for any approximation \LOCAL algorithm for MDS in unit-disk graphs. 

\paragraph{Constant-Round Algorithms.}

Because of the lower bound of \cite{KMW16} in general graphs, achieving constant ratio approximation in a constant number of rounds is not possible. More precisely, every constant-approximation \LOCAL algorithm requires $\Omega(\sqrt{\log{n}/\log{\log{n}}})$ rounds, and this holds for MDS and MVC. Therefore, we need to focus on restricted graph classes to obtain such results. 

The literature is abundant in this direction. For instance, on regular graphs, i.e., graphs where all vertices have the same degree, we achieve a $2$-approximation for MVC in~$0$ rounds (take all vertices\footnote{This holds by observing that such a graph contains $kn/2$ edges where $k$ is the degree of each vertex, while a set on $p$ vertices intersects at most $pk$ edges.}). Similarly, a $6$-approximation in unit-disk graphs can be achieved by taking all the vertices incident to an edge -- See the excellent survey of \cite{Suomela13} and the references therein. For the more difficult MDS problem, distributed algorithms have been developed for various classes of graphs, including (but not limited to): outerplanar graphs~\cite{BCGW21}, planar graphs~\cite{LPW13,HLS14,HKOSV25}, bounded-genus graphs~\cite{ASS19}, graphs excluding topological minors~\cite{CHW18}, graphs with sublogarithmic expansion~\cite{ASS19} or with bounded expansion~\cite{KSV21,HKOSV25}. For instance, the approximation ratio for MDS in planar graphs has been improved from $130$~\cite{LPW13} to $52$~\cite{Wawrzyniak14}, and recently down to $11+\eps$~\cite{HKOSV25}.

\paragraph{$H$-minor-free With Large $H$.}

Most of the graph classes cited above can be expressed as $H$-minor-free graphs for some specific minor $H$ (cf. \Cref{table:results}), but the results for graphs with bounded expansion are more general.
More precisely, \cite{KSV21} presented a constant-round \LOCAL algorithm with approximation ratio ${\nabla_1(G)}^{\O(t\nabla_1(G))}$ if $G$ excludes $K_{t,t}$ as subgraph, where $\nabla_r(G)$ is the maximal edge density of a depth-$r$ minor of $G$ and $t = \O(\nabla_1(G))$.
\cite{HKOSV25} have improved the approximation ratio to $\nabla_0(G)\cdot{\nabla_1(G)}^{\O(s\nabla_1(G))}$ if $G$ excludes $K_{s,t}$ as subgraph, at the cost of a larger $\O_t(1)$-round complexity. If $G$ excludes $K_{3,t}$ as subgraph, the approximation ratio improves to $(2+\eps)\cdot (2\nabla_1(G) + 1)$ for every $\eps>0$, where the round complexity is $\O_{\eps,t}(1)$.

Obviously, if $G$ excludes $H$ as minor, it excludes $H$ as a depth-$r$ minor. As a consequence, $\nabla_r(G) \le \delta(H)$, where $\delta(H)$ is the maximum edge density of a graph excluding $H$ as minor. It is well-known~\cite{Kostochka84,Thomason01} that $\delta(K_t) = \Theta(t\sqrt{\log{t}}\,)$, and more generally $\delta(H) = \Theta(t\sqrt{\log{d}}\,)$~\cite{RW16}, where $t = |V(H)|$ and $d = |E(H)|/t < \pw(H)+1$. So, $\delta(K_{s,t}) = \Theta(t\sqrt{\log{s}}\,)$. More specifically, $\delta(K_{3,t}) \le (t+3)/2$~\cite{KP10}.

Therefore, for graphs excluding $K_t$ as minor, the result of \cite{KSV21} implies an approximation ratio of $t^{\O(t^2\sqrt{\log{t}}\,)}$ with $\O(1)$-round complexity. For $K_{s,t}$-minor-free, \cite{HKOSV25} implies an approximation ratio of $t^{\O(st\sqrt{\log{s}}\,)}$ with $\O_t(1)$-round complexity.
For $K_{3,t}$-minor free graphs, the approximation ratio becomes $(2+\eps)\cdot (t+4)$. See \Cref{table:results} for a compilation of best known results for various $H$.

\paragraph{Our Contributions.}

In this paper we concentrate our attention on $H$-minor-free graphs when $H$ has many vertices, that is, $t$ vertices for some arbitrarily large parameter $t\in \mathbb{N}$. To the best of our knowledge (see \Cref{table:results}), no $f(t)$-round and constant-approximation \LOCAL algorithm for MDS in $H$-minor-free graphs is known, excepted perhaps for the trivial case where $H$ is a subgraph of a path with $t$ vertices. Indeed, in this case the graph $G$ has diameter at most $t-1$, and thus a MDS can be solved exactly in $t-1$ rounds\footnote{In the \LOCAL model, after $D$ rounds of communication, each vertex $u$ of a diameter-$D$ graph knows entirely $G$ and its identifier in $G$. After this communication step, $u$ can therefore compute an optimal dominating set in a consistent way with centralized brute-force and deterministic algorithm.}.

For MDS, we show that:

\begin{itemize}
    \item $K_{2,t}$-minor-free graphs have a $\ROUND{t}$-round $\RATIO$-approximation \LOCAL algorithm.
    \item These graphs also have a $O(1)$-round $(2t+1)$-approximation \LOCAL algorithm.    
\end{itemize}

All the algorithms are deterministic, since in general constant-round randomized \LOCAL algorithms are not possible if high probability guarantee is required. 

\begin{table*}[htbp!]
\begin{center}
\renewcommand{\arraystretch}{1.1}
\def\R#1{\textcolor{ourresults}{#1}}%
\begin{tabular}{l|c|c|c}
{\textsc{minor-free graphs}} & \textsc{approx. ratio} & \textsc{\#rounds} & \textsc{references}\\
\hline
\hline
trees ($K_3$) & $3$  & $2$ & Folklore~\myfootnotemark{foot}{1}\\
outerplanar ($K_4$, $K_{2,3}$) & $5$ & $2$ & \cite{BCGW21}\\
planar ($K_5$, $K_{3,3}$) & $11+\eps$  & $O_{\eps}(1)$ & \cite{HKOSV25}\\
\hline
\hline
$K_{1,t}$-minor-free & $t$ &  $0$ & Folklore~\myfootnotemark{foot}{2}\\
\R{$K_{2,t}$-minor-free} & \R{$2t-1$} & \R{$3$} & \R{\textbf{Theorem~\ref{TH:3ROUND}}}\\
\R{$K_{2,t}$-minor-free} & \R{$\RATIO$}  & $\ROUND{t}$ & \R{\textbf{Theorem~\ref{TH:MAIN}}} \\
$K_{s,t}$-minor-free & $t^{O(st\sqrt{\log{s}})}$  & $O_{\eps,t}(1)$ & \cite{HKOSV25}\\
$K_{t}$-minor-free & $t^{O(t^2\sqrt{\log{t}}\,)}$  & $7$ & \cite{KSV21}\\
\hline
\hline
\end{tabular}
\caption{Constant-round approximation distributed algorithms for \minDS on $H$-minor-free graphs, for various $H$. The bottom part of the table is about large $H$, on $t$ vertices where $t$ may be arbitrarily large.
}
\label{table:results}
\end{center}
\end{table*}
\myfootnotetext{foot}{1}{If there are at least three vertices, take all vertices with degree at least two, cf. \cite{LPW13,BCGW21}. This requires two rounds from the model, because the vertices do not know their degree and need one round to count their neighbors (by counting the number of received messages).}%
\myfootnotetext{foot}{2}{Take all the vertices. Such graphs have degree at most $t-1$, thus this is a $0$-round $t$-approximation since every dominating set has size at least $n/(\Delta+1)$ where $\Delta$ is the maximum degree of the graph. }

Our approach toward \Cref{TH:MAIN} is to design an algorithm as simple as possible and push all the complexity to its analysis, following a long tradition~\cite{Wawrzyniak14,BCGW21}. The intuition here is that a highly-connected $K_{2,t}$-minor-free graph has bounded radius, so the difficulty lies in handling small cuts, especially since we cannot decide locally if a vertex belongs to a small cut. We treat all vertices that are in a cut of size~$1$ or~$2$ in their small-distance neighborhood as we would vertices that are in a cut of size~$1$ or~$2$ in the whole graph (take all vertices in a cut of size~$1$, take all vertices in a cut of size~$2$ except those which are clearly a bad idea), then argue what remains\footnote{While $3$-connected $K_{2,t}$-minor-free graphs may have unbounded radius, such graphs admit many local $1$- or $2$-cuts, see \Cref{lem:cc_bounded_radius}.} is a number of connected components of bounded weak radius, which we can thus solve optimally by brute-force. Though the latter part comes with its own interesting challenges which we thankfully mostly outsource\footnote{While the paper is only available as a preprint and has seemingly not gone through a reviewing process, it seems to be generally considered to be correct and has even been refined in a doctoral thesis~\cite{solava2019fine}. For our own peace of mind, we have triple-checked that the pieces we need from that paper really do hold.} to a paper of Ding~\cite{K2tcaract}, the major conceptual contribution is in the analysis of the first part. To argue that taking all vertices that are locally separating (and similarly for vertices in a local cut of size $2$) is not too costly, we need to give some global discharging argument. We were able to do this using recent results on the asymptotic dimension, a notion introduced by Gromov in 1993 in the context of geometric group theory~\cite{gromov1993geometric}. We believe this tool, which we detail in \Cref{sec:asymptotic}, will be of further interest to the community of distributed graph algorithms.

\paragraph{Sketch of the Main Algorithm (\Cref{TH:MAIN}).}

The algorithm computes an approximation of \minDS in a $K_{2,t}$-minor-free graph $G$. It has three main steps:
\begin{enumerate}
    \item Compute the set $X$ of all vertices in ``local'' $1$-cuts, and add them to the solution.
    \item Compute the set $I$ of all ``interesting'' vertices in ``local'' $2$-cuts, and add them to the solution.
    \item Compute an optimal dominating set of all other undominated vertices in each component of $G - (X \cup I)$ using a brute-force approach and add them to the solution.
\end{enumerate}
Intuitively, a ``local'' $k$-cut is a minimal set of vertices that locally (up to some bounded radius) looks like a standard $k$-cut. This radius is a function of the size of $H$ and of $k \in\set{1,2}$. And, a vertex $u$ in a ``local'' 2-cut $\set{u,v}$ is ``interesting'' if $v$ does not dominate all vertices, except for at most one component attached to $\set{u,v}$. This is a rough explanation, and all formal definitions of ``local'' $k$-cuts and ``interesting'' vertices can be found in~\Cref{sec:Prelim} and \Cref{sec:asymptotic}.

The main challenge is to accurately tune the above radii in Step~1 and~2 to show that the approximation ratio (namely \RATIO) does not depend on the size of $H$, but only on the asymptotic dimension of the class and its control function (see \Cref{sec:asymptotic}). In contrast, the round complexity essentially relies on the diameter of the components as defined in Step~3, which is a function of the radii defined above.

\section{Preliminaries}\label{sec:Prelim}

\paragraph{General Definitions.}
In a graph $G$, a set $S\subseteq V(G)$ is a dominating set if and only if every vertex of $G$ is either in $S$ or adjacent to a vertex in $S$.
We denote by $\MDS(G)$ the minimum size of such a set.
Given some $B\subseteq V(G)$, a set $S\subseteq V(G)$ is $B$-dominating if and only if every vertex of $B$ is either in $S$ or adjacent to a vertex in $S$. 
In particular, if $N[B]$ denotes the closed neighbourhood of $B$, we can assume that $S\subseteq N[B]$.
Similarly, we denote by $\MDS(G,B)$ the minimum size of such a set.

A graph without true twins is a graph such that no two distinct vertices $u$ and $v$ are true twins, i.e. are such that $N[u] = N[v]$.
The \emph{true-twin-less} graph associated to $G$ is a largest subgraph of $G$ with no true twins.
Notice that there is a unique such unlabelled graph $G^-$ and $G^-$ can be computed in a constant number of rounds in the \LOCAL model. Furthermore, $\MDS(G^-)=\MDS(G)$. 

The \emph{weak diameter} of a set $S\subset V(G)$ is the largest distance in $G$ between two vertices $u,v\in S$.

\paragraph{Local Connectivity.}

The aim of this definition is to study cuts that can be recognized using a \LOCAL algorithm. Recall that a $k$-cut of a graph $G$ is a minimal subset of $k$ vertices whose removal increases the number of connected components of $G$. E.g., $1$-cuts are a.k.a cut-vertices. On classes of bounded asymptotic dimension, the set of local $k$-cuts is well-behaved for $k \leq 2$ (see \Cref{sec:asymptotic}). 

Here is a formal definition of a local cut\footnote{We did not find this exact same notion anywhere else in the literature, though it was probably considered previously.}.
By $N^r[v]$, we denote the set of all vertices at distance at most $r$ of $v$ in $G$.
\begin{definition}[Local cut]
    A subset of vertices $C$ of a graph $G$ is a \emph{$r$-local $k$-cut} if all vertices of $C$ are pairwise at distance at most $r$ in $G$, and $C$ is a $k$-cut of $G\left[\bigcup_{v\in C}N^r[v]\right]$. 
\end{definition}
We say $G$ is $r$-locally $k$-connected if $G$ has no $r$-local $k$-cuts. If there are no $r$-local $k$-cuts, then there are no $r'$-local $k$-cuts for any $r'>r$. Note that a $k$-cut is a $|V(G)|$-local $k$-cut. Therefore a locally $k$-connected graph is $k$-connected, that is, local connectivity is a stronger notion than connectivity.
All cuts that will be considered from now on are minimal cuts, i.e., no proper subset of the cut is also a cut with the ``same'' connected components. Intuition is not always an ally when it comes to $r$-local $k$-cuts, however we use the notion in a fairly basic manner here.

\section{Asymptotic Dimension}\label{sec:asymptotic}

In this section, we focus on defining asymptotic dimension (a non-trivial task, as it happens) and explaining how to exploit it, in the hope that others may be able to exploit it in turn.

The first pitfall is that asymptotic dimension only makes sense when defined for a whole graph class and not for a single graph.

Given a graph $G$, we say $S \subseteq V(G)$ is \emph{$D$-bounded} if $G[S]$ has weak diameter at most $D$. An \emph{$r$-component} of $S$ is a maximal subset $S' \subseteq S$ such that for any two vertices $u,v \in S'$, there is a sequence of vertices $u_1=u, u_2, u_3, \ldots, u_p=v$ in $S'$ such that any two consecutive vertices are at distance at most $r$ of each other in $G$. Put differently, an $r$-component of $S$ is exactly a connected component of $G^r$. \newline

The asymptotic dimension of a graph class $\mathcal{G}$ is the least integer $d$ such that there exists a function $f$ that satisfies the following conditions:
For any graph $G \in \mathcal{G}$, for any $r > 0$: 
\begin{itemize}
    \item $G$ has a cover $V(G) = \bigcup_{i=0}^d B_i$;
    \item each $r$-component of $B_i$ is $f(r)$-bounded.
\end{itemize}

A function $f$ witnessing that $\mathcal{G}$ has asymptotic dimension at most $d$ is called the \emph{control function} of $\mathcal{G}$.

Note that any finite graph class has asymptotic dimension $0$, as one can always take $f$ to be constant, always set to the number of vertices in a largest graph of $\mathcal{G}$, then take $B_0$ to be the whole graph regardless of $r$.

Trees, and more generally graph classes of bounded treewidth (resp. layered treewidth) have asymptotic dimension~$1$ (resp.~$2$). Planar graphs, and more generally the classes of $H$-minor-free graphs (for any fixed $H$) have asymptotic dimension~$2$ as shown by \cite{BBEGLPS24}: the dependency in $H$ only shows in the control function $f$. Dense graph classes may also have small asymptotic dimension. For example, it is sufficient that there is a quasi-isometry into a class having small asymptotic dimension\footnote{It is conjectured in \cite{BBEGLPS24} that any graph class forbidding some graph as a \emph{fat} minor should also have asymptotic dimension at most $2$.}. The class of chordal graphs, being quasi-isometric to the class of trees, has asymptotic dimension $1$.

Asymptotic dimension is a large-scale generalisation of weak diameter network decomposition which has been studied in distributed computing; a more refined notion of asymptotic dimension is called Assouad-Nagata dimension and its algorithmic form is related to weak sparse partition schemes. The interested reader is referred to~\cite{BBEGLPS24} for further details.

\subsection{When Local Properties Can Replace Global Properties}

Let us give a first application of asymptotic dimension. We say that a graph class $\D$ is $r$-locally-$\cC$ if for every $v\in V(G)$, $G[N^r[v]]\in\cC$. We first prove that a dominating set for $r$-locally-$\cC$ classes can be approximated with good approximation ratio if there is an approximation algorithm on $\cC$ and if the graph class we are approximating on has bounded asymptotic dimension. For a given graph $G$ and a \LOCAL algorithm $\A$ that returns a subset of vertices, we define $\A(G)$ as the set returned by $\A$ when run on $G$.

\begin{proposition}\label{PROP:DS_APPROX_LOCAL_CLASSES}
    Let $\cC$ be a hereditary class of graphs and $k\geq 0$. Let $\A$ be a local algorithm with round complexity $r\geq 1$ and with the following property: for every $G\in\cC$ and $S\subseteq V(G)$, $|\A(G)\cap S|\leq \alpha \cdot \MDS(G,N^k[S])$. Let $\D$ be a graph class with asymptotic dimension $d$ with control function $f$, and that is $(f(2k+3)+k+r)$-locally-$\cC$.
    Then $\A$ is an $\alpha (d+1)$-approximation algorithm on $\D$.
\end{proposition}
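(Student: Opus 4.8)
The plan is to use the cover guaranteed by the bounded asymptotic dimension of $\D$ to localise the approximation guarantee of $\A$, and then to charge the cost incurred on each part of the cover against pairwise disjoint pieces of a global optimum, losing only a factor $d+1$ overall.

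First I would fix $G\in\D$ and a minimum dominating set $D^{*}$ of $G$, so $|D^{*}|=\MDS(G)$. Applying the definition of asymptotic dimension to the graph $G$ and the value $3$ yields a cover $V(G)=\bigcup_{i=0}^{d}B_i$ in which every $3$-component of every $B_i$ is $f(3)$-bounded. Since $\A(G)\subseteq V(G)=\bigcup_i B_i$, we have $|\A(G)|\le\sum_{i=0}^{d}|\A(G)\cap B_i|$, so it suffices to prove $|\A(G)\cap B_i|\le\alpha\,\MDS(G)$ for each fixed $i$.

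Now fix $i$ and one $3$-component $S'$ of $B_i$, and pick any $v\in S'$. Because $G[S']$ has weak diameter at most $f(3)$, we get $S'\subseteq N^{f(3)}[v]$, and hence $\bigcup_{u\in S'}N^{r}[u]\subseteq N^{f(3)+r}[v]$. Set $H_v:=G\bigl[N^{f(3)+r}[v]\bigr]$; this is a connected graph and, by the $(f(3)+r)$-locally-$\cC$ assumption, $H_v\in\cC$. Two facts follow. First, since $\A$ has round complexity $r$, the output of $\A$ at any vertex $u\in S'$ is determined by its radius-$r$ ball, which is the very same object in $G$ and in $H_v$ (that ball lies inside $V(H_v)$ and inherits the same identifiers); hence $\A(G)\cap S'=\A(H_v)\cap S'$. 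Second, $\MDS(H_v,S')\le\MDS(G,S')$: an optimal $S'$-dominating set of $G$ may be taken inside $N[S']\subseteq N^{f(3)+1}[v]\subseteq V(H_v)$ (using $r\ge1$), and it still dominates $S'$ in the induced subgraph $H_v$. Applying the hypothesis on $\A$ to $H_v\in\cC$ with the subset $S'\subseteq V(H_v)$, and using that $D^{*}\cap N[S']$ is an $S'$-dominating set of $G$, I obtain
\[
|\A(G)\cap S'| \;=\; |\A(H_v)\cap S'| \;\le\; \alpha\,\MDS(H_v,S') \;\le\; \alpha\,\MDS(G,S') \;\le\; \alpha\,\bigl|D^{*}\cap N[S']\bigr|.
\]

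Finally, I would sum over all $3$-components $S'$ of $B_i$. They partition $B_i$, so $|\A(G)\cap B_i|=\sum_{S'}|\A(G)\cap S'|\le\alpha\sum_{S'}\bigl|D^{*}\cap N[S']\bigr|$. The crucial point — and the reason the loss is only $d+1$ — is that the closed neighbourhoods $N[S']$ of distinct $3$-components of $B_i$ are pairwise disjoint: two vertices in different $3$-components of $B_i$ are at distance at least $4$ in $G$ (otherwise they would be joined by an edge of $G^{3}$ inside $B_i$), so their closed neighbourhoods cannot meet. Hence $\sum_{S'}\bigl|D^{*}\cap N[S']\bigr|\le|D^{*}|=\MDS(G)$, which gives $|\A(G)\cap B_i|\le\alpha\,\MDS(G)$; summing over $i\in\{0,\dots,d\}$ then completes the proof. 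The step I expect to require the most care is the localisation claim $\A(G)\cap S'=\A(H_v)\cap S'$: one must check that passing to the ball $H_v$ preserves the entire radius-$r$ view (graph and identifiers) of each vertex of $S'$, which is exactly why the hypothesis demands $(f(3)+r)$-locally-$\cC$ rather than merely $f(3)$-locally-$\cC$. The conceptual heart, by contrast, is the disjointness of the neighbourhoods $N[S']$ within a single part $B_i$, which is what keeps the approximation ratio multiplicative in $d+1$ only.
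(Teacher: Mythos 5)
Your proof is correct and follows essentially the same route as the paper: cover $G$ at scale $3$ via asymptotic dimension, localise $\A$'s output on each $f(3)$-bounded $3$-component inside the ball $G[N^{f(3)+r}[v]]\in\cC$, and use that closed neighbourhoods of distinct $3$-components of one $B_i$ are disjoint to charge against a single optimum per part. The only cosmetic difference is that you inline the disjointness/charging step (against $D^{*}\cap N[S']$) which the paper factors out as \Cref{lem:MDS_union_bound}.
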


The proof can be found in \Cref{app:asdim_application}. We unfortunately managed to simplify our algorithm so as not to use \Cref{PROP:DS_APPROX_LOCAL_CLASSES}, but we decided to include it anyways because it showcases the interest of asymptotic dimension and may be of future use.

\subsection{Bounding the Number of Local \texorpdfstring{$1$}{1}-cuts and \texorpdfstring{$2$}{2}-cuts}\label{sec:bounding}

We first bound the number of vertices in local $1$-cuts and the number of vertices in so-called interesting local $2$-cuts.
\begin{lemma}\label{lem:MDS_local_1_cuts}
    For any $d\in\N$, and any graph class $\cC$ of asymptotic dimension at most $d$, there exists $c_{\ref{lem:MDS_local_1_cuts}}(d)$ and $m_{\ref{lem:MDS_local_1_cuts}}(\cC)$ such that for all graphs $G\in\cC$, the number of $m_{\ref{lem:MDS_local_1_cuts}}(\cC)$-local $1$-cuts in $G$ is at most $c_{\ref{lem:MDS_local_1_cuts}}(d)\MDS(G)$.
\end{lemma}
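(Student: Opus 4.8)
The plan is to use the asymptotic dimension of $\cC$ to partition $V(G)$ into $d+1$ parts $B_0,\dots,B_d$, each of whose $r$-components is $f(r)$-bounded for a suitably chosen scale $r$, and then to charge the vertices lying in local $1$-cuts to an optimal dominating set, one part at a time. The first thing I would do is fix the scale. I would set $m := m_{\ref{lem:MDS_local_1_cuts}}(\cC)$ to be something like $4f(3)$ (or more generally a constant depending only on $f$ and hence on $\cC$), so that distinct vertices in the same $\leq 3$-component of some $B_i$ are at distance $\leq f(3) \ll m$, while $m$-local $1$-cuts are genuinely local relative to that bound. Then I would apply the definition of asymptotic dimension with $r=3$ to obtain the cover $V(G)=\bigcup_{i=0}^d B_i$ whose $3$-components are all $f(3)$-bounded.

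Next, for each $i$, I would look at the set $X_i$ of vertices that belong to some $m$-local $1$-cut and also lie in $B_i$. The key claim is that within a single $3$-component $Q$ of $B_i$, the vertices of $X_i \cap Q$ can be charged injectively (or $O(1)$-to-one) to vertices of an optimal dominating set $D$ of $G$: since $Q$ has weak diameter $\leq f(3)$ and since each $v\in X_i$ is a $1$-cut of $G[N^m[v]]$ with $m$ large compared to $f(3)$, the separation witnessed at $v$ is "seen" by all of $Q$. The point is that a local $1$-cut is a cut vertex of a large ball, so around each such $v$ the ball $N^m[v]$ splits; using $K_{2,t}$-minor-freeness-style arguments — wait, we do not have that here, only bounded asymptotic dimension — so instead I would argue combinatorially that distinct cut vertices in $Q$ cannot all be "dominated for free" by $D$ without $D$ itself being large: each $v\in X_i\cap Q$ either lies in $D$, or is dominated by a neighbour, and I would show that each vertex of $D$ can be responsible for only boundedly many such $v$ because of the tree-like structure that a collection of $1$-cuts at bounded mutual distance imposes (cuts at pairwise distance $\leq f(3)$ in $G$ form, after contracting, a structure of bounded size/degree once we remember they are all cut vertices of overlapping balls). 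This yields $|X_i \cap Q| \leq c'(d)\cdot|D\cap N^{O(f(3))}[Q]|$ or similar, and summing over the $3$-components $Q$ of $B_i$ (which are vertex-disjoint, though their neighbourhoods may overlap a bounded number of times) gives $|X_i|\leq c''(d)\,\MDS(G)$.

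Finally I would sum over $i\in\{0,\dots,d\}$: since every vertex in an $m$-local $1$-cut lies in some $B_i$, the total number of such vertices is at most $\sum_{i=0}^d |X_i| \leq (d+1)\,c''(d)\,\MDS(G) =: c_{\ref{lem:MDS_local_1_cuts}}(d)\,\MDS(G)$, with $m_{\ref{lem:MDS_local_1_cuts}}(\cC)=m$ depending only on the control function $f$ of $\cC$. The main obstacle I anticipate is the per-component charging step: turning the purely local information "$v$ is a cut vertex of $N^m[v]$" into a global bound against $\MDS(G)$. The subtle issue is that a vertex $u\in D$ could in principle dominate many local cut vertices, and one must use that these cut vertices, being mutually close and each separating a large ball, impose a forest-like (bounded-degree, bounded-depth) organization on the "small sides" they cut off — so that $u$ lying near many of them would force one of those small sides to actually be large, a contradiction. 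Making "small side" precise (a local $1$-cut need not have a canonically small side a priori, so one must pick, say, the side not containing the bulk of $Q$, or handle the case where $v$ is dominated from a far side) and controlling the bounded-overlap of the neighbourhoods $N^{O(f(3))}[Q]$ as $Q$ ranges over $3$-components are the places where the argument needs genuine care rather than bookkeeping.
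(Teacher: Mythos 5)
Your high-level plan lines up with the paper's: fix a scale depending only on the control function $f$, take the asymptotic-dimension cover $B_0,\dots,B_d$ at that scale, bound the cut vertices inside each bounded-weak-diameter piece against a minimum dominating set restricted to its neighbourhood, then sum over pieces using the disjointness coming from the cover (the paper's \cref{lem:MDS_union_bound}). You also correctly flag that the crux is the per-component charging, and you correctly intuit that the ``tree-like structure'' of $1$-cuts is what saves the argument.

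But the charging step is exactly where the real content of the lemma lives, and your proposal leaves it at the level of a gesture. Two things are missing. First, you never make the reduction from \emph{local} $1$-cuts to \emph{global} $1$-cuts explicit: the paper shows (its ``\cref{claim:MVC_1_cuts_local_to_global}'') that if $B$ is $f(5)$-bounded and $m=f(5)+2$, then every $m$-local $1$-cut lying in $B$ is an honest cut vertex of the finite graph $G[N^2[B]]$, because $N^2[B]\subseteq N^m[v]$. This is what licenses speaking of a well-defined block-cut tree at all; without it, ``$v$ is a cut vertex of $N^m[v]$'' is a different statement for each $v$ and the ``tree-like structure'' you invoke is not a single tree. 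Second, even once you have a genuine block-cut tree $T$ of $G[N^2[B]]$, you still need a concrete assignment of cut vertices to dominators and a bound on the multiplicity; the paper roots $T$, gives an explicit case analysis (child block containing a dominator, descendant cut vertex, leaf block), and shows each $d\in D$ is charged at most twice, yielding the constant $3$. Your phrase ``bounded-degree, bounded-depth organization'' is not true of block-cut trees in general — they can have unbounded degree — so the bounded-multiplicity conclusion does not follow from the vague structural claim; it comes from the specific way the charging is routed through $T$. Relatedly, your choice of scale $r=3$ for the cover only guarantees distinct components are at distance $\geq 4$, so $N[B]$ and $N[B']$ are disjoint but $N^2[B]$ and $N^2[B']$ need not be; the paper sidesteps your ``bounded overlap'' worry entirely by taking $r=5$ so that the relevant $N^2[\cdot]$'s are actually disjoint and \cref{lem:MDS_union_bound} applies cleanly.
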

The proof of this can be found in \Cref{subsec:local_1_cuts}.

To extend \Cref{lem:MDS_local_1_cuts} to $2$-cuts, we need some restriction on the $2$-cuts considered: for example, a large tree with a single vertex adjacent to all its vertices admits many $2$-cuts but has a dominating set of size $1$. This motivates the following definition, which we will only use for $r\geq 2$.

A vertex $v\in C$ is \emph{$r$-interesting} if there exists some $r$-local $2$-cut $c=\{u,v\}$ such that:
\begin{itemize}
    \item $N[v]\not\subseteq N[u]$ and
    \item at least two connected components of $G[N^r[c]]-c$ contain each a vertex non-adjacent to $u$.
\end{itemize}

We are now ready to state the corresponding lemma for $2$-cuts.

\begin{lemma}\label{LEM:MDS_INTERESTING_LOCAL_2_CUTS}
    For any $d\in\N$, and any graph class $\cC$ of asymptotic dimension at most $d$, there exists $c_{\ref{LEM:MDS_INTERESTING_LOCAL_2_CUTS}}(d)$ and $m_{\ref{LEM:MDS_INTERESTING_LOCAL_2_CUTS}}(\cC)$ such that for all graphs $G\in\cC$, the number of interesting vertices in $m_{\ref{LEM:MDS_INTERESTING_LOCAL_2_CUTS}}(\cC)$-local $2$-cuts of $G$ is at most $c_{\ref{LEM:MDS_INTERESTING_LOCAL_2_CUTS}}(d) \MDS(G)$.
\end{lemma}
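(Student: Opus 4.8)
The plan is to mimic the strategy used for \Cref{lem:MDS_local_1_cuts}, but carefully exploit the ``interesting'' restriction to break the symmetry inside each $2$-cut. First I would fix $d$, the class $\cC$ with control function $f$, and choose $r = m_{\ref{LEM:MDS_INTERESTING_LOCAL_2_CUTS}}(\cC)$ large enough relative to $f$ (something like $r \ge f(\text{const})$ for a constant depending on the combinatorial argument below) so that the asymptotic-dimension cover has useful separation properties at scale $r$. Apply the asymptotic dimension of $\cC$ at radius $r$ (or a suitable multiple of $r$) to get a cover $V(G) = \bigcup_{i=0}^d B_i$ with each $r'$-component of each $B_i$ being $f(r')$-bounded. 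The goal is to charge the interesting vertices to an optimal dominating set, losing only a factor depending on $d$ and not on the size of the forbidden minor.

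Next, fix an optimal dominating set $\mathrm{OPT}$ of $G$, $|\mathrm{OPT}| = \MDS(G)$. The key structural point is that an $r$-interesting vertex $v$ sits in an $r$-local $2$-cut $\{u,v\}$ where $v$ genuinely ``matters'': $N[v] \not\subseteq N[u]$, so $v$ dominates something $u$ does not, and moreover (the second bullet) the local graph $G[N^r[\{u,v\}]] - \{u,v\}$ has at least two components each containing a vertex not adjacent to $u$. I would use this to show that any dominating set must ``pay'' for $v$ in a neighbourhood-local way — either $v\in\mathrm{OPT}$, or a neighbour of $v$ specific to one of those two components is in $\mathrm{OPT}$, and because there are two such components on ``opposite sides'' of $v$, this witness is forced to lie close to $v$ (within distance roughly $r$). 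So each interesting $v$ has a nearby ``witness'' $w(v)\in\mathrm{OPT}$. The remaining task is to bound how many interesting vertices can share a given witness, and this is where the asymptotic-dimension cover comes in: within a single bag $B_i$, the interesting vertices that are $r$-close form an $r$-component which is $f(r)$-bounded, hence of bounded weak diameter; I then invoke the $K_{2,t}$-free-type structure (or rather, that $\cC$ being a nice class, small-weak-diameter pieces of it have a bounded number of interesting $2$-cuts through any given vertex — this is the analogue of Ding's characterisation argument, or a direct minor argument: too many ``essentially different'' local $2$-cuts through a bounded-diameter region would create a forbidden structure). Summing over the $d+1$ bags gives the bound $c_{\ref{LEM:MDS_INTERESTING_LOCAL_2_CUTS}}(d)\MDS(G)$.

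The main obstacle, I expect, is the counting step: showing that a bounded-weak-diameter region of $G$ can only contain $O_d(1)$ interesting vertices per vertex of $\mathrm{OPT}$ — equivalently, that the interesting local $2$-cuts do not proliferate. One has to use that $v$ being interesting means both $\{u,v\}$ separates locally \emph{and} $v$ is irredundant ($N[v]\not\subseteq N[u]$) \emph{and} the ``two components non-adjacent to $u$'' condition; without all three, the tree-plus-universal-vertex example shows the count can blow up. I would argue that each interesting $v$ forces a private piece of the graph that must be hit by $\mathrm{OPT}$, and that these private pieces, restricted to an $f(r)$-bounded region and organised by which bag $B_i$ they fall in, overlap in a controlled way — perhaps via a sunflower-type / Helly-type argument on the local $2$-cuts, or by observing that the ``second component non-adjacent to $u$'' gives a branch set that, if too many interesting vertices coexisted, would assemble into a large $K_{2,t}$ minor (contradicting $G\in\cC$ if $\cC$ is $K_{2,t}$-minor-free; for a general bounded-asymptotic-dimension $\cC$ one instead argues more abstractly). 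The delicate point is that $r$ must be chosen \emph{after} knowing the combinatorial constant from this counting argument but the constant must not feed back into $r$ in a circular way — so I would first extract a bound ``number of interesting $v$ with a common $\mathrm{OPT}$-witness $\le g(\cC)$ within any $f(\cdot)$-bounded region'' purely combinatorially, then set $m_{\ref{LEM:MDS_INTERESTING_LOCAL_2_CUTS}}(\cC)$ accordingly, and finally let $c_{\ref{LEM:MDS_INTERESTING_LOCAL_2_CUTS}}(d) = (d+1)\cdot(\text{small constant})$ absorb the rest.
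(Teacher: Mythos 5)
Your overall scaffolding (asymptotic-dimension cover, charging each interesting vertex to a nearby vertex of an optimal dominating set, summing over the $d+1$ bags) matches the paper's high-level plan, and the ``nearby witness'' step is indeed true: the paper shows interesting-ness is certified by vertices at distance at most $4$, hence a dominating vertex within distance $5$. But the step you yourself flag as the main obstacle --- bounding how many interesting vertices can share a witness --- is precisely the technical heart of the lemma, and your proposal contains no argument for it. Gesturing at a sunflower/Helly argument or at assembling a $K_{2,t}$-minor does not work as stated: the lemma is for an arbitrary class of asymptotic dimension at most $d$, where no forbidden minor is available, and even in the $K_{2,t}$-minor-free case a bounded-weak-diameter region gives no a priori bound on the number of interesting vertices per vertex of an optimal dominating set without additional structure (the clique-with-pendants example shows that vertices in minimal $2$-cuts proliferate, and ruling out proliferation of the \emph{interesting} ones is exactly where the work lies). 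The paper resolves this by first organizing the interesting $2$-cuts into three families of pairwise non-crossing cuts via the SPQR tree of each $2$-connected block (three families are necessary; two already fail for $C_6$), turning each family into a tree of $2$-cuts, and then letting each interesting vertex charge a dominating vertex that is \emph{lower in this tree} and within distance $5$; the non-crossing/tree structure is what caps the number of charges any dominating vertex receives (at most $6$ per tree, giving the $19\cdot\MDS(G,N^4[S])$ bound per region). Nothing in your plan plays the role of this non-crossing organization, and the directional choice of witness (lower in the tree) is essential to the multiplicity bound, so as written the proposal has a genuine gap.

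A secondary omission: in the local-to-global step one must show that an interesting vertex of an $m$-local $2$-cut lying in an $f$-bounded piece $B$ of the cover is a vertex of a genuine $1$-cut or $2$-cut of the induced subgraph $G[N^5[B]]$ and is still interesting there; the paper gets this from the distance-$4$ certification of interesting-ness, together with a separate bound for $1$-cuts as in \Cref{lem:MDS_local_1_cuts}, before applying \Cref{lem:MDS_union_bound} to sum over the disjoint pieces of each bag. Your sketch leaves this transfer implicit, and it is needed to make the cover argument legitimate.
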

The proof of this can be found in \Cref{subsec:local_2_cuts}.

\section{Constant Approximation for Minimum Dominating Set}

\paragraph{Intuition and Explanation.}

One can assume that the graph contains no true twins, just like in the algorithm of \Cref{TH:3ROUND}.
The main idea of our algorithm is to take all vertices in $1$-cuts and $2$-cuts, in order to reduce the problem to $3$-connected graphs, where we can solve \minDS in constant time $O(t)$.
However, it is not possible to do this in constant round-complexity in the \LOCAL model.
Therefore, instead of considering $k$-cuts we consider sets of $k$ vertices that resemble a $k$-cut locally.
With a little luck, those vertices are actually $1$-cuts, but not all local $1$-cuts are $1$-cuts.
Indeed, consider a very long cycle. All vertices are local $1$-cuts but none are global $1$-cuts.
However, we can show that, if the graph has bounded asymptotic dimension, there exists some constant $r$ (that does not depend on the graph) such the number of $r$-local $1$-cuts is bounded above by a function linear in $\MDS(G)$.
Therefore, our algorithm can take all local $1$-cuts in the returned set.
The case of local $2$-cuts is more complicated: there are graphs with $\omega(\MDS(G))$ many vertices in $2$-cuts.
Indeed, consider a clique $G$ of size $n$.
Take an arbitrary vertex of the clique $u$, and for all vertices $v\neq u$ of the clique, add a new vertex $x_{uv}$ attached to $\{u,v\}=N(x_{uv})$.
This creates a graph $G$ which can be dominated solely by the vertex $u$.
However, all vertices of the original clique are in some minimal $2$-cut, as $\{u,v\}$ separates $x_{uv}$ from the rest of the clique -- there is an unbounded number of vertices in minimal $2$-cuts.
This leads us to the definition of interesting vertices in $2$-cuts, which we mentioned in \Cref{sec:bounding} and recall here.
A vertex $v\in C$ is \emph{$r$-interesting} for some $r\geq 2$ if there exists some $r$-local $2$-cut $c=\{u,v\}$ such that:
\begin{itemize}
    \item $N[v]\not\subseteq N[u]$ and
    \item at least two connected components of $G[N^r[c]]-c$ contain each a vertex non-adjacent to $u$.
\end{itemize}
The first condition is intuitive: one better take $u$ instead of $v$ if $N[v]\subseteq N[u]$.
The rough idea behind the second condition is that it allows us to create a nice mapping from interesting vertices to a minimum dominating set.
In more detail, we give a tree-like structure to the vertices in $2$-cuts, and show the second condition gives us the existence of some vertex $d$ in a MDS satisfying the following property: $d$ is a successor of $u$ in the tree-like structure, and is at bounded distance from $u$ in the tree-like structure.
Now, every interesting vertex $u$ can charge this vertex $d$.
We then show that this $d$ does not receive too many charges, because of the tree-like structure of the interesting $2$-cuts, and this allows us to bound the number of interesting vertices.
Let us do a quick recap of what the algorithm has done until now: it has taken all local minimal $1$-cuts and all interesting vertices in local minimal $2$-cuts.
Let us consider an arbitrary local minimal $2$-cut $\{u,v\}$.
There are three cases.
First, if both $u$ and $v$ are interesting, the algorithm has taken both vertices in its return set.
All components attached to $\{u,v\}$ can now be solved independently.
Secondly, if $u$ is interesting and $v$ is not interesting, the algorithm has taken $u$ in its return set.
Either $N[v]\subseteq N[u]$ and all components attached to $\{u,v\}$ can now be solved independently, or $u$ dominates all but one component attached to $\{u,v\}$.
In any case, all components attached to $c$ can now be solved independently too, as only one undominated component exists.
Finally, if neither $u$ nor $v$ is interesting, then as the graph is without true twins, one of $u$ or $v$ dominates all but one component attached to $\{u,v\}$.
Therefore, all components attached to $\{u,v\}$ can now be solved independently too.
Now, one can prove that the undominated vertices form connected components of bounded diameter, and our algorithm can brute-force and $G$-dominate the rest of the vertices in constant time.

\paragraph{The Algorithm.}
Let $t\ge 2$ be an integer. The following algorithm computes an approximation of \minDS on the class $\cC_t$ of $K_{2,t}$-minor-free graphs. The algorithm is divided into four steps:
\begin{enumerate}
    \item Remove true twins from the graph.
    \item Compute the set $X_1$ of all vertices in minimal $m_{\ref{lem:MDS_local_1_cuts}}(\cC_t)$-local $1$-cuts, and add them to the returned Dominating Set.
    \item Compute the set $I$ of minimal $m_{\ref{LEM:MDS_INTERESTING_LOCAL_2_CUTS}}(\cC_t)$-interesting vertices in $m_{\ref{LEM:MDS_INTERESTING_LOCAL_2_CUTS}}(\cC_t)$-local $2$-cuts, and add them to the returned Dominating Set.
    \item Let $U$ be the set of already dominated vertices that have no undominated neighbors. Dominate all other undominated vertices, in every component of $G - (X \cup I \cup U)$, using a brute-force approach.
\end{enumerate}

A more formal description of the algorithm is given below:
\begin{algorithm}
\caption{Constant approximation for \minDS}\label{algo:MDS}
\begin{algorithmic}
\Require An integer $t$, and $G$ a $K_{2,t}$-minor-free graph
\Ensure $S$ is a dominating set of $G$ with $|S|=O(\MDS(G))$
\State $G \gets \text{true-twin-less graph associated to }G$
\State $S \gets \{v\in V(G)\mid \{v\} \text{ is a } m_{\ref{lem:MDS_local_1_cuts}}(\cC_t) \text{-local minimal } 1\text{-cut of } G\}$
\State $S \gets S \cup \{v \in C\mid v\text{ is a } m_{\ref{LEM:MDS_INTERESTING_LOCAL_2_CUTS}}(\cC_t)\text{-interesting vertex of a }\newline \hspace*{10em} m_{\ref{LEM:MDS_INTERESTING_LOCAL_2_CUTS}}(\cC_t)\text{-local minimal } 2\text{-cut of } G\}$
\State $S \gets S \cup (\text{brute-forced minimum set of $G$ that dominates } G-N[S])$
\end{algorithmic}
\end{algorithm}

We can now state the main theorem of this section.
\begin{theorem}\label{TH:MAIN}
    For every integer $t \ge 2$, \Cref{algo:MDS} is a $O_t(1)$-round $\RATIO$-approximate deterministic \LOCAL algorithm for \minDS on $K_{2,t}$-minor-free graphs.
\end{theorem}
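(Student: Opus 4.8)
The overall strategy is to verify three things about \Cref{algo:MDS}: that $S$ is a dominating set of $G$, that $|S| = O_t(1)\cdot\MDS(G)$ with the constant ultimately being $\RATIO$ and independent of $t$, and that the algorithm runs in $O_t(1)$ rounds. The first claim is almost by construction: the last step explicitly adds a set that dominates $G - N[S]$, so every vertex ends up dominated; one only has to check that this brute-forcing is well-defined, i.e.\ that the undominated vertices after Steps~1--3 form components of weak diameter bounded by a function of $t$ (and of the constants $m_{\ref{lem:MDS_local_1_cuts}}$, $m_{\ref{LEM:MDS_INTERESTING_LOCAL_2_CUTS}}$), so that a node can learn its whole component and compute an optimal partial dominating set consistently. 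This is exactly where Ding's structural characterization~\cite{K2tcaract} of $K_{2,t}$-minor-free graphs enters: after deleting all (locally recognized) $1$-cuts and all interesting vertices of $2$-cuts, the pieces of $G$ that still contain undominated vertices should be $3$-connected-like of bounded size, hence of bounded diameter. I would isolate this as a separate lemma, carefully matching the local radii to Ding's bounds, and treating the three cases of a local minimal $2$-cut $\{u,v\}$ (both interesting, one interesting, neither interesting, using true-twin-freeness in the last case) exactly as sketched in the ``Intuition'' paragraph to argue each attached component can be solved independently.

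For the approximation ratio, I would split $|S|$ into its three contributions. Step~2 contributes at most $c_{\ref{lem:MDS_local_1_cuts}}(d)\,\MDS(G)$ vertices by \Cref{lem:MDS_local_1_cuts}, with $d = 2$ the asymptotic dimension of $\cC_t$ (via \cite{BBEGLPS23}). Step~3 contributes at most $c_{\ref{LEM:MDS_INTERESTING_LOCAL_2_CUTS}}(d)\,\MDS(G)$ vertices by \Cref{LEM:MDS_INTERESTING_LOCAL_2_CUTS}. For Step~4, the key point is that the brute-forced set is a \emph{minimum} dominating set of the residual undominated vertices; since these live in disjoint bounded-diameter components, and since each optimal global dominating set restricted to (a small neighborhood of) such a component already dominates it, the total cost of Step~4 is at most $O(1)\cdot\MDS(G)$ — more precisely, one charges each brute-forced vertex to a vertex of a fixed optimal dominating set $D^\star$ of $G$, using that distinct components are far apart so no vertex of $D^\star$ is charged by too many components. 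Summing, $|S| \le \big(c_{\ref{lem:MDS_local_1_cuts}}(2) + c_{\ref{LEM:MDS_INTERESTING_LOCAL_2_CUTS}}(2) + O(1)\big)\MDS(G)$, and the whole content of the claim ``$\RATIO$, independent of $t$'' is that these constants depend only on $d=2$ and the control function's shape, not on $t$; the $t$-dependence has been entirely pushed into the radii $m_{\ref{lem:MDS_local_1_cuts}}(\cC_t)$, $m_{\ref{LEM:MDS_INTERESTING_LOCAL_2_CUTS}}(\cC_t)$ and hence only into the round count. I would then simply track the numerical constants to confirm the bound is at most $\RATIO$.

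For the round complexity: Step~1 (true-twin removal) is $O(1)$ rounds as noted in \Cref{sec:Prelim}. Recognizing whether $\{v\}$ is an $m_{\ref{lem:MDS_local_1_cuts}}(\cC_t)$-local minimal $1$-cut only requires $v$ to inspect $G[N^{m}[v]]$, i.e.\ $O_t(1)$ rounds; similarly testing $r$-interestingness of a vertex in an $r$-local $2$-cut requires inspecting a radius-$O_t(1)$ ball, since both the cut condition and the two-components-with-a-vertex-non-adjacent-to-$u$ condition are decided inside $G[N^r[c]]$. Finally Step~4 requires each node to gather its bounded-diameter residual component — again $O_t(1)$ rounds — and run the deterministic brute force; consistency across the component holds because all nodes see the same subgraph with the same identifiers. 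Hence the total is $O_t(1)$.

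\textbf{Main obstacle.} I expect the technical heart to be Step~4's structural lemma: proving that after removing local $1$-cuts and interesting vertices of local $2$-cuts, every component still containing an undominated vertex has weak diameter bounded by a function of $t$. This is where the subtlety of ``local'' versus ``global'' cuts bites — a local $2$-cut need not be a genuine cut — so one cannot naively invoke Ding's characterization; one must argue that within the relevant radius the local picture is faithful, handle the ``one undominated component survives'' bookkeeping in the non-interesting cases without double counting, and ensure the radii are chosen large enough (as functions of $t$) that Ding's bound on $3$-connected $K_{2,t}$-minor-free pieces applies. The approximation-ratio bookkeeping in Step~4 (charging to $D^\star$ across far-apart components) is the second most delicate point, but it is comparatively routine once the bounded-diameter structure is in hand.
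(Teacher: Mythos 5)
Your plan reproduces the paper's overall decomposition (bound Steps~2 and~3 via \Cref{lem:MDS_local_1_cuts,LEM:MDS_INTERESTING_LOCAL_2_CUTS}, bound Step~4 separately, argue the residual components have bounded weak diameter so brute force is both well-defined and consistent), but there are two concrete deviations that matter for the specific claim "$\RATIO$-approximate."

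First, you plug in $d=2$ as the asymptotic dimension of $\cC_t$, citing the general bound for $H$-minor-free families from \cite{BBEGLPS23}. The paper instead uses $d=1$: since $K_{2,t}$ is planar, $K_{2,t}$-minor-free graphs have bounded treewidth (grid minor theorem), and bounded-treewidth classes have asymptotic dimension~$1$. With the constants the paper actually proves ($c_{\ref{lem:MDS_local_1_cuts}}(d)=3(d+1)$, $c_{\ref{LEM:MDS_INTERESTING_LOCAL_2_CUTS}}(d)=22(d+1)$), your choice $d=2$ yields $9 + 66 + 1 = 76$, not $\RATIO$. This is not a cosmetic slip: the entire point of the theorem is a ratio independent of $t$ and as small as $\RATIO$, and the reduction from $d=2$ to $d=1$ is what achieves it. Your sentence "with $d=2$ \ldots and I would then track the constants to confirm the bound is at most $\RATIO$" is internally inconsistent: tracked honestly, the constants do not land at $\RATIO$ for $d=2$.

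Second, for Step~4 you propose a charging argument against a fixed optimum $D^\star$, spread over far-apart components, and conclude only "$O(1)\cdot\MDS(G)$." The paper's bound is simpler and exact: the last line of \Cref{algo:MDS} adds a \emph{minimum} set of $G$ that dominates $G-N[S]$, and since any dominating set of $G$ dominates $G-N[S]$, this set has size at most $\MDS(G)$, giving the clean "$+1$" term. If one does want to reason per component (because that is what the distributed brute force actually does), the correct tool is \Cref{lem:MDS_union_bound} applied to the residual components' closed neighbourhoods, which are pairwise disjoint precisely because of the definition of $U$ (any vertex adjacent to two distinct residual components would have an undominated neighbour and hence not be in $U$, and cannot be in $S$ either, a contradiction). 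Your sketch neither identifies the disjointness fact that makes the charging clean nor recovers the tight constant, so as written the approximation part of your proof does not reach $\RATIO$. The round-complexity and correctness parts of your plan, including flagging the structural lemma on component diameters (via Ding) and the three cases for a non-taken local $2$-cut, do match the paper's argument and are fine.
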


\begin{proof}
    \Cref{algo:MDS} clearly outputs a dominating set of $G$. By \Cref{lem:MDS_local_1_cuts,LEM:MDS_INTERESTING_LOCAL_2_CUTS},  the approximation ratio of the algorithm is $c_{\ref{lem:MDS_local_1_cuts}}(1)+c_{\ref{LEM:MDS_INTERESTING_LOCAL_2_CUTS}}(1)+1=\RATIO$. 
\end{proof}

It remains to argue that the number of rounds is bounded. To do this, we need to argue that the brute-forcing performed takes constant time.
\begin{lemma}\label{lem:cc_bounded_radius}
    For every integer $t$, if $\cC_t$ is the class of $K_{2,t}$-minor-free graphs, there exists $m_{\ref{lem:cc_bounded_radius}}(t)$ such that for every $G\in\cC_t$, if $X$ is the set of vertices in $m_{\ref{lem:MDS_local_1_cuts}}(\cC_t)$-local $1$-cuts, $I$ is the set of $m_{\ref{LEM:MDS_INTERESTING_LOCAL_2_CUTS}}(\cC_t)$-locally interesting vertices of $G$, and $U = \{u\in N[I \cup X] \mid N[u]\subseteq N[I \cup X]\}$, then every connected component of $G\setminus (I \cup X \cup U)$ has diameter at most $m_{\ref{lem:cc_bounded_radius}}(t)$.
\end{lemma}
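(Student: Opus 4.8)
The plan is structural, bottoming out in Ding's characterisation of $K_{2,t}$-minor-free graphs \cite{K2tcaract}. Write $r_1=m_{\ref{lem:MDS_local_1_cuts}}(\cC_t)$ and $r_2=m_{\ref{LEM:MDS_INTERESTING_LOCAL_2_CUTS}}(\cC_t)$ for the two radii used by the algorithm, and $Y=V(G)\setminus N[I\cup X]$ for the set of undominated vertices. First I would record a free normalisation: we may assume $r_1$ and $r_2$ exceed any prescribed function $\theta(t)$ of $t$, since enlarging a radius can only shrink the corresponding set of local cuts, hence does not invalidate \Cref{lem:MDS_local_1_cuts} or \Cref{LEM:MDS_INTERESTING_LOCAL_2_CUTS}. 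Now fix a component $H$ of $G\setminus(I\cup X\cup U)$. If $v\in H$ is dominated then $v\notin U$ gives $v$ a neighbour $v'\notin N[I\cup X]$, and $v'\notin I\cup X\cup U$ with $v'$ adjacent to $v$ forces $v'\in H\cap Y$; so every vertex of $H$ is within distance $1$ in $G[H]$ of the ``undominated core'' $H\cap Y$, and it suffices to bound the pairwise $G$-distances inside $H\cap Y$ (bounding the diameter of $G[H]$ itself then follows, because the argument shows $H$ has no small local separators, so geodesics of $G$ between its vertices need not leave $H$ by more than a constant).

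Next I would show that the core is ``$3$-connected up to controlled debris''. No vertex of $H\cap Y$ can be an $r_1$-local $1$-cut of $G$, for then it would lie in $X$ and not in $H$. And if a pair $\{u,v\}\subseteq H\cap Y$ is an $r_1$-local $2$-cut of $G$, it cannot be interesting, for then both $u,v\in I$; so, using that $G$ is true-twin-free (hence $N[u]\ne N[v]$), either $N[v]\subseteq N[u]$ --- and then $v$ is dominated by $u$ and behaves like a bounded appendage hung on $u$ --- or $u$ dominates all but at most one of the components of $G[N^{r_1}[\{u,v\}]]-\{u,v\}$, so those components have diameter at most $2$ and only one component ``continues'' past the cut. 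Iterating along the decomposition of $H\cap Y$ into its pieces between consecutive local $2$-cuts, one sees that $H\cap Y$ is obtained from a $3$-connected $K_{2,t}$-minor-free graph by attaching bounded-diameter debris at a (possibly large) collection of $2$-cuts, with every such cut branching into at most one ``heavy'' side.

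Then I would invoke Ding. The form of his classification I would use is: there is a function $N$ such that every $3$-connected $K_{2,t}$-minor-free graph is (a) of order at most $N(t)$; or (b) of radius at most $N(t)$ --- the wheel-like and double-wheel-like graphs, which are dominated by a bounded set; or (c) ``ring-like'' --- prisms, Möbius ladders, squares of cycles and their bounded perturbations --- in which every vertex lies in a $2$-cut $\{u,v\}$ of its radius-$2$ ball with $N[u]\ne N[v]$ and a vertex non-adjacent to $u$ on each of the two sides, i.e., an interesting $r_2$-local $2$-cut. Taking $\theta(t)>N(t)$, case (a) bounds the order of the core, case (b) bounds its radius directly, and case (c) is impossible since all its vertices would lie in $I$, away from $H$. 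Finally, the $2$-cut decomposition itself must be shallow: a long chain of bounded-size $3$-connected pieces glued along $2$-cuts contains a large $K_{2,t}$ minor --- use the two ``rails'' of the chain as the two branch sets on the small side and one interior vertex per piece as the leaves --- contradicting $G\in\cC_t$; so the number of pieces, and the debris on each, is $O_t(1)$, and assembling these bounds yields the required $m_{\ref{lem:cc_bounded_radius}}(t)$.

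The main obstacle is reconciling the local and global pictures. Ding's theorem is about genuine $3$-connected graphs, whereas our separators are only $r$-local, so a ``$3$-connected piece'' produced by the decomposition can close up at large scale --- exactly as a prism or a Möbius ladder does --- and thereby fail to separate $G$ globally at all. Making the argument go through requires pinning down $\theta(t)$ (hence $r_1,r_2$) as an \emph{explicit} function of $t$: above Ding's order bound $N(t)$, above the radius of the wheel-like cores, and large enough that each ``rung'' of a surviving ring-like piece is certified interesting from inside a radius-$\theta(t)$ ball --- and then checking that the precise consequences of Ding's classification that we lean on really do hold (the very point the authors flag as having been ``triple-checked''). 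The remaining ingredients --- the reduction to the undominated core, the handling of the non-interesting $2$-cuts and the set $U$, and upgrading the $G$-distance bound to a genuine diameter bound on $G[H]$ --- are routine, if somewhat fiddly.
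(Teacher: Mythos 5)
Your plan founders on the two structural claims that carry all the weight. First, the trichotomy you attribute to Ding is not his theorem and is false as stated: by \Cref{prop:ding} (Corollary~1.6 of \cite{K2tcaract}), a $3$-connected $K_{2,t}$-minor-free graph is an \emph{augmentation} of a graph on at most $m_{\ref{prop:ding}}(t)$ vertices, obtained by attaching fans and strips; a bounded core with a single long strip attached has unbounded order and unbounded radius, and its core vertices need not lie in any $2$-cut of a radius-$2$ ball, so none of your cases (a)--(c) covers it. Second, the claim that ``a long chain of bounded-size $3$-connected pieces glued along $2$-cuts contains a large $K_{2,t}$ minor'' is false, and the point of Ding's theorem is precisely that it is false: a long strip (ladder/prism-like type-I graph) is exactly such a chain and is $K_{2,5}$-minor-free. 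Your intended minor (two rails as branch sets plus one interior vertex per piece) needs each piece to contain a vertex off the rails attached to both of them, which ladders simply do not have, and nothing in your setup guarantees such interior vertices. Hence the step ``the decomposition must be shallow, so the number of pieces is $O_t(1)$'' has no proof, and the final assembly of the diameter bound collapses. The correct mechanism is interestingness, not minor exclusion: the rungs/corners of a long strip form \emph{interesting} local $2$-cuts, so their vertices lie in $I$ and cannot survive in the component, while at a non-interesting local $2$-cut inside the leftover graph one endpoint dominates all but one side of the local ball, forcing those sides to have diameter at most $2$. This is how the paper argues: it forms the auxiliary graph $G_1=G[C\cup(N[C]\cap Y)]$ with edges for contracted outside components, shows it can be made $3$-connected exactly because the surviving cuts are non-interesting (using twin-freeness), applies \Cref{prop:ding} to it, and rules out strips of radius at least $3m_{\ref{LEM:MDS_INTERESTING_LOCAL_2_CUTS}}(\cC_t)$ because their corners would be interesting local $2$-cuts of $G$ (local cuts transfer since $G_1$ is an induced minor of $G$), yielding the bound $3m_{\ref{LEM:MDS_INTERESTING_LOCAL_2_CUTS}}(\cC_t)+m_{\ref{prop:ding}}(t)+O(1)$.

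Two further points. The ``free normalisation'' $r_1,r_2\ge\theta(t)$ is not free: the radii in the statement are fixed to $f(5)+2$ and $f(11)+4$ by \Cref{lem:MDS_local_1_cuts,LEM:MDS_INTERESTING_LOCAL_2_CUTS}, where $f$ is the asymptotic-dimension control function, and nothing guarantees these exceed Ding's bound; enlarging them changes the sets $X,I,U$ in the statement, requires verifying that the set of interesting vertices is monotone under enlarging the radius (the cut itself survives when the ball shrinks, but one must argue the two witness components do not get lost---this needs a short proof you do not supply), and forces the new radii to be propagated through the algorithm and \Cref{TH:MAIN}. The paper's proof needs no comparison between the radii and any Ding-type bound. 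Finally, the assertions that $H\cap Y$ ``is obtained from a $3$-connected graph by attaching bounded debris'' and that the weak-diameter bound upgrades to a diameter bound on $G[H]$ because ``geodesics need not leave $H$'' are exactly the local-to-global steps where the real work lies; they are claimed rather than proven, whereas the paper's construction of $G_1$ and its case analysis on non-interesting cuts is what makes this rigorous.
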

The proof of this can be found in \cref{sec:bruteforce}.

The asymptotic dimension of $K_{2,t}$-minor-free graphs is $1$ by \cite{BBEGLPS24} -- $K_{2,t}$ is planar so $K_{2,t}$-minor-free graphs have bounded treewidth by the grid minor theorem.
For $\cC_t$ the class of $K_{2,t}$-minor-free graphs, the running time is $\max\{m_{\ref{lem:MDS_local_1_cuts}}(\cC_t),m_{\ref{LEM:MDS_INTERESTING_LOCAL_2_CUTS}}(\cC_t),m_{\ref{lem:cc_bounded_radius}}(t)\} = 3\max\set{f(5)+2, f(11)+5} + g(t) + 3$, where $f$ is the control function of the class of $K_{2,t}$-minor-free graphs, and $g$ the linear function given in~\cite[Lemma~6.3]{K2tcaract}.
Choosing $f(r) = (5r+18)t$ suffices, see~\cite[Lemma 7.1]{BBEGLPS24}; and the running time of the algorithm is linear in $t$.

The observant reader may be struck by the fact that the roles of $t$ and of the asymptotic dimension seem disjoint. This can be highlighted with the following variant, which computes a $(c_{\ref{lem:MDS_local_1_cuts}}(d)+c_{\ref{LEM:MDS_INTERESTING_LOCAL_2_CUTS}}(d)+1)$-approximation of $\MDS$ in a class of asymptotic dimension $d$, given its control function, with running time that depends on $f$ and the largest $K_{2,t}$-minor of the input graph but does not require prior knowledge of it.

\begin{algorithm}
\caption{Constant approximation for \minDS in a bounded $\asdim$ class}\label{algo:MDS2}
\begin{algorithmic}
\Require An integer $d$, a control function $f$, and $G$ a graph in a class $\mathcal{G}$ of asymptotic dimension $d$ with control function $f$
\Ensure $S$ is a dominating set of $G$ with $|S|=O(\MDS(G))$
\State $G \gets \text{true-twin-less graph associated to }G$
\State $S \gets \{v\in V(G)\mid \{v\} \text{ is a } m_{\ref{lem:MDS_local_1_cuts}}(\mathcal{G}) \text{-local minimal } 1\text{-cut of } G\}$
\State $S \gets S \cup \{v \in C\mid v\text{ is a } m_{\ref{LEM:MDS_INTERESTING_LOCAL_2_CUTS}}(\mathcal{G})\text{-interesting vertex of a }\newline \hspace*{10em} m_{\ref{LEM:MDS_INTERESTING_LOCAL_2_CUTS}}(\mathcal{G})\text{-local minimal } 2\text{-cut of } G\}$
\State $S \gets S \cup (\text{brute-forced minimum set of $G$ that dominates } G-N[S])$
\end{algorithmic}
\end{algorithm}

We can now state the following stronger version of \Cref{TH:MAIN}.
\begin{theorem}\label{TH:MAIN2}
    For every integer $d$ and control function $f$, \Cref{algo:MDS2} is a $O_t(1)$-round $(c_{\ref{lem:MDS_local_1_cuts}}(1)+c_{\ref{LEM:MDS_INTERESTING_LOCAL_2_CUTS}}(1)+1)$-approximate deterministic \LOCAL algorithm for \minDS on graphs in a class of asymptotic dimension $d$ with control function $f$, where $t$ is the (unknown) size of a largest $K_{2,t}$-minor in the input graph.
\end{theorem}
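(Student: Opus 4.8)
\medskip

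The plan is to verify, in order: that \Cref{algo:MDS2} returns a dominating set of $G$; that its size is at most $\bigl(c_{\ref{lem:MDS_local_1_cuts}}(1)+c_{\ref{LEM:MDS_INTERESTING_LOCAL_2_CUTS}}(1)+1\bigr)\MDS(G)$; and that it runs in $O_t(1)$ rounds for fixed $d$ and $f$. Correctness is immediate, exactly as for \Cref{TH:MAIN}: passing to $G^-$ preserves $\MDS$ and any dominating set of $G^-$ dominates $G$; Steps~2--3 add a set $X\cup I$; and Step~4 adds a minimum set of $G$ dominating the remaining undominated vertices $V(G)\setminus N[X\cup I]$, which has size at most $\MDS(G)$ since any optimal dominating set of $G$ already dominates those vertices. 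The conceptual point is the ratio, and the reason the relevant constants are the \emph{$d=1$} ones of \Cref{lem:MDS_local_1_cuts} and \Cref{LEM:MDS_INTERESTING_LOCAL_2_CUTS}: whatever the asymptotic dimension $d$ of the ambient class $\mathcal{G}$, the concrete input $G$ is $K_{2,t}$-minor-free, hence of bounded treewidth (the grid-minor theorem, since $K_{2,t}$ is planar), hence lies in the class $\cC_{t+1}$ of $K_{2,t+1}$-minor-free graphs, whose asymptotic dimension is $1$ by \cite{BBEGLPS23}.

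Granting this, the ratio is bounded by applying \Cref{lem:MDS_local_1_cuts} and \Cref{LEM:MDS_INTERESTING_LOCAL_2_CUTS} to $\cC_{t+1}$ with $d=1$: $|X|\le c_{\ref{lem:MDS_local_1_cuts}}(1)\MDS(G)$ and $|I|\le c_{\ref{LEM:MDS_INTERESTING_LOCAL_2_CUTS}}(1)\MDS(G)$, and combined with the Step~4 bound of the previous paragraph this gives the stated ratio. The point requiring care --- which I expect to be the main obstacle --- is that \Cref{algo:MDS2} detects local cuts at the radii $m_{\ref{lem:MDS_local_1_cuts}}(\mathcal{G})$ and $m_{\ref{LEM:MDS_INTERESTING_LOCAL_2_CUTS}}(\mathcal{G})$ read off from $(\mathcal{G},f)$, whereas the bounds just quoted are proved at the radii associated to the control function of $\cC_{t+1}$. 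To bridge this I would use that: (a) the radii delivered by those lemmas depend on the class only through the values of its control function at a few absolute constants, so $m_{\ref{lem:MDS_local_1_cuts}}(\cC_{t+1})$ is a well-defined, $O_t(1)$-sized quantity; and (b) for a fixed graph the family of (minimal) $\rho$-local $1$-cuts is non-increasing in $\rho$ --- any path that avoids a cut inside a smaller ball also avoids it inside a larger one, so enlarging $\rho$ only makes ``being a $1$-cut of the radius-$\rho$ ball'' harder --- and an analogous monotonicity holds for local $2$-cuts and interesting vertices, while removing a larger family of cuts only shrinks the residual graph of Step~4. Consequently it suffices to run Steps~2--3 at radius $\max\{m_{\ref{lem:MDS_local_1_cuts}}(\mathcal{G}),\,m_{\ref{lem:MDS_local_1_cuts}}(\cC_{t+1})\}$ (and similarly for $2$-cuts): this is $O_t(1)$ for fixed $f$, lets the $\cC_{t+1}$-bounds transfer verbatim, and increases only the round count, which is anyway permitted to depend on $t$, not the ratio; $t$ thus enters only the analysis and the round count, never the radius the algorithm actually uses, which is $\ge m_{\ref{lem:MDS_local_1_cuts}}(\mathcal{G})=\Omega(f(5))$ and already large enough to see the scale-$5$ structure of $G$ whole.

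Finally, the round count: computing $G^-$, deciding whether a set is an $r$-local $k$-cut, and deciding interestingness are $O(r)$-round tasks, so Steps~1--3 cost $O_t(1)$ rounds. For Step~4, \Cref{lem:cc_bounded_radius} applied to $\cC_{t+1}$ bounds the diameter of every component of $G-(X\cup I\cup U)$ by $m_{\ref{lem:cc_bounded_radius}}(t)=O_t(1)$, and by the monotonicity above using a larger detection radius removes at least as much, so this bound persists; since an undominated vertex is dominated only by itself or a neighbour, each such component together with its closed neighbourhood still has diameter $O_t(1)$, so every vertex learns its component in $O_t(1)$ rounds and all vertices compute the same optimal residual dominating set by brute force, ties broken deterministically by identifiers --- which also establishes that the algorithm is deterministic.
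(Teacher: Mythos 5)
There is a genuine gap, and it sits exactly where you flagged ``the main obstacle.'' The paper gives no explicit proof of \Cref{TH:MAIN2}; the intended argument (read off from the sentence introducing \Cref{algo:MDS2}, which announces a $(c_{\ref{lem:MDS_local_1_cuts}}(d)+c_{\ref{LEM:MDS_INTERESTING_LOCAL_2_CUTS}}(d)+1)$-approximation) is simply to apply \Cref{lem:MDS_local_1_cuts} and \Cref{LEM:MDS_INTERESTING_LOCAL_2_CUTS} to the class $\mathcal{G}$ itself, with its own dimension $d$ and control function $f$ --- which is exactly the data the algorithm uses to set its radii --- and the ``$(1)$'' in the theorem statement is best read as a typo for ``$(d)$''. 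Your attempt to genuinely earn the $d=1$ constants by observing that the input is $K_{2,t+1}$-minor-free does not go through as written. The $d=1$ bounds of the two lemmas are proved only for local cuts at radius at least $f_{t+1}(5)+2$ (resp.\ $f_{t+1}(11)+5$), where $f_{t+1}$ is a control function witnessing $\asdim(\cC_{t+1})\le 1$: the proof needs a \emph{two-part} cover whose $5$-components fit inside the balls the algorithm inspects, and those components are only $f_{t+1}(5)$-bounded, with no reason to satisfy $f_{t+1}(5)\le f(5)$. By the very monotonicity you invoke, being a local cut is \emph{harder} at larger radius, so a bound on the number of $m_{\ref{lem:MDS_local_1_cuts}}(\cC_{t+1})$-local cuts says nothing about the (possibly larger) number of $m_{\ref{lem:MDS_local_1_cuts}}(\mathcal{G})$-local cuts the algorithm actually collects when $m_{\ref{lem:MDS_local_1_cuts}}(\mathcal{G})<m_{\ref{lem:MDS_local_1_cuts}}(\cC_{t+1})$. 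Your proposed repair --- run Steps 2--3 at radius $\max\{m(\mathcal{G}),m(\cC_{t+1})\}$ --- both changes the algorithm and requires computing $m(\cC_{t+1})$, i.e.\ knowing $t$, which the theorem explicitly withholds from the algorithm; your own write-up is inconsistent on this point, first enlarging the radius to a $t$-dependent value and then asserting that $t$ ``never enters the radius the algorithm actually uses.''

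Two smaller remarks. First, the monotonicity you use for Step 4 is stated in the wrong direction: enlarging the detection radius makes it \emph{harder} to be a local cut, hence removes \emph{fewer} vertices, so the residual components can only grow; the diameter bound of \Cref{lem:cc_bounded_radius} still survives, but because its proof (long strips force their corners to be local $2$-cuts) rescales with the detection radius, not because ``more is removed.'' Second, the rest of your proposal --- correctness of the output, the $+1$ from the brute-force step, and the $O_t(1)$ round count via \Cref{lem:cc_bounded_radius} and \Cref{prop:ding} --- matches what the paper intends and is fine. The clean statement you can actually prove for \Cref{algo:MDS2} as written is the $(c_{\ref{lem:MDS_local_1_cuts}}(d)+c_{\ref{LEM:MDS_INTERESTING_LOCAL_2_CUTS}}(d)+1)$-approximation, by running the two lemmas on $\mathcal{G}$ directly.
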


As an added note, if one wishes to have an algorithm for \minVC instead of \minDS, it suffices to take all $m_{\ref{LEM:MDS_INTERESTING_LOCAL_2_CUTS}}(\cC_t)$-local $2$-cuts instead of just $m_{\ref{LEM:MDS_INTERESTING_LOCAL_2_CUTS}}(\cC_t)$-interesting vertices. On the analysis side, one can prove a simpler variant of \Cref{LEM:MDS_INTERESTING_LOCAL_2_CUTS} that bounds the number of vertices in local $2$-cuts with respect to the size of a minimum vertex cover. Therefore, both \Cref{TH:MAIN} and \Cref{TH:MAIN2} extend to the context of $\minVC$.

We conclude the ``non-technical'' part of the paper with the following result, which shows a different trade-off: linear approximation in constant number of rounds.

\begin{theorem}\label{TH:3ROUND}
    For every integer $t \ge 2$, there is a $3$-round $(2t-1)$-approximate deterministic \LOCAL algorithm (resp. $t$-approximate) for \minDS (resp. \minVC) on $K_{2,t}$-minor-free graphs.
\end{theorem}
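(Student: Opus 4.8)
The plan is to use two very simple local selection rules — one for \minVC, one for \minDS — and to push all of the work into the analysis, just as for \Cref{TH:MAIN}: in each case I would pick a set that is obviously feasible and then charge it against an optimal solution, using only two structural facts about $K_{2,t}$-minor-free graphs, namely that they are sparse (edge density at most $\tfrac{t+1}{2}$, hence $O(t)$-degenerate) and that any two vertices have at most $t-1$ common neighbours (otherwise together with those neighbours they form a $K_{2,t}$ subgraph).

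For \minVC, output the set $S$ of all vertices of degree at least $2$, plus the smaller endpoint (by identifier) of every component that is a single edge; a vertex learns its degree in one round and, if it has degree $1$, the degree of its neighbour in two, so this fits in $3$ rounds, and $S$ is plainly a vertex cover. For the ratio, fix a minimum vertex cover $C$: then $|S\cap C|\le|C|$, and every vertex of $S\setminus C$ is an independent vertex of degree at least $2$ all of whose neighbours lie in $C$. The key step is to bound $|S\setminus C|\le(t-1)|C|$ by studying the bipartite graph $B$ between $S\setminus C$ and its neighbourhood in $C$, which is $K_{2,t}$-minor-free and has minimum degree at least $2$ on the $S\setminus C$-side. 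Here one must genuinely use $K_{2,t}$-\emph{minor}-freeness: if $|C|=2$ the $S\setminus C$-side already has at most $t-1$ vertices (else $B\supseteq K_{2,t}$), but in general one must also forbid ``detour'' structures (short paths through a few vertices of $C$ that would supply extra connector branch-sets of a $K_{2,t}$ minor), after which a direct count on $B$ gives the claimed linear-in-$|C|$ bound. Hence $|S|\le t\cdot\MVC(G)$.

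For \minDS, first replace $G$ by its true-twin-less graph $G^-$, which is computable in $3$ rounds and has $\MDS(G^-)=\MDS(G)$, and note that $G^-$ has no isolated edge (its two endpoints would be true twins). Then output $S$ = all vertices of degree at least $2$ in $G^-$, together with all isolated vertices of $G^-$. Since $N_G[u]=N_G[v]$ whenever $u$ was deleted in favour of its representative $v$, a dominating set of $G^-$ dominates $G$; and in every component of $G^-$ on at least three vertices the degree-$\ge 2$ vertices dominate, because a degree-$\le 1$ vertex has a degree-$\ge 2$ neighbour. For the ratio, work component by component (discarding the trivial single-vertex ones, which contribute ratio $1$), fix a minimum dominating set $D$ of $G^-$ — which may be assumed to avoid pendants, hence to lie inside the degree-$\ge 2$ vertices — and charge each degree-$\ge 2$ vertex to a vertex of $D$: each $d\in D$ charges itself, and each degree-$\ge 2$ vertex outside $D$ charges one of its $D$-neighbours, the assignment being chosen to balance loads. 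One then shows no $d$ receives more than $2t-1$ charges, by combining the ``$\le t-1$ common neighbours'' fact, true-twin-freeness (two neighbours of $d$ with equal closed neighbourhoods would be true twins), minimality of $D$ (private neighbours), and the $K_{2,t}$-minor structure to rule out the configurations that would overload $d$. The twin-removal step is essential: without it the friendship graph (many triangles through one vertex) has $\MDS=1$ but arbitrarily many degree-$2$ vertices. The bound is tight: for $t=2$ the path $P_n$ has $\MDS=\lceil n/3\rceil$ and $n-2$ degree-$2$ vertices, and for general $t$ the (true-twin-free, $K_{2,t}$-minor-free) graph obtained from $K_t$ by adding a pendant vertex on each edge incident to one fixed vertex $u$ has $\MDS=1$ and exactly $2t-1$ vertices of degree $\ge 2$ — which at $t=3$ reproduces the tight outerplanar constant $5$. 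Hence $|S|\le(2t-1)\cdot\MDS(G)$.

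The main obstacle is exactly the load-balancing in these charging arguments — most of all for \minDS — since obtaining the precise constants $t$ and $2t-1$ requires distributing the charges so that no vertex of the optimal solution is overloaded, and this is where $K_{2,t}$-minor-freeness has to be used in full: the weaker ``$\le t-1$ common neighbours'' statement alone would allow path-like gadgets to carry too many degree-$\ge 2$ vertices, while true-twin-freeness is what keeps the \minDS constant finite at all. By comparison the feasibility of the two selected sets and the $3$-round bound (true-twin removal, degree computation, and detecting size-$\le 2$ components all fit in $3$ rounds, after which every vertex decides locally) are routine.
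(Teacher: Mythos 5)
Your selection rule for \minDS is wrong, and this is a fatal flaw rather than a gap in the analysis. You propose to output all vertices of degree at least~$2$ in the true-twin-less graph $G^-$, but this set can be arbitrarily larger than $(2t-1)\cdot\MDS(G)$. The wheel $W_k$ (a $k$-cycle $v_1,\dots,v_k$ plus a hub $u$ adjacent to all $v_i$) is true-twin-free for $k\geq 5$ and is $K_{2,4}$-minor-free for all $k$ (any two branch sets acting as the ``2''-side are arcs on the cycle --- or one of them contains $u$ --- so at most two plus possibly the hub can serve as connector branch sets, never four). Yet $\MDS(W_k)=1$ (take $u$) while every vertex of $W_k$ has degree at least~$2$, so your algorithm would output all $k+1$ vertices, giving ratio $k+1$ for $t=4$, unbounded as $k\to\infty$. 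Your own charging scheme collapses on this example: there is only one vertex in the optimal dominating set, so ``balancing'' is impossible and $u$ receives $k+1$ charges, not $\le 2t-1$. The paper avoids this by taking a strictly smaller set, namely $D_2 = \{v\mid \not\exists\, u\neq v,\ N[v]\subseteq N[u]\}$ --- vertices whose closed neighbourhood cannot be dominated by a single other vertex. On $W_k$ this set is just $\{u\}$, because $N[v_i]\subseteq N[u]$ for every spoke $v_i$. Note $D_2$ excludes some degree-$\ge 2$ vertices that sit inside triangles dominated by a bigger neighbour, which is exactly what your rule fails to do. Also observe that your ``tight'' example ($K_t$ plus gadget vertices $p_i$ adjacent to $\{u,v_i\}$) has $D_2=\{u\}$ and ratio~$1$ under the paper's algorithm: each $p_i$ satisfies $N[p_i]\subseteq N[v_i]$ and each $v_i$ satisfies $N[v_i]\subseteq N[u]$, so they are all excluded. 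Your example is only tight for your (incorrect) rule.

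Beyond the selection rule, the analysis machinery differs. The paper does not argue by direct charging. For \minDS it first constructs a specific minor $H$ of $G[N^2[S]]$ (\Cref{lem:H_existence}) whose vertex set splits as $A\sqcup B$ with $H[A]$ edgeless, $\deg_H(a)\ge 2$ for $a\in A$, $|B|=\MDS$, and $|A|\ge\tfrac12|D_2\setminus D|$ --- the factor $\tfrac12$, obtained via Ore's theorem, is where the ``$2$'' in $2t-1$ comes from. It then proves the genuinely nontrivial counting lemma (\Cref{lem:H_size}) that in a $K_{2,t}$-minor-free graph with that structure $|A|\le(t-1)|B|$, via an induction with a careful ``red-edge'' preprocessing step that contracts $A$-vertices whose two $B$-neighbours lie in different components of $G[B]$. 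Your \minVC sketch (take all degree-$\ge 2$ vertices; the degree-$\ge 2$ vertices outside a minimum cover $C$ form an independent set with all neighbours in $C$, bound that set by $(t-1)|C|$) is in spirit the same as what \Cref{lem:H_size} gives, and that part of the approach is sound; but ``after which a direct count on $B$ gives the claimed bound'' is precisely the hard step, and the unqualified claim that two vertices of a $K_{2,t}$-minor-free graph have at most $t-1$ common neighbours does not suffice on its own (the red-edge bookkeeping is there exactly to handle the ``detour'' configurations you mention in passing). So \minVC is incomplete, and \minDS is incorrect as stated; the fix for \minDS is to replace ``degree $\geq 2$'' with the paper's domination-based criterion $D_2$, which requires the additional lemma that $D_2$ is a dominating set in a true-twin-free graph.
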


As outerplanar graphs are a subfamily of $K_{2,3}$-minor-free graphs, this result generalizes the $5$-approximation algorithm of~\cite{BCGW21} on outerplanar graphs in the case of the LOCAL model. The proof of \Cref{TH:3ROUND} can be found in \cref{app:linear_algo}.

\let\O\oldO

\section{Proofs}

\subsection{Proof of Proposition \ref{PROP:DS_APPROX_LOCAL_CLASSES}: From Local to Global}
\label{app:asdim_application}

Let $G\in\D$. By the definition of the asymptotic dimension applied to $G$, there is a cover $B_0, B_1, \dots, B_{d}$ of $G$ where each $2k+3$-component of a $B_i$ is $f(2k+3)$-bounded. Note that each $B_i$ contains distinct $2k+3$-components which are of distance at least $2k+4$ from each other. With an abuse of notation, when we write $B \in B_i$ we mean that $B$ is a $2k+3$-component of $B_i$. That is, we treat $B_i$ as the set of its $2k+3$-components. Let $\A$ be an $\alpha$-approximation algorithm for $\cC$ with round complexity $r$.
Let us run the same algorithm on graphs from $\D$.
By the covering property, $|\A(G)| \leq \sum_{i=0}^{d} |\A(G)\cap B_i|$.

Let $i\in\{0,1,\dots,d\}$ and $B\in B_i$.
We have the following:
\begin{claim}
    $G[N^{k+1}[B]]\in \cC$.Ptetre que si tu passes 
\end{claim}
Let $v\in B$.
Because $\D$ is $(f(2k+3)+k+r)$-locally $\cC$, $G'=G[N^{f(2k+3)+k+r}[v]]\in\cC$.
Moreover, $N^{k+1}[B]\subseteq G'$ because $B$ has weak diameter at most $f(2k+3)$ and because $r\geq 1$. 
Therefore, as $\cC$ is hereditary, $G[N^{k+1}[B]]\in\cC$.

By assumption on $\A$, $|\A(G')\cap B| \leq \alpha\cdot\MDS(G',N^k[B]) \leq \alpha\cdot\MDS(G,N^k[B])$ as $G'$ contains $N^{k+1}_G[B]$.
As $N^r_G[B]\subseteq V(G')$, vertices in $B$ have the same distance-$r$ neighborhood in $G$ and $G'$.
Therefore, $|\A(G)\cap B| = |\A(G')\cap B| \leq \alpha\cdot\MDS(G,N^k[B])$.
Using this, along with \Cref{lem:MDS_union_bound}
and that every $B_i$ is partitioned into $2k+3$-components, we get for every $i\in\{0,1,\dots,d\}$ that $|\A(G)\cap B_i| \leq \sum_{B\in B_i} \alpha \cdot \MDS(G,N^k[B]) \leq \alpha\cdot \MDS(G)$.
Putting everything together we get 
\[
    |\A(G)|
    ~\leq~ \sum_{i=0}^{d} |\A(G)\cap B_i|
    ~\leq~ \alpha \cdot (d+1) \cdot \MDS(G) ~.
\]

This completes the proof of \Cref{PROP:DS_APPROX_LOCAL_CLASSES}.

\subsection{Proof of Lemma~\ref{lem:MDS_local_1_cuts}: Bounding the Number of Vertices in Local \texorpdfstring{$1$}{1}-cuts}
\label{subsec:local_1_cuts}

We will need the following lemma for the rest of the proofs in this section.
\begin{lemma}\label{lem:MDS_union_bound}
    Let $G$ be a graph and let $R_0, R_1, \dots, R_k\subseteq V(G)$ subsets of vertices such that all $N[R_i]$ are pairwise disjoint. Then
    \[
        \sum_{i=0}^k \MDS(G,R_i) ~\leq~ \MDS(G) ~.
    \]
\end{lemma}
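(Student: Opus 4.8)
For a graph $G$ and subsets $R_0, \dots, R_k \subseteq V(G)$ with all $N[R_i]$ pairwise disjoint, $\sum_{i=0}^k \MDS(G, R_i) \le \MDS(G)$.

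The plan is to take an optimal dominating set $D$ of $G$ with $|D| = \MDS(G)$ and partition (part of) it among the $R_i$'s according to which closed neighborhood each vertex of $D$ falls into. Concretely, first I would set $D_i := D \cap N[R_i]$ for each $i$. Since the sets $N[R_i]$ are pairwise disjoint, the sets $D_i$ are pairwise disjoint subsets of $D$, so $\sum_{i=0}^k |D_i| \le |D| = \MDS(G)$. It then suffices to show that each $D_i$ is an $R_i$-dominating set of $G$, which would give $\MDS(G, R_i) \le |D_i|$ and hence the claim by summing.

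The key step is therefore the following claim: if $D$ dominates $G$, then $D \cap N[R_i]$ dominates $R_i$. To see this, take any vertex $v \in R_i$. Since $D$ is a dominating set, there is some $w \in D$ with $w \in N[v]$, i.e. $w = v$ or $w$ adjacent to $v$. In either case $w \in N[v] \subseteq N[R_i]$, so $w \in D \cap N[R_i] = D_i$, and $w \in N[v]$ witnesses that $v$ is dominated by $D_i$. This holds for every $v \in R_i$, so $D_i$ is $R_i$-dominating, giving $\MDS(G, R_i) \le |D_i|$.

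Putting the pieces together:
\[
    \sum_{i=0}^k \MDS(G, R_i) \;\le\; \sum_{i=0}^k |D_i| \;\le\; |D| \;=\; \MDS(G),
\]
where the middle inequality uses the pairwise disjointness of the $N[R_i]$ (hence of the $D_i$). There is no real obstacle here; the only thing to be slightly careful about is that the recorded definition in the preliminaries already notes that an $R_i$-dominating set may be assumed to lie in $N[R_i]$, which is exactly why intersecting $D$ with $N[R_i]$ loses nothing, and that disjointness of the $N[R_i]$ — not merely of the $R_i$ — is what makes the $D_i$ disjoint. This matches how the lemma is invoked in the proof of Proposition~\ref{PROP:DS_APPROX_LOCAL_CLASSES}, where the $3$-components of a fixed $B_i$ are pairwise at distance at least $4$, so their closed neighborhoods are indeed pairwise disjoint.
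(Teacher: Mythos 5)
Your proof is correct. The paper states Lemma~\ref{lem:MDS_union_bound} without giving an explicit proof (it is treated as evident and immediately used to prove \Cref{lem:MDS_local_1_cuts}), and your argument — intersecting an optimal dominating set $D$ with each $N[R_i]$, observing that disjointness of the $N[R_i]$ (rather than just the $R_i$) makes the resulting $D_i$ pairwise disjoint, and checking that each $D_i$ is $R_i$-dominating — is exactly the natural argument the authors implicitly rely on.
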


Let us now prove \Cref{lem:MDS_local_1_cuts}.

We did not try to optimize the constants $c_{\ref{lem:MDS_local_1_cuts}}(d)$ and $m_{\ref{lem:MDS_local_1_cuts}}(\cC)$. 
Let $f$ be the $d$-dimensional control function of the graph class. 
We prove the lemma for $c_{\ref{lem:MDS_local_1_cuts}}(d)=3\cdot(d+1)$ and $m_{\ref{lem:MDS_local_1_cuts}}(\cC)=f(5)+2$.
Without loss of generality, we can assume $G$ is connected.
We will first prove there are not too many $1$-cuts in some $S\subseteq V(G)$ compared to $\MDS(N[S])$. 
\begin{claim}\label{claim:1_cuts_MDS_bound}
    Let $G$ be a graph and $S\subseteq V(G)$.
    Let $C$ be the set of minimal $1$-cuts of $G$.
    Then $|C\cap S|~\leq~3\cdot\MDS(G,N[S])$.
\end{claim}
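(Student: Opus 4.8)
\textbf{Plan for proving Claim~\ref{claim:1_cuts_MDS_bound}.} The statement to prove is that, for any graph $G$ and any $S \subseteq V(G)$, the minimal $1$-cuts of $G$ lying in $S$ number at most $3\cdot \MDS(G, N[S])$. The natural strategy is to fix a minimum $N[S]$-dominating set $D$ (so $|D| = \MDS(G, N[S])$ and we may assume $D \subseteq N[N[S]]$) and to construct a charging scheme assigning each cut vertex $c \in C \cap S$ to some vertex of $D$, then show every vertex of $D$ receives at most $3$ charges. The block–cut tree of $G$ is the obvious structure to exploit: cut vertices are the non-leaf ``hinges'' between biconnected blocks, and the usual trick for dominating sets on tree-like structures is that a dominating set must ``spend'' a vertex near any long path in the block–cut tree, while leaves/ears force the dominator to sit close to the cut.

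\textbf{Key steps in order.} First I would set up the block–cut tree $T$ of $G$ and recall that $C \cap S$ is a set of cut vertices. Each cut vertex $c$ separates $G$ into at least two ``sides''; pick for each $c$ a canonical side $A_c$ (e.g.\ one not containing some fixed root). Second, I would observe that since $c \in S$, every vertex of $c$'s various sides that is adjacent to $c$ — in particular $c$ itself and its neighbours — lies in $N[S]$, hence must be dominated by $D$. Third, the charging: for each $c \in C\cap S$, look at the component(s) of $G - c$; because $c$ is a \emph{minimal} $1$-cut and (after the true-twin reduction, though note the claim as stated does not assume that — so I would argue directly) there is a vertex in a side of $c$ that needs domination, and the dominator must come from $N[c] \cup (\text{that side})$. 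I would map $c$ to a dominator $d(c) \in D$ chosen as close as possible to $c$ in the block–cut tree, on $c$'s canonical side. Fourth, the load bound: a fixed $d \in D$ can only be the image $d(c)$ for cut vertices $c$ whose canonical sides ``point toward'' $d$; because the canonical sides are nested consistently with $T$, the set of such $c$ forms a path in $T$ toward $d$, and local domination constraints (each $c$ in this path is itself dominated, and $d$ can only dominate vertices within distance $1$) force this path to have length at most $2$ or $3$, giving the factor $3$.

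\textbf{Main obstacle.} The delicate point is the load-bound argument — showing no $d \in D$ is charged more than three times. The clean intuition is that consecutive cut vertices charged to the same $d$ must be mutually within domination reach of $d$ (otherwise an intermediate undominated vertex would force a closer dominator, contradicting the ``closest'' choice), and $N[d]$ can contain only boundedly many cut vertices arranged along a single path of the block–cut tree; but turning ``boundedly many'' into exactly $3$ requires carefully handling the cases where $d$ itself is a cut vertex, where $d$ lies inside a block spanned by several charged cut vertices, and where $d \in N[N[S]] \setminus N[S]$ so that $d$ is slightly outside. A secondary subtlety is that the claim is stated without the true-twin-free hypothesis, so I must make sure the argument that ``some vertex on $c$'s side genuinely needs a nearby dominator'' does not secretly rely on it; if it does, I would instead phrase the charging so that a $1$-cut with a trivially-dominated side is charged to the dominator of the \emph{other} side, which is where minimality of the cut does the real work. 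I expect the bookkeeping of these cases, rather than any deep idea, to be the bulk of the proof, consistent with the paper's stated philosophy of a simple algorithm with the complexity pushed into the analysis.
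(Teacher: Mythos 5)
Your proposal follows essentially the same route as the paper: fix a minimum dominating set $D$ of $N[S]$, root the block--cut tree, charge each cut vertex $c \in (C\cap S)\setminus D$ to a dominator on its descendant side, and bound the charges per dominator. The paper resolves exactly the obstacle you flag by choosing the charged dominator either in a child block of $c$ or in a child block of a descendant cut vertex adjacent to $c$ (possible since $N[c]\subseteq N[S]$ is dominated, with no true-twin hypothesis needed), so that it sits at block--cut-tree distance $1$ or $3$ below $c$; hence each $d\in D$ receives at most two such charges, which together with the at most $|D|$ cut vertices of $C\cap S\cap D$ gives $|C\cap S|\le 3|D| = 3\cdot\MDS(G,N[S])$.
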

Let us prove this claim.
Without loss of generality, we can assume $G$ is connected.
Let $C$ be the set of $1$-cuts of $G$ and $B$ the set of maximal $2$-connected components of $G$.
Let $T$ be the bipartite graph with vertex set $B\cup C$ and with edge set $E(T)=\{(b,c)\in B\times C \mid c\in b \}$.
$T$ is sometimes called the block-cut tree of $G$ and can be proven to be a tree. Moreover, note that all leaves of $T$ are in $B$.
Let $S\subseteq V(G)$, and $D\subseteq N^2[S]$ a dominating set of $N[S]$.
We prove that $|C\cap S|\leq 3|D|$.
If $C\cap S=\emptyset$, we are done. Otherwise, root $T$ at an arbitrary cut-vertex $r$.
We have the following:
\begin{claim}
    Let $c\in C\cap S$. Then there exists $b$ such that $c \in b \in B$ such that $b\cap D\neq \emptyset$.
\end{claim}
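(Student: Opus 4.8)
The plan is to read the claim straight off the domination hypothesis, using only the elementary fact that in the block decomposition of $G$ any two vertices joined by an edge --- and, trivially, any single vertex --- lie in a common block.

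Concretely, I would first observe that since $c\in S$ we have $N[c]\subseteq N[S]$, so in particular $c\in N[S]$. Because $D$ is a dominating set of $N[S]$, there is a vertex $d\in D$ with $d\in N[c]$, that is, either $d=c$ or $cd\in E(G)$. If $d=c$, then any block $b\in B$ containing the cut-vertex $c$ works, since $c=d\in b\cap D$. If $cd\in E(G)$, let $b\in B$ be the unique block of $G$ containing the edge $cd$; then $c\in b$ and $d\in b\cap D$. Either way we have produced $b$ with $c\in b\in B$ and $b\cap D\neq\emptyset$, which is the claim.

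I do not expect any obstacle in this step: it is a short bookkeeping argument, and it does not even use the choice of the root $r$. The root is needed only in the continuation of the proof of \Cref{claim:1_cuts_MDS_bound}, where this existence statement has to be upgraded to the quantitative bound $|C\cap S|\le 3\,\MDS(G,N[S])$ via a charging argument: each $c\in C\cap S$ is charged to one vertex of $D$ lying in some block through $c$, the block being selected compatibly with the rooting of the block-cut tree $T$ so that no vertex of $D$ is charged more than three times. That part, not the present claim, is where the care is required.
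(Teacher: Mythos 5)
Your proposal is correct and follows essentially the same route as the paper: the paper also notes that $c$, being in $N[S]$, is dominated by some $d\in D$, and then takes either any block through $c$ (if $d=c$) or the block containing the edge $cd$ (phrased there as a ``neighboring $2$-connected component''), so both arguments coincide. Your write-up is in fact slightly more careful in making explicit that the chosen block contains $c$ as well as the dominator, and your remark that the rooting of $T$ only matters for the subsequent charging step is accurate.
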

This is because all vertices of $C\cap S$ must be dominated by some vertex of $D$. Therefore, either $c\in D$ and then we are done as there exists some $b\in B$ such that $c\in b$, or either there exists $a\in D\cap N[c]$. This $a$ must be contained in some neighboring $2$-connected component, therefore there exists $b\in B$ such that $a\in b$, i.e. $b\cap D \neq \emptyset$.

In the following, we create a mapping from vertices of $C\setminus D$ to $D$.
Consider $c\in C\setminus D$.
There are 3 different cases.
\begin{itemize}
    \item Either $c$ has a child $b\in B$ with some $d\in b\cap D\cap N(b)$, and in this case, we map $c$ to $d$.
    \item Or either $c$ has an descendant $c'\in C\cap N(c)$ (at distance $2$ in $T$), and the previous claim still applies to $c'$: $c'$ has an descendant $b'\in B$ in $T$ such that there exists $d\in d\cap D\cap N(c')$.
    We map $c$ to $d$.
    \item Or $c$ has no descendant in $C\cap \cap N(c)$.
    In this case, as $c$ cannot be a leaf of $T$, there exists a child $b\in B$ of $c$, with the property that $b\setminus C \neq \emptyset$.
    As $D$ dominates $N[S]$, $b$ must contain a vertex $d\neq c$ of $D$. We map $c$ to $b$.
\end{itemize}
Therefore, we created a mapping from vertices of $C\setminus D$ to vertices of $D$.
Furthermore, each vertex from $D$ can appear at most twice in a preimage.
Indeed, for some $d\in D$, only an ancestor in $C$ at distance $1$ or $3$ in $T$ can be mapped to it.
In conclusion, $|C|\leq |C\cap D|+|C\setminus D| \leq |D|+2|D|=3|D|$.

We can now get a bound on the number of local $1$-cuts.
\begin{claim}
    Let $G$ be a graph of asymptotic dimension $d$ with control function $f$.
    Then, the number of $(f(5)+2)$-local $1$-cuts is bounded by $3(d+1)\cdot\MDS(G)$.    
\end{claim}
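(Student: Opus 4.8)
The plan is to reduce the statement about $(f(5)+2)$-local $1$-cuts in a whole graph $G$ to the already-established \Cref{claim:1_cuts_MDS_bound}, which bounds the number of \emph{global} minimal $1$-cuts of a graph $H$ by $3\cdot\MDS(H,N[S])$ for any $S\subseteq V(H)$, by passing through the covering given by asymptotic dimension. First I would invoke the definition of asymptotic dimension with $r=5$: since $G$ lies in a class of asymptotic dimension $d$ with control function $f$, there is a cover $V(G)=\bigcup_{i=0}^d B_i$ such that every $5$-component of each $B_i$ is $f(5)$-bounded, i.e.\ has weak diameter at most $f(5)$ in $G$. Distinct $5$-components of a fixed $B_i$ are at distance at least $6$ from each other, so the closed neighbourhoods $N[B']$ of the $5$-components $B'$ of $B_i$ are pairwise disjoint — this is exactly the hypothesis needed to later apply \Cref{lem:MDS_union_bound}.

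The crux is a locality claim: if $C$ is an $(f(5)+2)$-local $1$-cut of $G$, then $C$ is a genuine minimal $1$-cut of the small induced subgraph $G[N^{f(5)+2}[v]]$ for $v\in C$ (indeed $C$ is a singleton here), and more to the point, every vertex $v$ lying in such a local $1$-cut is, after being assigned to some part $B_i$ and hence to some $5$-component $B'$ of $B_i$, captured by running \Cref{claim:1_cuts_MDS_bound} on the bounded-radius subgraph $G':=G[N^{f(5)+2}[B']]$ — which contains $N^{f(5)+2}[v]$ because $v\in B'$ and $B'$ has weak diameter $f(5)$. Since $G'$ has radius $O(f(5))$, the local $1$-cut condition with parameter $f(5)+2$ forces $\{v\}$ to actually be a minimal $1$-cut of $G'$; thus the set of $(f(5)+2)$-local $1$-cuts of $G$ meeting $B'$ injects into the global minimal $1$-cuts of $G'$. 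Applying \Cref{claim:1_cuts_MDS_bound} with $S=B'$ gives that the number of such local $1$-cuts meeting $B'$ is at most $3\cdot\MDS(G',N[B'])\le 3\cdot\MDS(G,B')$, the last inequality because $G'\supseteq N_G[B']$ so a $G$-dominating set of $B'$ restricted to $N[B']$ works inside $G'$ too, and vertices of $B'$ see the same neighbourhood in $G'$ as in $G$.

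Finally I would sum over the cover. For a fixed $i$, the $5$-components $B'$ of $B_i$ have pairwise disjoint closed neighbourhoods, so \Cref{lem:MDS_union_bound} yields $\sum_{B'\in B_i}\MDS(G,B')\le\MDS(G)$; hence the number of $(f(5)+2)$-local $1$-cuts with at least one vertex in $B_i$ is at most $3\,\MDS(G)$. Summing over $i=0,\dots,d$ and using that every local $1$-cut has its (unique) vertex in at least one $B_i$ gives a bound of $3(d+1)\,\MDS(G)$, i.e.\ $c_{\ref{lem:MDS_local_1_cuts}}(d)=3(d+1)$ with $m_{\ref{lem:MDS_local_1_cuts}}(\cC)=f(5)+2$, as claimed. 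The main obstacle I anticipate is the locality claim in the middle step: one must check carefully that the radius $f(5)+2$ is large enough that ``$\{v\}$ is an $(f(5)+2)$-local $1$-cut of $G$'' really does imply ``$\{v\}$ is a minimal $1$-cut of $G[N^{f(5)+2}[B']]$'' for the $5$-component $B'\ni v$ — in particular that enlarging the ambient graph from $G[N^{f(5)+2}[v]]$ to $G[N^{f(5)+2}[B']]$ cannot reconnect the two sides of the cut, which is where the slack of $+2$ over $f(5)$ (covering the diameter-$f(5)$ spread of $B'$ plus the cut vertex's own ball) is used. Once that is nailed down, everything else is bookkeeping over the cover.
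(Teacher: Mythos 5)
Your overall plan (cover from asymptotic dimension with $r=5$, reduce to \Cref{claim:1_cuts_MDS_bound} on a bounded piece around each $5$-component, then sum with \Cref{lem:MDS_union_bound}) is the same as the paper's, but the crucial middle step is carried out in the wrong direction, and this is a genuine gap rather than a detail to be ``nailed down''. You pass from $G[N^{f(5)+2}[v]]$ to the \emph{larger} graph $G'=G[N^{f(5)+2}[B']]$ and claim that an $(f(5)+2)$-local $1$-cut $\{v\}$ with $v\in B'$ remains a $1$-cut of $G'$. Nothing forces this: $G'$ contains vertices at distance up to roughly $2f(5)+2$ from $v$, far beyond the radius $f(5)+2$ controlled by the local-cut hypothesis, and these extra vertices can reconnect the two sides. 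Concretely, take $G$ a cycle of length $2(f(5)+2)+2$: every vertex $v$ is an $(f(5)+2)$-local $1$-cut (its ball misses only the antipodal vertex), yet if $B'$ contains $v$ together with a vertex $w$ at distance $f(5)$ from it (allowed, since the weak diameter bound is $f(5)$), then $N^{f(5)+2}[B']$ is the whole cycle and $v$ is not a cut vertex of $G'$. Your proposed accounting of the ``$+2$ slack'' does not close this: $f(5)+(f(5)+2)>f(5)+2$. The paper avoids the issue by going to the \emph{smaller} graph $G[N^2[B']]$, which is contained in $N^{f(5)+2}[v]$ because $v\in B'$ and $B'$ has weak diameter $f(5)$; since one only deletes vertices, an $a$--$b$ path avoiding $v$ in $G[N^2[B']]$ would already be a path in $G[N^{f(5)+2}[v]]$, contradicting the local-cut property (this is \Cref{claim:MVC_1_cuts_local_to_global}).

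A second, smaller error: the inequality $\MDS(G',N[B'])\le\MDS(G,B')$ is false in general, since a set dominating $B'$ in $G$ need not dominate all of $N[B']$, and \Cref{claim:1_cuts_MDS_bound} genuinely needs domination of $N[S]$ (its proof uses that neighbours of cut vertices are dominated). The correct chain, as in the paper, is $3\cdot\MDS(G[N^2[B']],N[B'])\le 3\cdot\MDS(G,N[B'])$, after which \Cref{lem:MDS_union_bound} is applied with the sets $R_{B'}=N[B']$; the required disjointness of the $N^2[B']$ over the $5$-components of a fixed $B_i$ holds because distinct $5$-components are at distance at least $6$. With those two corrections your bookkeeping over the cover gives the stated bound $3(d+1)\cdot\MDS(G)$.
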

This will directly imply the statement of the lemma.
Let $r$ a positive integer and $C$ be the set of $(f(5)+2)$-local cuts of $G$.
Fix $B\subseteq V(G)$ such that $G[B]$ has weak diameter $f(5)$.
We claim the following:
\begin{claim}\label{claim:MVC_1_cuts_local_to_global}
    Every $(f(5)+2)$-local $1$-cut of $G$ that is in $B$ is also a $1$-cut of $G[N^2[B]]$.
\end{claim}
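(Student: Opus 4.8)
The plan is to unfold the definition of a local cut and chase where the relevant separation lives. Let $\{c\}$ be an $(f(5)+2)$-local $1$-cut of $G$ with $c \in B$. By definition, writing $r = f(5)+2$, the vertex $c$ is a genuine $1$-cut of $H := G[N^r[c]]$, so $H - c$ has at least two connected components, say $A_1$ and $A_2$ (and possibly more). I want to show that $c$ is still a $1$-cut of $G[N^2[B]]$, i.e. that $G[N^2[B]] - c$ is disconnected. First I would check the containment $N^2[B] \subseteq N^r[c]$: for $w \in N^2[B]$ there is $b \in B$ with $\dist_G(w,b) \le 2$, and since $G[B]$ has weak diameter $f(5)$ we have $\dist_G(b,c) \le f(5)$, hence $\dist_G(w,c) \le f(5)+2 = r$. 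So $G[N^2[B]]$ is an induced subgraph of $H$.

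Now the main point: I need to argue that removing $c$ still separates $G[N^2[B]]$. The natural candidates for the two sides are $A_1 \cap N^2[B]$ and $(A_2 \cup A_3 \cup \cdots) \cap N^2[B]$ (or more simply, the traces of the components of $H - c$ on $N^2[B]$). Since $G[N^2[B]]$ is an induced subgraph of $H$, any two vertices $x \in A_i \cap N^2[B]$ and $y \in A_j \cap N^2[B]$ with $i \ne j$ cannot be adjacent in $G[N^2[B]]$ (they aren't even adjacent in $H$), and more importantly any path between them inside $G[N^2[B]] - c$ would be a path in $H - c$, contradicting that $A_i, A_j$ are distinct components of $H-c$. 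So the traces of the $A_k$'s on $N^2[B]$ are pairwise non-adjacent and mutually unreachable in $G[N^2[B]] - c$. The only remaining worry is that the trace might be \emph{empty} for all but one component --- then $c$ would not actually disconnect $G[N^2[B]]$. This is where I expect the real work: I must use that $\{c\}$ is a \emph{minimal} $1$-cut (as stipulated in \Cref{sec:Prelim}, all cuts considered are minimal) together with $c \in B$ to guarantee that at least two of the components $A_k$ of $H - c$ meet $N^1[c] \subseteq N^2[B]$. Indeed, since the definition of a local cut is that $c$ is a $k$-cut of $G\bigl[\bigcup_{v\in C} N^r[v]\bigr] = H$, and $c$ has neighbors in $H$ (else $H$ would be trivial), each component of $H-c$ that is "witnessed" by the minimality of the cut must contain a neighbor of $c$; a neighbor of $c$ is at distance $1$ from $c \in B$, hence lies in $N^1[B] \subseteq N^2[B]$. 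Thus at least two components of $H-c$ have nonempty trace on $N^2[B]$, and these traces are separated by $c$ in $G[N^2[B]]$.

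Putting this together: $G[N^2[B]] - c$ contains vertices from at least two distinct components of $H - c$, and no path in $G[N^2[B]] - c$ can join vertices of different such components (as $G[N^2[B]] - c$ is an induced subgraph of $H - c$). Hence $G[N^2[B]] - c$ is disconnected, so $\{c\}$ is a $1$-cut of $G[N^2[B]]$. The main obstacle, as flagged above, is the non-emptiness argument --- ensuring that shrinking the ambient graph from $N^r[c]$ down to $N^2[B]$ does not accidentally absorb all-but-one side of the cut; this is handled precisely because the $1$-cut is minimal and $c \in B$, so every "essential" side of the separation reaches into the radius-$1$ ball around $c$, which survives inside $N^2[B]$. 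I would also double-check the edge case where $B = \{c\}$ and where $c$ has degree $0$ or $1$ in $H$, but in the degree-$\le 1$ case $\{c\}$ is not a $1$-cut to begin with, so this does not arise.
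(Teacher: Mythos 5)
Your proposal is correct and follows essentially the same route as the paper: the containment $N^2[B]\subseteq N^{f(5)+2}[c]$ via the weak diameter of $B$, plus the observation that the separation witnessed inside the local ball survives in $G[N^2[B]]$ because neighbors of $c$ in distinct components of the ball minus $c$ lie in $N(c)\subseteq N^2[B]$ (the paper just picks two such separated neighbors $a,b\in N(v)$ directly instead of arguing about component traces). One small remark: your non-emptiness step does not really need minimality of the cut — it suffices that $G[N^r[c]]$ is a ball around $c$ and hence connected (or that removing $c$ only splits its own connected component), so every component of the ball minus $c$ contains a neighbor of $c$.
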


Indeed, let $v\in B$ be a $(f(5)+2)$-local $1$-cut of $G$ and $a,b\in N(v)$ separated by $v$, i.e. every $ab$-path of $G$ of $N^{f(5)+2}[v]$ contains $v$.
Notice that $N^2[B] \subseteq N^{f(5)+2}[v]$ because $B$ has weak diameter at most $f(5)$.
$a$ and $b$ are separated by $v$ in $G[N^2[B]]$, because $a,b\in N^2[B]$ and because no $ab$-path in $N[B]\setminus \{v\}$ exists.
Therefore, \Cref{claim:MVC_1_cuts_local_to_global} is proven.

Using this fact and with the help of \Cref{claim:1_cuts_MDS_bound}, we can bound the number of $1$-cuts of $G[N^2[B]]$ in $B$ by $3\cdot \MDS(G[N^2[B]],N[B])\leq 3\cdot \MDS(G,N[B])$.
By the definition of the asymptotic dimension, there is a cover $B_0, B_1, \dots, B_d$ of $G$ where the $5$-components of a $B_i$ are $f(5)$-bounded.
Note that each $B_i$ contains distinct $5$-components which are of distance at least $6$ from each other. With an abuse of notation, when we write $B \in B_i$ we mean that $B$ is a $5$-component of $B_i$. That is, we treat $B_i$ as the set of its $5$-components.
As $B_0, B_1, \dots, B_d$ is a cover of $V(G)$ by subsets of diameter at most $f(5)$, we can bound the number of $(f(5)+2)$-local $1$-cuts of $G$ by summing the number of $1$-cuts of all the $G[N^2[B_i]]$'s.
We get
\[
    |C| ~\leq~ \sum_{i=0}^{d} \sum_{B\in B_i} 3\cdot\MDS(G,N[B]) ~.
\]
Notice that because the $B_i$'s are partitioned into their $5$-components, all elements of $\{N^2[B]\mid B\text{ connected component of }B_i\}$ are pairwise disjoint. 

Therefore by \Cref{lem:MDS_union_bound}, we get
\[
    |C| ~\leq~ \sum_{i=0}^d 3\cdot\MDS(G) = 3(d+1)\cdot\MDS(G) ~.
\]
This finishes the proof of \Cref{lem:MDS_local_1_cuts}.

\subsection{Proof of Lemma~\ref{LEM:MDS_INTERESTING_LOCAL_2_CUTS}: Bounding the Number of Interesting Vertices}
\label{subsec:local_2_cuts}

When discussing global $2$-cuts and not local ones, we say $v$ is interesting if there exists a $2$-cut $c=\{u,v\}$ such that:
\begin{itemize}
    \item $N[v]\not\subseteq N[u]$ and
    \item at least two connected components of $G-c$ contain each a vertex non-adjacent to $u$.
\end{itemize}
Moreover, $v$ is called a \emph{friend} of $u$, and a $2$-cut $\{u,v\}$ where $u$ is interesting and $v$ is a friend of $u$ is called \emph{interesting}. 
If $u$ only has the second property, it is called \emph{almost-interesting}.

Two $2$-cuts $c_1,c_2$ of $G$ are said to be \emph{crossing} the two following conditions are verified:
\begin{itemize}
    \item the two vertices of $c_1$ are in different components of $G-c_2$, and
    \item the two vertices of $c_2$ are in different components of $G-c_1$. 
\end{itemize}

Before bounding the number of interesting vertices in local $2$-cuts, we first need to arrange the interesting cuts in a tree-like fashion, i.e. we want to build a bounded number of families of $2$-cuts such that each member of the family contains interesting $2$-cuts that are all pairwise non-crossing, and such that each interesting vertex appears in one family member, along with one of its friends.

One can easily see that a family of size  $2$ does not suffice by considering $C_6$.
If we want to only take interesting cuts in this graph, we need to take the $3$ opposing cuts.
In more detail, if the vertices $\{a,b,c,d,e,f\}$ of $C_6$ appear in clockwise order $a$,$b$,$c$,$d$,$e$ and $f$, then we need to take the interesting cuts $\{a,d\}$, $\{b,e\}$ and $\{c,f\}$.

To create our new $2$-cut forest for interesting vertices, we need to introduce SPQR trees.

\paragraph{SPQR Trees.}
An SPQR tree is a tree data structure that represents the decomposition of a $2$-connected graph into its $3$-connected components.
The construction of an SPQR tree can be accomplished in linear time and SPQR are known to have applications in dynamic graph algorithms and graph drawing.

An SPQR tree $T$ is an unrooted tree where each node $\mu$ corresponds to an undirected skeleton graph $G_\mu$ that can be one of the following four types.

\begin{itemize}
    \item \bm{$S$}\textbf{-node:} $G_\mu$ is a cycle containing three or more vertices. This represents series composition in series-parallel graphs.
    \item \bm{$P$}\textbf{-node:} $G_\mu$ corresponds to a dipole graph, a multigraph with two vertices and three or more edges, analogous to parallel composition.
    \item \bm{$Q$}\textbf{-node:} $G_\mu$ corresponds to a dipole connected by two parallel edges: one real and one virtual. This serves as a trivial case for graphs with two parallel edges. We will not consider these types of nodes.
    \item \bm{$R$}\textbf{-node:} $G_\mu$ is a $3$-connected graph that is neither a cycle nor a dipole.
\end{itemize}

Edges $xy$ between nodes in the SPQR tree are associated with two directed virtual edges, one from $G_x$ and the other from $G_y$. Each edge in $G_x$ can be a virtual edge for at most one edge in the SPQR tree.

The SPQR tree represents a $2$-connected graph $G_T$, constructed as follows.
If $xy\in E(T)$ is associated with the virtual edge $ab\in E(G_x)$, and with the virtual edge $cd\in E(G_y)$, then identify $a$ with $c$ and $b$ with $d$, and delete the two virtual edges.
Notably, no two adjacent $S$ or $P$ nodes are allowed, ensuring the uniqueness of the SPQR tree representation for a graph $G$.
When such conditions are met, the graphs $G_x$ associated with the nodes of the SPQR tree are the triconnected components of $G$.

\begin{proposition}[folklore]\label{prop:2_cuts_in_SQPR_tree}
    Let $T$ be a SPQR tree of a graph $G$ (without $Q$ nodes) and let $\{u,v\}$ be a $2$-cut of $G$.
    Then one of the following holds:
    \begin{itemize}
        \item $u,v$ are two endpoints of a virtual edge of a $R$-node, or
        \item $u,v$ are the two vertices of a $P$-node that has at least two virtual edges, or
        \item $u,v$ are two endpoints of a virtual edge of a $C$-node, or
        \item $u,v$ are two non-adjacent vertices of a $C$-node.
    \end{itemize}
\end{proposition}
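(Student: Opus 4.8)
The plan is to read the four cases off the classical theory of triconnected decompositions (Hopcroft--Tarjan; Di Battista--Tamassia). Since $G$ has an SPQR tree it is $2$-connected, and a $2$-cut of $G$ is simply a pair $\{u,v\}$ with $G-\{u,v\}$ disconnected; the SPQR tree is built precisely so as to expose all such pairs. Concretely, the one structural fact I rely on is how the tree is assembled: for any pair $\{x,y\}$ with $G-\{x,y\}$ disconnected, one forms the split components at $\{x,y\}$ (each with $\{x,y\}$ turned into a virtual reference edge), builds their SPQR trees, glues all of them through a single dipole skeleton on $\{x,y\}$, and normalises by merging adjacent $S$-nodes into one cycle, merging adjacent $P$-nodes, and suppressing any dipole that has only two edges. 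I apply this to the pair $\{u,v\}$ itself.

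So let $C_1,\dots,C_k$ be the connected components of $G-\{u,v\}$, with $k\ge 2$. Let $G_i$ be the split component at $C_i$, namely $G[C_i\cup\{u,v\}]$ with the edge $uv$ turned into a virtual edge (added if it was not already present), and let $S_{uv}$ be the dipole skeleton on poles $u,v$ carrying one virtual edge for each $G_i$ together with one real edge if $uv\in E(G)$. The one small lemma to check is that the root node $\mu_i$ of the SPQR tree of $G_i$ (rooted at the virtual edge $uv$) is never a $P$-node: otherwise its skeleton would be a dipole on $\{u,v\}$ with at least two virtual edges besides the reference one, so $G_i-\{u,v\}$ would have at least two components, whereas $G_i-\{u,v\}=G[C_i]$ is connected. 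Hence every $\mu_i$ is an $S$-node or an $R$-node, and the $\mu_i$ are exactly the tree-neighbours of $S_{uv}$.

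Now the case analysis is immediate. If $S_{uv}$ has at least three edges (that is, $k\ge 3$, or $k=2$ and $uv\in E(G)$), then it is not absorbed by normalisation, since none of its neighbours is a $P$-node; so it is a $P$-node of the SPQR tree of $G$, its two poles are $u$ and $v$, and it has $k\ge 2$ virtual edges --- the second case. If instead $S_{uv}$ has exactly two edges (so $k=2$ and $uv\notin E(G)$), it is a two-edge dipole and is suppressed, which makes $\mu_1$ and $\mu_2$ adjacent along a tree edge whose virtual pair is $\{u,v\}$. If at least one of $\mu_1,\mu_2$ is an $R$-node, then $\{u,v\}$ are the endpoints of a virtual edge of an $R$-node --- the first case. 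Otherwise both $\mu_1$ and $\mu_2$ are $S$-nodes, hence adjacent $S$-nodes, and normalisation merges them into a single cycle on which each of the two arcs between $u$ and $v$ still contains an internal vertex (each original cycle had at least three vertices); so $u$ and $v$ are non-adjacent on that cycle --- the fourth case. (When exactly one of $\mu_1,\mu_2$ is an $S$-node, the third case applies as well, witnessed by that node.)

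The only genuine obstacle is making precise the assembly fact I quoted and keeping the normalisation bookkeeping straight, in particular the merge that produces the fourth case; all of this is routine in the triconnected-decomposition literature, which is why the statement is fairly called folklore, and a reader content to take the classical description of disconnecting pairs as given can simply read off the four cases from it. Finally, observe that in a simple graph a $P$-node always has at least two virtual edges, since at most one of its three-or-more parallel edges can be a real edge; so the clause ``$P$-node that has at least two virtual edges'' in the second case is no restriction.
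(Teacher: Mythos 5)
The paper does not actually prove this proposition: it is invoked as folklore, with no argument given, so there is nothing to match your proof against line by line. Judged on its own, your derivation is sound and does supply what the paper omits. The one auxiliary lemma you isolate (the node of the SPQR tree of a split component $G_i$ containing the reference edge $uv$ cannot be a $P$-node, since a $P$-node there would force $G_i-\{u,v\}=G[C_i]$ to have at least two components) is correct, and the ensuing case analysis is complete: $k\ge 3$ or a real edge $uv$ yields the $P$-node case with at least two virtual edges; $k=2$ without the edge $uv$ yields either a virtual edge in an $R$-node or, when both neighbouring nodes are cycle nodes, a merged cycle in which $u$ and $v$ are non-adjacent because each arc retains an internal vertex. (Note the paper's ``$C$-nodes'' are just its name for the cycle ($S$-)nodes, so your cases line up with the statement; also, your closing remark that in a simple graph every $P$-node automatically has at least two virtual edges is correct and explains why that clause is harmless.) The only ingredient you take on faith is the assembly-and-uniqueness fact — that splitting at an arbitrary separation pair, recursing, gluing through a dipole and normalising (suppressing two-edge dipoles, merging adjacent $S$- or $P$-nodes) reproduces \emph{the} SPQR tree — which is exactly the classical Tutte/Hopcroft--Tarjan/Di Battista--Tamassia uniqueness of the triconnected decomposition; leaning on it without reproving it is entirely in the spirit of a statement the paper itself labels folklore, though if you wanted a self-contained write-up that is the step you would have to expand.
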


We can now build our tree-like structure for interesting vertices.

\paragraph{Interesting $2$-cuts Forests.}
An \emph{interesting $2$-cut forest} $F=(T_1,T_2,T_3)$ of $G$ consists of three trees $T_1,T_2$ and $T_3$ whose vertices contain subsets of $V(G)$.
Like SPQR trees, $T_i$ contains nodes that are induced subgraphs with some virtual edges added.
Every $T_i$ has nodes can be of three types: $A$-nodes, for $1$-cuts, $C$-nodes, for interesting $2$-cuts, and $R$-nodes for the rest.
$T_2$ has still $C$-nodes and $R$-nodes.

If $G$ is has no $1$-cut or interesting $2$-cut, $T_i$ consists of a $R$ single node $\mu=G$.

If $G$ has a $1$-cut $v$, we construct $T_i$ inductively.
First, we add a $A$-node $\mu$ to $T_i$.
The graph associated to $\mu$ consists of the vertex $v$.
Secondly, let $C_1,C_2, \dots, C_k$ be the connected components of $G-v$. 
Let $G_j$ be the graph $G[C_j \cup \{v\}]$.
Build a corresponding $2$-cut tree $T_{G_j}$ for the graph $G_j$.
Let $\mu_j$ be the (unique) node in $T_{G_j}$ that contains $v$.
We can now construct $T_i$ by taking the union of all $T_{G_j}$'s and connecting all $\mu_j$'s to $\mu$.

Now, let us handle the case of interesting $2$-cuts.
We do this by going through a SPQR tree $T$ of $G$ and building sets of $2$-cuts $P_1, P_2$ and $P_3$ with the following properties:
\begin{itemize}
    \item for every globally almost-interesting vertex $u$ of $G$, there exist some $i$ and some friend $v$ of $u$ such that $\{u,v\}\in P_i$, and
    \item for every $i$, the $2$-cuts in $P_i$ are pairwise non-crossing.
\end{itemize}
The second property allows us to transform $P_i$ into a $2$-cut tree $T_i$, as follows.
First, take some arbitrary $c\in P_i$ and a $C$-node $\mu$ to $T_i$.
The graph associated to $\mu$ consists of vertices of $c=\{u,v\}$ and a real (respectively virtual) edge $uv$ if $uv$ is a real (resp. virtual) edge
of G.
Secondly, let $C_1,C_2, \dots, C_k$ be the connected components of $G-c$. 
Let $G_j$ be the graph $G[C_j \cup c]$ to which we add a real edge
$uv$ if $uv\notin G[C_j \cup c]$.
Build a corresponding interesting $2$-cut tree $T_{G_j}$ for the graph $G_j$.
Let $\mu_j$ be the (unique) node in $T_{G_j}$ where $uv$ is real.
We can now construct $T_i$ by taking the union of all $T_{G_j}$'s, making $uv$ virtual in all $\mu_j$'s, and connecting all $\mu_j$'s to $\mu$.

We first build the sets $P_1, P_2$ and $P_3$.
We then prove the two wanted properties in \Cref{prop:interesting_2_cut_tree}.

Add the vertices $u$ and $v$ to $P_1$ if:
\begin{itemize}
    \item $u,v$ are two endpoints of a virtual edge of a $R$-node of $T$, or if
    \item $u,v$ are the two vertices of a $P$-node of $T$ that has at least two virtual edges.
\end{itemize}
Let us now handle the case of $C$-nodes.
Let $\mu$ be a $C$ node of the $T$.
If $\mu$ contains more than $6$ nodes.
Let $v_0, v_1, \dots, v_{k-1}$ be the nodes of $\mu$ in the order of the cycle.
First, put all $\{u,v\}$ in $P_1$ if $uv$ is a virtual edge.
Secondly, we add some $2$-cuts to the $P_i$'s depending on the values of $k$:
\begin{enumerate}
    \item If $k\geq 8$ and $k$ is even: add to $P_1$ the $2$-cuts $\{v_0,v_{k-3}\}$, $\{v_1,v_{k-4}\}$, \dots, and $\{v_{(k/2)-3},v_{k/2}\}$, and to $P_2$ the $2$-cuts $\{v_{(k/2)-2},v_{k-1}\}$ and $\{v_{(k/2)-1},v_{k-2}\}$.
    \item If $k\geq 8$ and $k$ is odd: add to $P_1$ the $2$-cuts $\{v_0,v_{k-3}\}$, $\{v_1,v_{k-4}\}$, \dots, $\{v_{((k-1)/2)-3},v_{(k+1)/2}\}$ and $\{v_{((k-1)/2)-3},v_{(k-1)/2}\}$. Add to $P_2$ the $2$-cuts $\{v_{((k-1)/2)-2},v_{k-1}\}$ and $\{v_{((k-1)/2)-1},v_{k-2}\}$.
    \item If $k=7$, add to $P_1$ the $2$-cut $\{v_0,v_3\}$ and $\{v_0,v_4\}$, to $P_2$ the $2$-cut $\{v_1,v_5\}$ and to $P_3$ the $2$-cut $\{v_2,v_6\}$.
    \item If $k=6$, add to $P_1$ the $2$-cut $\{v_0,v_3\}$, to $P_2$ the $2$-cut $\{v_1,v_4\}$ and to $P_3$ the $2$-cut $\{v_2,v_5\}$.
    \item If $k\leq 5$ but $G\neq C_k$, suppose without loss of generality that the edge $v_0 v_1$ is virtual. Moreover, suppose that it is the only virtual edge of the $C$-node.
    If $k=5$, add to $P_1$ the $2$-cut $\{v_0,v_2\}$ and to $P_2$ the $2$-cut $\{v_1,v_4\}$.
    \item If $k\leq 5$ but $G\neq C_k$ and the edges $v_0 v_1$ and $v_0 v_{k-1}$ are virtual: add to $P_1$ all the $2$-cuts $\{v_0,v_i\}$ for $i=2,3,\dots, k-2$. Moreover, if $k=5$, add to $P_2$ the $2$-cut $\{v_1,v_{k-1}\}$.
    \item If $k\leq 5$ but $G\neq C_k$ and there exists $i\in\{2,3,\dots,k-2\}$ such that the edges $v_0 v_1$ and $v_i v_{i+1}$ are virtual:
    add to $P_1$ all the $2$-cuts $\{v_0,v_j\}$ for $j=2,3,\dots, i$, and add to $P_2$ all the $2$-cuts $\{v_1,v_j\}$ for $j=i+1,i+2,\dots, k-1$.
\end{enumerate}

\begin{proposition}\label{prop:interesting_2_cut_tree}
    Let $P_1,P_2$ and $P_3$ be built as described above. 
    Then the two following properties are verified:
    \begin{itemize}
        \item for every globally interesting vertex $u$ of $G$, there exist some $i$ and some friend $v$ of $u$ such that $\{u,v\}\in P_i$, and
        \item for every $i$, the $2$-cuts in $P_i$ are pairwise non-crossing.
    \end{itemize}
\end{proposition}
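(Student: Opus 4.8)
The plan is to prove both items by traversing the SPQR tree $T$ of $G$ (we may assume $G$ is $2$-connected) and analysing, node by node, exactly which $2$-cuts the construction assigns where; the key external input is \Cref{prop:2_cuts_in_SQPR_tree}, which tells us that \emph{every} $2$-cut of $G$ — hence every interesting one, together with a witness — is realised at some node of $T$. For the first item, pick a globally interesting vertex $u$ and a $2$-cut $\{u,v\}$ witnessing it, and classify it via \Cref{prop:2_cuts_in_SQPR_tree}. If $\{u,v\}$ is a virtual edge of an $R$-node or of a $C$-node, or the two poles of a $P$-node with at least two virtual edges, then the construction adds $\{u,v\}$ to $P_1$ verbatim, so $v$ is a friend of $u$ with $\{u,v\}\in P_1$ and we are done. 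The remaining case — $u=v_a$ and $v=v_b$ two non-adjacent vertices of the cycle $v_0,\dots,v_{k-1}$ of a $C$-node $\mu$ — is where all the work is.

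For that case I would first record the single structural observation that drives everything: \emph{two vertices of a $C$-node are adjacent in $G$ only if they are consecutive on the cycle of $\mu$}. (An edge $v_iv_j$ of $G$ belongs to some node $\nu$ of $T$; if $\nu\neq\mu$ then $v_i,v_j$ both lie among the at most two vertices shared by $\mu$ and its neighbour on the path of $T$ from $\mu$ to $\nu$, i.e.\ among the endpoints of a virtual edge of $\mu$, so $i,j$ are consecutive.) In particular, for any non-consecutive pair $\{v_a,v_b\}$ one has $v_b\notin N[v_a]$ (so $N[v_b]\not\subseteq N[v_a]$), the set $\{v_a,v_b\}$ is a $2$-cut, and the two ``arc'' components of $G-\{v_a,v_b\}$ each contain every cycle vertex in the interior of the corresponding arc. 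Now I would check the cases of the construction indexed by $k$. For $k\ge 6$, every cycle vertex $v_a$ occurs in some selected chord $\{v_a,v_b\}$, and in every such chord both arcs have at least two interior vertices; hence each arc contains a cycle vertex that is not a cycle-neighbour of $v_a$ and therefore, by the observation, not adjacent to $v_a$, so the chord is itself a $2$-cut witnessing $v_a$ — and it lies in some $P_i$ by construction. For $k\le 5$ (where $G\neq C_k$ forces at least one virtual edge, while $G=C_k$ with $k\le 5$ simply has no interesting vertex) the selected chords have an arc with only one interior vertex, so one must instead argue through the expansion hanging behind the virtual edge(s) — tracking which expansion lies on which arc and whether all of its vertices can be dominated by the relevant cut vertex — to see that the cycle vertices left uncovered by selected chords are exactly the non-interesting ones. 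This last part is where I expect the real obstacle.

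For the second item I would first observe that a $2$-cut of any of the first three kinds in \Cref{prop:2_cuts_in_SQPR_tree} can never be crossed: if $\{a,b\}$ is a virtual edge of an $R$- or $C$-node, or the poles of a $P$-node, then in $G$ there are at least three internally disjoint $a$-$b$ paths (three inside the $3$-connected skeleton in the $R$ case, one per bundle in the $P$ case, and in the $C$ case one path running the long way around the cycle together with at least two more through the expansion behind the virtual edge, whose first node cannot be another $C$-node and is therefore an $R$- or $P$-node), so no two-vertex set avoiding $\{a,b\}$ can separate $a$ from $b$. Hence crossing $2$-cuts must both be chords of $C$-nodes, and a short argument from the tree structure — a chord of $\mu$ keeps every node of $T$ other than $\mu$ on a single one of its two sides, so a chord of another node cannot separate its endpoints — shows two crossing chords must in fact lie in the same $C$-node, where ``crossing in $G$'' is exactly ``interleaving along the cycle''. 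It then only remains to verify, case by case in the construction, that the chords placed into a single $P_i$ within one $C$-node pairwise do not interleave (in each case they form a nested fan on one arc) and that a selected virtual cycle-edge never interleaves a selected chord (a cycle edge has no vertex strictly inside it) — a routine finite check once the reduction above is in place.

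In short, everything reduces to one structural fact about $C$-nodes plus patient case-checking along the construction; the hard part is the small-cycle ($k\le 5$) subcase of the first item, where one must understand the expansions behind the virtual edges in order to confirm that the chord the construction picks for a given vertex really satisfies both clauses of the definition of an interesting $2$-cut — or else that the vertex fails to be interesting at all.
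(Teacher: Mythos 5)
Your overall route is the same as the paper's: invoke \Cref{prop:2_cuts_in_SQPR_tree} to reduce everything to the skeleton nodes of the SPQR tree, observe that every cut coming from a virtual edge of an $R$-node, the poles of a $P$-node, or a virtual edge of a $C$-node is placed into $P_1$ wholesale, and then handle the remaining case (a chord of a $C$-node) by a case analysis on the cycle length $k$. Your treatment of the parts you do carry out is in fact a little tighter than the paper's: the observation that two vertices of a $C$-node can be adjacent in $G$ only if they are consecutive on its cycle yields a uniform verification that for $k\ge 6$ every selected chord is an interesting cut covering both endpoints (where the paper merely asserts ``one can verify'' this), and your three-internally-disjoint-paths argument plus the reduction of crossing to interleaving within a single $C$-node is a legitimate, more explicit version of the paper's one-line proof of the second bullet.

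However, the proposal does not prove the statement: the subcases $k\le 5$ with $G\ne C_k$ are where the real content of the first property lies, and you explicitly defer them (``this last part is where I expect the real obstacle''). The paper's proof spends most of its length exactly there: for each configuration of virtual edges on the short cycle it chooses a node $\mu'$ (or $\mu'_1,\mu'_2$) attached across a virtual edge, passing through a neighbouring $P$-node if needed, picks a witness vertex $v'_j\in V(\mu'_j)\setminus V(\mu)$, uses these witnesses to certify that the selected chords such as $\{v_0,v_2\}$, $\{v_1,v_4\}$ or the fan $\{v_0,v_i\}$ are genuinely interesting, and---crucially---shows that the cycle vertices left uncovered by selected cuts (for instance $v_3$ when $k=5$ and $v_0v_1$ is the only virtual edge) are not interesting at all, because no $2$-cut inside $\mu$ containing them satisfies the definition and, by \Cref{prop:2_cuts_in_SQPR_tree}, they cannot have a friend outside $V(\mu)$. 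Without this analysis the first bullet of \Cref{prop:interesting_2_cut_tree} remains unestablished for every interesting vertex lying on a short cycle node, so what you have is a correct plan, matching the paper's strategy, with its central case left open rather than a complete proof.
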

\begin{proof}
    One can easily see that $2$-cuts of $P_i$ in the same $C$-node are taken in such a way that they do not cross.
    This also applies $2$-cuts inside the same nodes inside the same $P$ or $R$-node.
    Moreover, $2$-cuts inside the different nodes of the SPQR tree cannot cross by \Cref{prop:2_cuts_in_SQPR_tree}.
    Therefore, the second property is proven.

    We will show that every globally interesting vertex $u$ of $G$ appears in some $P_i$ with one of its friends.
    As we take all $2$-cuts in $R$-nodes and $P$-nodes, then by \Cref{prop:2_cuts_in_SQPR_tree} the only case where we could have not taken an interesting vertex with its friend is inside a $C$-node.
    We prove that is however not the case.
    Let us consider a $C$-node $\mu$ with $k$ vertices.
    We go through all the possible cases. 

    \begin{itemize}
        \item If $k\geq 8$ and $k$ is even: one can verify that all $2$-cuts chosen are interesting, and that every vertex of the cycle is in one of the chosen cycles. Therefore we are done with this case.
        \item The same applies if $k\geq 8$ and $k$ is odd, or if $k=7,6$, by checking each case.
        \item If $G=C_k$ with $k\leq 5$, there are no interesting vertices.
        \item If $k\leq 5$ but $G\neq C_k$, and the only virtual edge of $\mu$ is $v_0 v_1$.
        Consider the node connected to $\mu$ by this virtual edge.
        If it is not a $P$-node, name this node $\mu'$.
        If it is a $P$-node, let $\mu'$ be one of its neighbor different from $\mu$ (it exists as every $P$-node has degree at least $2$ in $T$).
        Let us handle the case $k=5$ first.
        Notice that by definition, $V(\mu')\setminus V(\mu) \neq \emptyset$.
        Let us prove that $2$-cuts $\{v_0,v_2\}$ and $\{v_1,v_4\}$ are interesting. 
        First, their one component in $\mu$ is not fully dominated by either of the vertices of the cuts.
        Moreover, in each of the two $2$-cuts, one of the vertices is not connected to any vertex of $V(\mu')\setminus V(\mu)$.
        Finally, $N[v_1]\not\subseteq N[v_4]$, $N[v_0]\not\subseteq N[v_2]$, and inversely.
        Therefore the $2$-cuts $\{v_0,v_2\}$ and $\{v_1,v_4\}$ are indeed interesting. 
        The only vertex that needs checking is now $v_3$.
        We claim that $v_3$ is not interesting.
        Indeed, any $2$-cut in $\mu$ containing $v_3$ is not interesting.
        Moreover, $v_3$ cannot have a friend outside of $V(\mu)$ by \Cref{prop:2_cuts_in_SQPR_tree}.
        Therefore, $v_3$ is not interesting.
        All interesting vertices are now taken in some $P_i$, and all chosen $2$-cuts in $\mu$ are chosen.
        We are done with this case.
        If $k=4$ (resp. $k=3$) then for similar reasons, vertices $v_2$ and $v_3$ (resp. $v_2$) are not interesting.
        In both cases, all possibly interesting vertices are taken in the $2$-cut $\{v_0, v_1\}$ (taken because the edge $v_0 v_1$ is virtual).
        All other $2$-cuts in $\mu$ containing either $v_0$ or $v_1$ are not interesting.
        Therefore, we are done with this case.
        \item If $k\leq 5$ but $G\neq C_k$, and the edges $v_0 v_1$ and $v_0 v_{k-1}$ are virtual.
        Consider the node connected to $\mu$ by the virtual edge $v_0 v_1$.
        If it is not a $P$-node, name this node $\mu'_1$.
        If it is a $P$-node, let $\mu'_1$ be one of its neighbor different from $\mu$ (it exists as every $P$-node has degree at least $2$ in $T$).
        Define similarly the node $\mu'_2$ for the virtual edge $v_0 v_{k-1}$.
        Let $v'_i\in V(\mu'_i)\setminus V(\mu)$.
        For $i=2,3,\dots, k-2$, the $2$-cuts $\{v_0,v_i\}$ are interesting because $v_i$ is not adjacent to $v'_1$ nor $v'_2$, and $N[v_0]\not\subseteq N[v_i]$, and inversely.
        If $k=5$, the $2$-cut $\{v_1,v_{k-1}\}$ is interesting because $v_1 v'_2\notin E(G)$, $v_1 v_3\notin E(G)$ and $N[v_{k-1}]\not\subseteq N[v_1]$, and $N[v_1]\not\subseteq N[v_{k-1}]$.
        If $k=5$, all vertices are taken in interesting $2$-cuts.
        If $k=4$, $v_0$ and $v_2$ are taken in interesting $2$-cuts.
        All other $2$-cuts contained only in $\mu$ containing either $v_0$ or $v_1$ are not interesting.
        The $2$-cuts $\{v_0,v_1\}$ and $\{v_0,v_3\}$ are taken anyway, so it they are interesting, we took them in $P_1$. 
        Otherwise, it does not matter: if there is an interesting $2$-cut containing $v_1$ or $v_3$, it will be taken in another node.
        The case $k=3$ is similar.
        \item If $k\leq 5$ but $G\neq C_k$ and there exists $i\in\{2,3,\dots,k-2\}$ such that the edges $v_0 v_1$ and $v_i v_{i+1}$ are virtual.
        Without loss of generality, we can consider that $i=2$ and $k\in\{4,5\}$.
        Consider the node connected to $\mu$ by the virtual edge $v_0 v_1$.
        If it is not a $P$-node, name this node $\mu'_1$.
        If it is a $P$-node, let $\mu'_1$ be one of its neighbor different from $\mu$ (it exists as every $P$-node has degree at least $2$ in $T$).
        Define similarly the node $\mu'_2$ for the virtual edge $v_i v_{i+1}$.
        Let $v'_i\in V(\mu'_i)\setminus V(\mu)$.
        If $k=5$, for similar reasons as in the last case, because of the existence of $v'_1$ and $v'_2$, the cuts $\{v_0,v_2\}$, $\{v_1,v_3\}$ and $\{v_1,v_4\}$ are interesting. 
        Therefore if $k=5$, all vertices are taken in interesting $2$-cuts and we are done.
        If $k=4$, the $2$-cut $\{v_0,v_2\}$ (resp. $\{v_1, v_3\})$ is interesting if and only if $N[v_0]\not\subseteq N[v_2]$ or $N[v_2]\not\subseteq N[v_0]$ (resp. $N[v_1]\not\subseteq N[v_3]$ or $N[v_3]\not\subseteq N[v_1]$).
        If they are interesting, we took them, otherwise it does not matter: we have taken all possibly interesting $2$-cuts.
        We are done with this case.
    \end{itemize}
    Therefore, the first property is proved.
\end{proof}

We say that $T_i$ \emph{displays} the vertices $u$ and $v$ through the node $\mu$.
A vertex that is part of an interesting minimal $2$-cut but that is not displayed by $T$ is called \emph{hidden}.
$F$ \emph{displays} $u$ if at least one of the $T_i$'s displays $u$.

\begin{corollary}\label{cor:interesting_cut_tree_to_forest}
    Let $G$ be a $2$-connected graph, $S\subseteq V(G)$ and $k$ be a constant depending on the graph.
    Suppose that for any interesting $2$-cut tree, the number of vertices $u\in S$ that appear with some friend $v$ such that $\{u,v\}$ is an interesting cut displayed by $T$ is bounded by $k$.
    Then if $C$ is the set of interesting vertices in $2$-cuts, $|C\cap S|\leq 3k$.
\end{corollary}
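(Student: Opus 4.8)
The plan is to obtain \Cref{cor:interesting_cut_tree_to_forest} as essentially a bookkeeping consequence of the construction of the interesting $2$-cut forest $F = (T_1, T_2, T_3)$ together with \Cref{prop:interesting_2_cut_tree}. Since $G$ is $2$-connected, I would first recall that each of the families $P_1, P_2, P_3$ produced above is turned into an interesting $2$-cut tree $T_i$ by the recursive (laminar) decomposition described, and that this is well-defined precisely because, by the second bullet of \Cref{prop:interesting_2_cut_tree}, the $2$-cuts in each $P_i$ are pairwise non-crossing. By the way the recursion is set up, every cut of $P_i$ is displayed by $T_i$ through one of its nodes, so ``being displayed by $T_i$'' (the notion used in the hypothesis) agrees with ``belonging to $P_i$''.

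Next I would invoke the first bullet of \Cref{prop:interesting_2_cut_tree}: for every interesting vertex $u$ of $G$ there is an index $i \in \{1,2,3\}$ and a friend $v$ of $u$ with $\{u,v\} \in P_i$, hence $T_i$ displays $u$ together with the friend $v$. Restricting to $u \in C \cap S$, this gives $C \cap S \subseteq D_1 \cup D_2 \cup D_3$, where $D_i$ denotes the set of those $u \in S$ that appear in $T_i$ with some friend $v$ via an interesting cut displayed by $T_i$.

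Finally I would apply the hypothesis of the corollary to each of $T_1, T_2, T_3$ — each is an interesting $2$-cut tree, so $|D_i| \le k$ — and conclude $|C \cap S| \le |D_1| + |D_2| + |D_3| \le 3k$ by the union bound.

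I do not expect a real obstacle: all the content is already in \Cref{prop:interesting_2_cut_tree} and in the per-tree bound taken as a hypothesis. The one place that needs a line of care is the compatibility between the hypothesis and the construction, namely verifying that the translation from the family $P_i$ to the tree $T_i$ genuinely displays every cut of $P_i$, and that an interesting vertex of $G$ together with a friend remains interesting when we restrict to the subgraphs $G_j = G[C_j \cup c]$ used during the recursion — so that no vertex of $C \cap S$ is silently lost and the three per-tree bounds really do cover $C \cap S$.
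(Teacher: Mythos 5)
Your proof is correct and follows essentially the same route the paper takes (the paper leaves this corollary's proof implicit immediately after \Cref{prop:interesting_2_cut_tree}): cover $C \cap S$ by the three trees $T_1, T_2, T_3$ of the interesting $2$-cut forest using the first bullet of \Cref{prop:interesting_2_cut_tree}, apply the per-tree hypothesis to each, and take a union bound to get $3k$. The closing caveat you raise about ``displayed by $T_i$'' agreeing with ``belongs to $P_i$'' is sound diligence and is exactly what the definition of \emph{displays} is set up to guarantee, so there is no gap.
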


\paragraph{Bounding the Number of Interesting Vertices.}
We now can prove \Cref{LEM:MDS_INTERESTING_LOCAL_2_CUTS}.
We did not try to optimize the constants $c_{\ref{LEM:MDS_INTERESTING_LOCAL_2_CUTS}}(d)$ and $m_{\ref{LEM:MDS_INTERESTING_LOCAL_2_CUTS}}(\cC)$. 
Let $f$ be the $d$-dimensional control function of the graph class. 
We prove the lemma for $c_{\ref{LEM:MDS_INTERESTING_LOCAL_2_CUTS}}(d)=22\cdot(d+1)$ and $m_{\ref{LEM:MDS_INTERESTING_LOCAL_2_CUTS}}(d)=f(11)+4$.

Without loss of generality, one can assume that the graph is $2$-connected. Indeed, if it is not, one can split $G$ into $2$-connected component and do the analysis on those components.
Let $T$ be a $2$-cut tree of $G$, rooted at an arbitrary $C$ node.
Let $D$ be a dominating set of $N^4[S]$ in $G$ using vertices in $N^5[S]$.
Let $I$ be the set of interesting vertices displayed in $T$ and $I'=I\cap S$.
Let us prove this claim first.
\begin{claim}\label{claim:interesting_tree_bound}
    $|I'| \leq 6\cdot\MDS(G,N^4[S])$.
\end{claim}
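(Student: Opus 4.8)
The plan is to reuse the charging strategy of \Cref{claim:1_cuts_MDS_bound}, now carried out on the interesting $2$-cut tree $T$: the $C$-nodes play the role of cut vertices and the $R$-nodes the role of blocks. Take $D$ as in the statement, of minimum size, so $|D|=\MDS(G,N^4[S])$, and recall $T$ is rooted at a $C$-node. I would build a map $\psi\colon I'\to D$ and show that every fibre has size at most $6$; this yields $|I'|\le 6|D|=6\,\MDS(G,N^4[S])$. The constant $6$ factors as $2\cdot 3$: a factor $2$ because every $C$-node carries only two vertices on its skeleton, and a factor $3$ from a tree-distance count on $T$ paralleling the ``distance $1$ or $3$'' bookkeeping in the proof of \Cref{claim:1_cuts_MDS_bound}.

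To define $\psi$, take $u\in I'$. Being interesting and displayed, $u$ lies on the skeleton of some $C$-node $\mu=\mu(u)$ whose cut is an interesting $2$-cut $\{u,v\}$ with $v$ a friend of $u$. Let $C_1,\dots,C_k$ be the components of $G-\{u,v\}$; they correspond to the neighbours of $\mu$ in $T$, and at most one of them lies on the root side of $\mu$. If $u\in D$ or $v\in D$, set $\psi(u)$ to that vertex (a ``self'' charge). Otherwise I would invoke both defining properties of ``interesting'': the second property yields at least two components containing a vertex outside $N[u]$, hence at least one \emph{child} component $C_a$ of $\mu$ that $u$ fails to dominate; minimality of $\{u,v\}$ gives a neighbour of $u$ inside $C_a$; and, starting from $u$, I would walk at most four consecutive edges into $C_a$ --- every vertex met lies in $N^4[S]$ since $u\in S$, so is dominated by $D$ --- stopping at a vertex $y\in C_a$ whose dominator $d\in D$ lies in $C_a$ \emph{and} is interior to a proper descendant node of $\mu$ at bounded tree-distance; set $\psi(u):=d$. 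The first property $N[v]\not\subseteq N[u]$ is what excludes the degenerate case in which $v$ alone dominates $C_a$ and the walk stalls on boundary vertices. The radii $4$ and $5$ enter precisely here: the extra slack lets the walk step past the boundedly many reoccurrences of the cut vertices in the nodes of $T$ near $\mu$ and reach a dominator that is \emph{strictly} below $\mu$.

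For the fibres: fix $d\in D$. The ``self'' charges $u$ with $\psi(u)=d$ have $d$ on the skeleton of $\mu(u)$, and $d$ lies on only boundedly many $C$-node skeletons, all consecutive along the path of $T$ through $d$; these I absorb into the constant. For the remaining charges, $d=\psi(u)$ is interior to a descendant of $\mu(u)$ at bounded tree-distance, so $\mu(u)$ is an ancestor of the node carrying $d$ at one of a bounded list of tree-distances, which --- exactly as in \Cref{claim:1_cuts_MDS_bound} --- leaves at most three candidate $C$-nodes. Since each such $C$-node is $\mu(u)$ for at most two $u\in I'$ (they must be its two skeleton vertices), every fibre of $\psi$ has size at most $2\cdot 3=6$, proving the claim.

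The step I expect to be the main obstacle is the construction of $\psi$: producing, for every $u\in I'$, a witness dominator that is simultaneously in $D$, strictly below $\mu(u)$ in $T$, and within bounded tree-distance of $\mu(u)$. The obvious candidate --- a dominator of $u$ --- can fail both requirements (it may sit on the root side of $\mu(u)$, or equal $v$), and ``strictly below'' is delicate because the two cut vertices recur in many neighbouring nodes of $T$. Making a controlled walk into a component that $u$ does not fully dominate deliver such a vertex, and calibrating its length (whence the value $4$), is the technical heart; after that the counting is the $1$-cut bookkeeping with an extra factor $2$.
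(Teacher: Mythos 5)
Your skeleton matches the paper's: both build a charging map from displayed interesting vertices of $S$ into $D$, and both ground it on the observation that an interesting $u$ admits a dominator $d\in D$ that is \emph{strictly below $u$ in $T$} and at bounded $G$-distance from $u$. Your radius bookkeeping (roughly: a witness of non-domination in a child component of $\mu(u)$ is within $G$-distance $4$, its dominator within $5$) is essentially the paper's \Cref{claim:interesting_distance_5}, modulo a role-swap between the interesting vertex and its friend (the relevant component is the one the \emph{friend} $v$ fails to dominate, not $u$ --- that is what guarantees the witness is not already eaten by $v$). Up to this point you are on track.

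The fibre-counting is where there is a genuine gap. You write that ``$d=\psi(u)$ is interior to a descendant of $\mu(u)$ at bounded tree-distance, so $\mu(u)$ is an ancestor of the node carrying $d$ at one of a bounded list of tree-distances, which leaves at most three candidate $C$-nodes.'' This premise is false: a $G$-distance bound of $5$ does not translate into a bounded \emph{tree}-distance in $T$. Because adjacent $C$-nodes in the interesting $2$-cut tree can share a cut vertex, a long chain of $C$-nodes (think of a cycle decomposed into many small $C$-nodes sharing endpoints) can lie between $\mu(u)$ and the node holding $d$ while $d$ stays within $G$-distance $5$ of $u$. The paper does not prove and does not use bounded tree-distance. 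Instead, its fibre bound is a case analysis that shows the \emph{intermediate} $C$-nodes between $\mu_u$ and the relevant ancestor $\mu'$ of $\mu_d$ cannot contribute to $q^{-1}(\{d\})$: any interesting vertex in such a $C$-node would be dominated only by $v$ or by the one shared vertex $w$ of $\mu'$, forcing $N[x]\subseteq N[v]$ or $N[x]\subseteq N[w]$ and contradicting interestingness. After ruling those out, the preimage is confined to three specific $C$-nodes ($\mu_u$, $\mu'$, and one exceptional $\mu_c$), giving the $3\cdot 2=6$. Your ``distance $1$ or $3$'' analogy from the $1$-cut claim does not transfer, because in the block-cut tree the relevant $d$ really is at bounded tree-distance, whereas here it is not; the replacement argument (showing that arbitrarily many intermediate $C$-nodes are silent) is precisely the technical heart you flagged as the obstacle, and your proposal does not supply it.

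Minor points: the paper defines $q$ only on $I'\setminus D$ rather than self-charging, and the reasons the radii $4$ and $5$ suffice are a little cleaner than ``stepping past reoccurring cut vertices'' --- they come out of a short path $wyvxu$ through the friend and the cut, using minimality of the $2$-cut to produce $x\in N(u)\cap N(v)$ in the child component.
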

Let $u\notin D$ be an interesting vertex displayed in $T$ and $v$ be a friend of $u$, i.e. a vertex such that $c=\{u,v\}$ is a $2$-cut with two components of $G-c$ not dominated entirely by $v$, and with $N[u]\not\subseteq N[v]$.
Let us first prove the following claim:
\begin{claim}\label{claim:interesting_distance_5}
    There exists $d\in D$ such that $d_G(u,d)\leq 5$ and $d$ is lower in $T$ than $u$. Moreover, the interesting-ness of a vertex is certified by vertices at distance at most $4$.
\end{claim}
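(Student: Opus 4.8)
The plan is to exhibit $d$ as a vertex of $D$ that dominates some vertex $a$ lying in a component of $G-c$ that sits strictly below $u$ in $T$, where moreover $av\notin E(G)$ and $a$ is at bounded distance from $u$; the ``moreover'' part of the statement is precisely the ingredient guaranteeing that $a$ can be taken close to $u$. For the structural preliminaries: $G$ is $2$-connected and $c=\{u,v\}$ is a minimal $2$-cut, so $u$ and $v$ each have a neighbour in every component of $G-c$; and $u$ is displayed in $T$ through the $C$-node $\mu$ with skeleton $\{u,v\}$, whose attached subtrees are in bijection with the components of $G-c$. Once $T$ is rooted, at most one of these subtrees lies on the root side, so, among the (at least two) components of $G-c$ not dominated entirely by $v$, at least one — call it $A$ — lies in a subtree strictly below $\mu$; hence every vertex of $A$, in particular any $D$-vertex we locate in $A$, is lower in $T$ than $u$.

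The heart of the argument is the ``moreover'', which I would establish as a purely metric fact: $A$ contains a vertex $a$ with $av\notin E(G)$ and $d_G(u,a)\le 4$. Choose $a^*\in A$ with $a^* v\notin E(G)$ minimising $\ell:=d_G(u,a^*)$, and suppose $\ell\ge 5$. If $u$ has a neighbour in $A$ not adjacent to $v$ we are already done with $\ell\le 1$; otherwise every neighbour of $u$ in $A$ sees $v$, and choosing one such neighbour shows $d_G(u,v)\le 2$. Take a shortest path $P$ from $u$ to $a^*$ and let $q$ be the vertex immediately before $a^*$ on $P$. If $P$ meets $v$, it meets $v$ at distance $d_G(u,v)\le 2$ from $u$, the portion of $P$ after $v$ stays inside $A$, and by minimality of $\ell$ all its internal vertices (in particular $q$) are adjacent to $v$; hence $d_G(u,a^*)\le d_G(u,v)+2\le 4$, contradicting $\ell\ge 5$. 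If $P$ avoids $v$, then all internal vertices of $P$ lie in $A$ — one leaves $A$ only through $u$ or $v$ — so $q\in A$; now $qv\in E(G)$ again gives $d_G(u,a^*)\le d_G(u,v)+2\le 4$, while $qv\notin E(G)$ makes $q$ a non-$v$-adjacent vertex of $A$ at distance $\ell-1$ from $u$, contradicting the choice of $a^*$. So $\ell\le 4$. Applying this to each of the two non-$v$-dominated components, and adding a vertex of $N[u]\setminus N[v]$ (which is at distance at most $1$ from $u$), certifies all of $u$'s interesting-ness inside $N^4[u]$.

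It remains to finish. Fix $a\in A$ with $av\notin E(G)$ and $d_G(u,a)\le 4$. Since $u\in S$ we have $a\in N^4[S]$, so $D$ dominates $a$; let $d\in D\cap N[a]$. Then $d\ne v$ (because $av\notin E(G)$) and $d\ne u$ (because $u\notin D$), so $d$ is a vertex of $A$ and is therefore lower in $T$ than $u$, with $d_G(u,d)\le d_G(u,a)+1\le 5$. This proves the claim.

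The part I expect to require the most care is the metric ``moreover'': it has to handle several boundary cases — $uv\notin E(G)$, a geodesic from $u$ to $a^*$ passing through $v$, and components of $G-c$ in which $v$ dominates a whole neighbourhood of $u$ — and the cleanest way I see to absorb them is the extremal choice of $a^*$ together with the fact that a minimal $2$-cut of a $2$-connected graph reaches every component and forces $d_G(u,v)\le 2$ once all neighbours of $u$ in $A$ see $v$.
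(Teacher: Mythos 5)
Your proof is correct and takes essentially the same route as the paper: pick a component of $G-c$ below $u$ not fully dominated by $v$, take the closest non-$v$-dominated vertex in it, bound its distance from $u$ by $4$, then pass to a $D$-dominator of it to reach distance $5$ and argue that this dominator cannot be $u$ or $v$ and hence sits below $u$ in $T$. The paper's version of the distance bound is slightly more direct --- it picks a neighbour $y$ of the witness closer to $u$ (which lies in $N(v)$ by minimality of the witness) together with some $x\in N(u)\cap N(v)$ in the component, yielding the explicit length-$4$ walk $w\,y\,v\,x\,u$ --- whereas you case-split on whether a shortest $u$--$a^*$ path passes through $v$ and inspect the predecessor of $a^*$; both establish the same inequality.
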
 
Let $\mu$ a node of $T$ below $c$ that is contained in a component $C$ of $G-c$ not fully dominated by $v$.
Let $w\in C$ a vertex that is not dominated by $v$ and that minimizes $d_G(u,w)$.
$w$ will be our witness of interesting-ness of one of the components.
If one wishes to get the witness of another component $C'$, one can apply the same technique on $w'\in C$, a vertex not dominated by $v$ that minimizes $d_G(u,w')$.  
Note that $w$ is well-defined by definition of $C$.
Let us first prove that $d_G(u,w)\leq 4$.
Note that $w\in C$ and therefore $w$ is lower in $T$ than $u$.
If there exists some $x\in (N(u)\setminus N(v))\cap C$, one can take $w=x$ and then $d_G(u,w)=1$.
Otherwise, take $y\in N(w)$ such that $d_G(u,y)<d_G(u,w)$.
By minimality of $d(u,w)$, $y\in N(v)$.
Take $x\in (N(u)\cap N(v))\cap C$.
Such a $x$ always exists, because $c$ is a minimal $2$-cut (i.e. $N(u)\cap C\neq \emptyset$).
The path $wyvxu$ exists, therefore $d(u,w)\leq 4$.
One can chose a dominating vertex of $D$ adjacent to $w$ if $w\notin D$,
or $d=w$ if $w\in D$.
By the triangle inequality, $d(u,v)\leq 5$, and \Cref{claim:interesting_distance_5} is proven.

Let $q:I'\setminus D\to D$ a function that we define later as our charging function. 
Let $d\in D$ be a vertex of a node lower than $u$ in $T$ and such that $d(u,d)\leq 5$, chosen to be one of the highest in $T$ among all possible candidates.
Note that $d$ is well-defined by \Cref{claim:interesting_distance_5}.
We set $q(u):=d$ and say that $u$ \emph{charges} $d$.
Now, we bound the size of the preimages of $q$.

For a fixed $d\in q(I'\cap S)$, choose $u\in q^{-1}(\{d\})$ highest in $T$.
Again, by \Cref{prop:interesting_2_cut_tree} there exists $v$ a friend of $u$ displayed in the same node as $u$.
$v$ can be chosen highest-in-$T$ among all the possible candidates.
Let $\mu_u:=\{u,v\}$ and $\mu_d$ be the highest-in-$T$ $R$-node that contains $d$. 
Notice that even though $d$ may appear lower in $T$ than $\mu_d$, there cannot be any interesting vertex charging $d$ lower than $\mu_d$, as this would mean some interesting vertex charges a vertex higher than itself, which is not possible.
Let $\mu'$ the lowest-in-$T$ $C$-node that is higher than $\mu_d$.
Let $F$ be the set of $C$-nodes of $T$ that are between $\mu_u$ and $\mu'$ (both non-included), that form interesting $2$-cuts, and such that one of the interesting vertices in the cut is in $S\setminus \{u\}$.
There are two possible cases.
\begin{itemize}
    \item Either there exists some $c\in F$ such that $c\cap N[v] = c\cap \mu' = \emptyset$.
    In this case, let $w$ be the interesting vertex of $c$. 
    We get a contradiction because $w$ cannot be dominated by $v$ nor a vertex below it, as $d$ is lower than $\mu'$. 
    \item Therefore, all $c\in F$ either contain some $w\in \mu'$ or contain a vertex dominated by $v$.
    Notice that if $w$ exists, it is unique for all $c\in F$.
    Therefore, $F$ is partitioned in two sets: $F_2$, the set of $c\in F$ that contain $w$, and $F_1$, the set of $c\in F\setminus F_2$ that contain some vertex dominated by $v$.
    Furthermore, all the $c\in F_1$ appear higher in the tree than the $c\in F_2$.
    We claim that $q^{-1}(\{d\})\subseteq \mu_u\cup \mu'\cup \mu_c$ for some $\mu_c\in F\cup \{\emptyset\}$.
    Let us first prove that then all $c\in F_1$ but one $2$-cut contain $v$.
    Let $\mu_c$ be the highest $C$-node of $F_1$ that does not contain $v$, if it exists.
    If it does not exist, set $\mu_c = \emptyset$.
    Suppose there exists a $c'\in F_1$ strictly below $\mu_c$.
    All vertices of $c'$ cannot be dominated by $v$.
    We are not in the first case, therefore $c'$ must contain some vertex of $\mu'$, i.e. $c\in F_2$.
    Let $c\in F_1\setminus \{\mu_c\}$ if $\mu_c$ exists, else let $c\in F_1$.
    Let $x\in q^{-1}(\{d\})$ be an interesting vertex in $c$ different from $u$.
    $x$ and its neighbors can only be dominated by $v$, therefore $N[x]\subseteq N[v]$ and $x$ cannot be interesting, i.e. $x$ does not exist.
    Similarly, let $c\in F_2$ and let $x\in q^{-1}(\{d\})$ be an interesting vertex in $c$ that is not in $\mu_c\cup \mu'$, if it exists.
    $x$ and its neighbors can only be dominated by $w$, therefore $N[x]\subseteq N[w]$ and $x$ cannot be interesting, i.e. $x$ does not exist.
\end{itemize}

We therefore get that $|q^{-1}(\{d\})|\leq 6\cdot \MDS(G,N^4[S])$.
This proves \Cref{claim:interesting_tree_bound}.
Moreover, by \Cref{cor:interesting_cut_tree_to_forest}, we get the following claim.
\begin{claim}\label{claim:interesting_relative_bound}
    The number of interesting vertices of $G$ in $S$ is at most $19\cdot \MDS(G,N^4[S])$.
\end{claim}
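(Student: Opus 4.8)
My plan is to obtain \Cref{claim:interesting_relative_bound} as an essentially immediate consequence of the single-tree estimate \Cref{claim:interesting_tree_bound} together with the structural fact, packaged in \Cref{cor:interesting_cut_tree_to_forest}, that one interesting $2$-cut forest $(T_1,T_2,T_3)$ already displays every interesting vertex of $G$ with one of its friends. Since we are still working under the assumption that $G$ is $2$-connected, no further reduction is needed: I only have to feed \Cref{claim:interesting_tree_bound} into \Cref{cor:interesting_cut_tree_to_forest} with the correct value of $k$.

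Concretely, fix a minimum set $D \subseteq N^5[S]$ that dominates $N^4[S]$, so that $|D| = \MDS(G,N^4[S])$, and recall that the charging argument behind \Cref{claim:interesting_tree_bound} shows, for \emph{any} interesting $2$-cut tree $T$ of $G$ rooted at an arbitrary $C$-node, that the vertices of $S$ displayed by $T$ together with a friend split into those lying in $D$ (at most $|D|$ of them) and those charging, via the map $q$, a vertex of $D$ (at most a bounded number per vertex of $D$, as established through \Cref{claim:interesting_distance_5} and the preimage analysis). This is exactly the hypothesis of \Cref{cor:interesting_cut_tree_to_forest} with $k$ a constant multiple of $\MDS(G,N^4[S])$. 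Applying that corollary over the three trees of the forest, and counting the interesting vertices that lie in $D$ only once across all three trees, gives that the number of interesting vertices of $G$ in $S$ is at most $19\cdot\MDS(G,N^4[S])$, as claimed.

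The only points that need attention are: (i) the forest $(T_1,T_2,T_3)$ really does display \emph{every} interesting vertex of $G$ with a friend --- this is the first item of \Cref{prop:interesting_2_cut_tree}, whose proof is the long SPQR-tree case analysis already carried out above, and it is what makes the constant ``$3$'' in \Cref{cor:interesting_cut_tree_to_forest} enough; and (ii) the charging argument of \Cref{claim:interesting_tree_bound} can be run against the \emph{same} dominating set $D$ of $N^4[S]$ for all three trees at once, which is fine because that argument only uses that $D$ dominates $N^4[S]$ with vertices of $N^5[S]$ and is otherwise insensitive to the choice of tree. I do not expect a genuine obstacle here: the combinatorial heavy lifting --- building the forest and proving its correctness, together with the distance-$5$ charging of \Cref{claim:interesting_distance_5} and \Cref{claim:interesting_tree_bound} --- is already done, and what remains is a short bookkeeping assembly.
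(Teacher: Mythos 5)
Your proposal is correct and follows exactly the paper's route: the paper also obtains this claim by plugging the per-tree bound of \Cref{claim:interesting_tree_bound} into \Cref{cor:interesting_cut_tree_to_forest} over the three trees of the interesting $2$-cut forest, with the same bookkeeping for the vertices of $D$ yielding the constant $19$. Your two points of attention (coverage by the forest via \Cref{prop:interesting_2_cut_tree}, and reusing the same dominating set $D$ of $N^4[S]$ for all three trees) are precisely the ingredients the paper relies on.
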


We can now get a bound on the number of interesting vertices in local $2$-cuts.
\begin{claim}
    Let $G$ be a graph of asymptotic dimension $d$ with control function $f$.
    Then, the number of $(f(11)+5)$-local interesting vertices is bounded by $22(d+1)\cdot\MDS(G)$.    
\end{claim}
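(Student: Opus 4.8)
Here the plan is to reproduce, almost verbatim, the argument that proved the last claim inside the proof of \Cref{lem:MDS_local_1_cuts}, but with \Cref{claim:interesting_relative_bound} playing the role of \Cref{claim:1_cuts_MDS_bound}. Concretely: let $f$ be the control function of the class and take a cover $V(G)=\bigcup_{i=0}^{d}B_i$ where every $11$-component of each $B_i$ is $f(11)$-bounded, identifying $B_i$ with its set of $11$-components (so distinct $B,B'\in B_i$ have $d_G(B,B')\ge 12$). For an $11$-component $B\in B_i$, work in the restricted graph $G_B:=G[N^5[B]]$. The single size estimate we lean on throughout is that, since $v\in B$ and $B$ is $f(11)$-bounded, we have $B\subseteq N^{f(11)}[v]$ and hence $V(G_B)=N^5[B]\subseteq N^{f(11)+5}[v]$ for every $v\in B$; this is what makes the local cut radius $f(11)+5$ the right choice.

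The core step is the local-to-global transfer for a fixed region. Let $v\in B$ be an $(f(11)+5)$-local interesting vertex of $G$, witnessed by an $(f(11)+5)$-local $2$-cut $c=\{u,v\}$: thus $\{u,v\}$ is a minimal $2$-cut of $H:=G\big[N^{f(11)+5}[u]\cup N^{f(11)+5}[v]\big]$, with $a,b$ in distinct components of $H-c$, both non-adjacent to $u$; by the argument behind \Cref{claim:interesting_distance_5} we may take $d_G(v,a),d_G(v,b)\le 4$, so $a,b\in N^4[v]\subseteq N^5[B]=V(G_B)$. Since $V(G_B)\subseteq N^{f(11)+5}[v]\subseteq V(H)$, the graph $G_B$ is an induced subgraph of $H$, and $V(G_B)\setminus\{u,v\}\subseteq N^{f(11)+5}[v]\setminus\{u,v\}$, so removing $c$ (when $u\in V(G_B)$) or just $v$ (when $u\notin V(G_B)$) from $G_B$ still leaves $a,b$ in distinct components. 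Hence each such $v$ is, inside $G_B$, either a minimal $1$-cut of $G_B$ (this is forced when $u\notin V(G_B)$, and is also the outcome if the cut degenerates) or an interesting vertex of $G_B$ (the conditions $N_{G_B}[v]=N_G[v]\not\subseteq N_G[u]\supseteq N_{G_B}[u]$ and ``two components of $G_B-c$ each with a vertex non-adjacent to $u$'' both survive the restriction). Now \Cref{claim:interesting_relative_bound} applied to $G_B$ with $S:=B$ bounds the interesting vertices of $G_B$ in $B$ by $19\cdot\MDS(G_B,N^4[B])\le 19\cdot\MDS(G,N^4[B])$, and \Cref{claim:1_cuts_MDS_bound} applied to $G_B$ with $S:=B$ bounds the minimal $1$-cuts of $G_B$ in $B$ by $3\cdot\MDS(G_B,N[B])\le 3\cdot\MDS(G,N^4[B])$; here $N^4_{G_B}[B]=N^4_G[B]$ and any minimal $N^4[B]$-dominating (resp. $N[B]$-dominating) set of $G$ can be taken inside $N^5[B]=V(G_B)$. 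So the number of $(f(11)+5)$-local interesting vertices of $G$ lying in $B$ is at most $22\cdot\MDS(G,N^4[B])$. Finally, since distinct $11$-components of $B_i$ are at distance $\ge 12>10$, the sets $N[N^4[B]]=N^5[B]$ are pairwise disjoint, so \Cref{lem:MDS_union_bound} gives $\sum_{B\in B_i}\MDS(G,N^4[B])\le\MDS(G)$; summing over the $11$-components and over $i=0,\dots,d$ yields the bound $22(d+1)\cdot\MDS(G)$.

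The one place that needs genuine care — and the source of the extra additive term, giving $22=19+3$ rather than $19$ — is this local-to-global transfer. Unlike in the $1$-cut case, the partner vertex $u$ of an $(f(11)+5)$-local $2$-cut may lie as far as distance $f(11)+5$ from $v$, which is much larger than the radius $5$ defining $G_B$, so $u$ need not survive into $G_B$; when it does not, the pair $\{u,v\}$ collapses to the $1$-cut $\{v\}$ in $G_B$, which is why we must also invoke the $1$-cut bound. Even when $u$ does survive, one must rule out (or absorb into the $1$-cut count) the degenerate possibility that $\{u,v\}$ is no longer a \emph{minimal} $2$-cut of $G_B$; here the minimality of $\{u,v\}$ in $H$ together with $a,b\in N^4[v]$ keeps the situation under control, but verifying this cleanly is the part of the argument that deserves the most attention.
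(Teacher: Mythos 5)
Your proposal is correct and follows essentially the same route as the paper: cover $V(G)$ by the $B_i$'s with $f(11)$-bounded $11$-components, transfer each $(f(11)+5)$-local interesting vertex of $B$ to either a minimal $1$-cut or an interesting vertex of $G[N^5[B]]$ (using that the certificates live within distance $4$, as in \Cref{claim:interesting_distance_5}), then apply \Cref{claim:1_cuts_MDS_bound} and \Cref{claim:interesting_relative_bound} and sum via \Cref{lem:MDS_union_bound} to get $22(d+1)\cdot\MDS(G)$. Your explicit handling of the case where the partner $u$ falls outside $N^5[B]$ (collapsing the local $2$-cut to a $1$-cut of $G[N^5[B]]$) is exactly the reason the paper's bound also includes the $3\cdot\MDS(G,N[B])$ term, so no substantive difference.
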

This will directly imply the statement of the lemma.
Fix $B\subseteq V(G)$ such that $G[B]$ has diameter $f(11)$.
We claim the following:
\begin{claim}\label{claim:local_2_cut_implies_2_cut}
    Every vertex in a $(f(11)+5)$-local $2$-cut of $G$ that is in $B$ is also a in a $1$-cut or a $2$-cut of $G[N^5[B]]$.
\end{claim}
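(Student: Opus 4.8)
The plan is to mimic the proof of \Cref{claim:MVC_1_cuts_local_to_global}, the analogous ``local to global'' transfer for $1$-cuts, while paying attention to the fact that one of the two vertices of the local $2$-cut (call it $u$) need not lie in $B$, nor even in $N^5[B]$. Fix $v\in B$ together with a $(f(11)+5)$-local $2$-cut $\{u,v\}$ of $G$, and set $G'=G[N^{f(11)+5}[u]\cup N^{f(11)+5}[v]]$ and $H=G[N^5[B]]$, so that $\{u,v\}$ is a minimal $2$-cut of $G'$. First I would record two elementary facts. Since $d_G(u,v)\le f(11)+5$ we have $u\in N^{f(11)+5}[v]$; as the two balls $N^{f(11)+5}[u]$ and $N^{f(11)+5}[v]$ each induce a connected subgraph of $G$ and share the vertex $u$, the graph $G'$ is connected. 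Moreover, since $d_G(v,b')\le f(11)$ for every $b'\in B$, every $x\in N^5[B]$ satisfies $d_G(v,x)\le f(11)+5$, so $N^5[B]\subseteq N^{f(11)+5}[v]\subseteq V(G')$; in particular $H$ is an \emph{induced} subgraph of $G'$ that contains $v$.

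Next I would extract witness vertices using the minimality of the $2$-cut. Because $\{u,v\}$ is a minimal $2$-cut of the connected graph $G'$, the vertex $v$ has a neighbour in at least two distinct components of $G'-\{u,v\}$ (otherwise $\{u\}$ alone would already be a cut of $G'$, contradicting minimality). Pick such components $C_1\neq C_2$ and vertices $a_1\in N(v)\cap C_1$, $a_2\in N(v)\cap C_2$. The key observation is that $a_1,a_2\in N[v]\subseteq N[B]\subseteq N^5[B]=V(H)$, and, $H$ being induced, both $a_1$ and $a_2$ are adjacent to $v$ in $H$; hence $a_1,v,a_2$ lie in one connected component of $H$, and the path $a_1va_2$ shows that deleting $u$ alone from $H$ does not separate $a_1$ from $a_2$.

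Then the separation transfers downward. Since $H-(\{u,v\}\cap V(H))$ is a subgraph of $G'-\{u,v\}$, an $a_1a_2$-path in the former would be an $a_1a_2$-path in the latter, which is impossible as $a_1$ and $a_2$ lie in different components of $G'-\{u,v\}$; thus $\{u,v\}\cap V(H)$ separates $a_1$ from $a_2$ in $H$. If $u\notin V(H)$ this means that $\{v\}$ separates $a_1$ from $a_2$ in $H$, so (as $a_1,a_2$ lie in a common component of $H$) the singleton $\{v\}$ is a $1$-cut of $H$. If $u\in V(H)$, then $\{u,v\}$ separates $a_1$ from $a_2$ in $H$ while $\{u\}$ does not, so the inclusion-minimal subset of $\{u,v\}$ separating $a_1$ from $a_2$ in $H$ is either $\{v\}$ or $\{u,v\}$; in the first case $v$ is a cut vertex of $H$, and in the second $\{u,v\}$ is a minimal $2$-cut of $H$ (a minimal separator of two vertices of a common component is a minimal cut). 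In every case $v$ lies in a $1$-cut or a $2$-cut of $G[N^5[B]]$, proving \Cref{claim:local_2_cut_implies_2_cut}.

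The only delicate points in this argument are parameter bookkeeping: that the two witness neighbours $a_1,a_2$ of $v$ are automatically trapped inside $N[B]\subseteq N^5[B]$, which is exactly what allows the separation to survive the restriction from $G'$ to $H$; and that this restriction yields a genuinely \emph{minimal} cut of $H$ (a $1$-cut or a minimal $2$-cut) rather than an arbitrary vertex separator, which is what the case split on whether $u\in V(H)$ guarantees. Everything else is a routine transcription of the proof of \Cref{claim:MVC_1_cuts_local_to_global}, so I do not expect any serious obstacle here.
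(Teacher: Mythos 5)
Your proof is correct and follows essentially the same route as the paper: it uses the weak-diameter bound to get $N^5[B]\subseteq N^{f(11)+5}[v]$ and transfers the local separation of two neighbours of $v$ into $G[N^5[B]]$. In fact your write-up is more careful than the paper's own terse argument, since you explicitly justify the existence of the two witness neighbours via minimality of the cut and handle the case distinction of whether $u$ lies in $N^5[B]$ (yielding the $1$-cut versus $2$-cut conclusion).
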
 
Indeed, let $v\in B$ such that $c=\{u,v\}$ is a $(f(11)+5)$-local $2$-cut of $G$.
Let $a,b\in N(v)\setminus c$ two distinct vertices separated by $v$, i.e. there are not two internally-disjoint $ab$-paths in $G[N^{f(11)+5}[c]]-c$.
Notice that $N[B] \subseteq N^{f(11)+5}[c]$ because $B$ has weak diameter at most $f(11)$.
$ab$ are separated by $c$ in $G[N[B]]$, because $a,b\in N[B]$ and because no two disjoint $ab$-paths in $N[B]\subseteq c$ exist.
Therefore, \Cref{claim:local_2_cut_implies_2_cut} is proven.

Every interesting vertex of $B$ is either a $1$-cut of $G[N^5[B]]$ or in a $2$-cut of $G[N^5[B]]$ and interesting in $G[N^5[B]]$.
Indeed, by \Cref{claim:interesting_distance_5}, the interesting-ness of a vertex is certified by a vertex at distance at most $4$.

By \Cref{claim:1_cuts_MDS_bound}, we can bound the number of vertices in $1$-cuts of $G[N^5[B]]$ in $B$ by $3\MVC(G[N^5[B]],N[B])\leq 3\MVC(G,N[B])$, and by \Cref{claim:interesting_relative_bound}, we can bound the number of interesting vertices of $2$-cuts of $G[N^5[B]]$ in $B$ by $19\MVC(G[N^5[B]],N^4[B])\leq 19\MVC(G,N^4[B])$.

By the definition of the asymptotic dimension, there is a cover $B_0, B_1, \dots, B_d$ of $G$ where the $11$-components of $B_i$ are $f(11)$-bounded. 
Note that each $B_i$ contains distinct $11$-components which are of distance at least $12$ from each other. With an abuse of notation, when we write $B \in B_i$ we mean that $B$ is a $11$-component of $B_i$. That is, we treat $B_i$ as the set of its $11$-components.
As $B_0, B_1, \dots, B_d$ is a cover of $V(G)$ by subsets of diameter at most $f(11)$, we can bound the number of $(f(11)+5)$-local $2$-cuts of $G$ by summing the number of $2$-cuts of all the $G[N^4[B_i]]$'s.
Let $I$ be the set of interesting vertices in $(f(11)+5)$-local $2$-cuts of $G$.
We get
\[
    |I| ~\leq~ \sum_{i=0}^d \sum_{B\in B_i} (3\cdot\MDS(G,N[B]) + 19\cdot\MDS(G,N^4[B])) ~.
\]
Notice that because the $B_i$'s are partitioned into their $11$-components, all elements of $\{N^5[B]\mid B\text{ connected component of }B_i\}$ are pairwise disjoint. 
Therefore by \Cref{lem:MDS_union_bound}, we get
\[
    |I| ~\leq~ \sum_{i=0}^d 22\cdot\MDS(G) = 22(d+1)\cdot\MDS(G) ~.
\]
This finishes the proof of \Cref{LEM:MDS_INTERESTING_LOCAL_2_CUTS}.

\subsection{Proof of Lemma~\ref{lem:cc_bounded_radius}: Brute-forcing is fast enough}\label{sec:bruteforce}

To prove this, we need a few results from~\cite{K2tcaract}.
\paragraph{The structure of $3$-connected $K_{2,t}$-minor-free graphs.}\label{par:ding}
Let us give a quick overview of their result.
The author defines two types of graphs.
The first type is a generalization of outerplanar graphs.
Let $G$ be a graph with a specified Hamiltonian cycle $C$, which is called the reference cycle.
Edges of $G$ that are not part of $C$ are called chords.
Two non-incident chords $ab$ and $cd$ are said to cross if the vertices $a, c, b, d$ appear in that order around $C$.
$G$ is said to be of type-I if each chord crosses at most one other chord. Additionally, if two chords $ab$ and $cd$ do cross, then either both $ac$ and $bd$ are edges of $C$, or both $ad$ and $bc$ are edges of $C$.
The class of all type-I graphs is denoted by $\mathcal{P}$.

Let $H$ be a type-I graph with reference cycle $C$, and let $ab$ and $cd$ be two distinct edges of $C$.
Assume all chords of $C$ lie between the two paths of $C \setminus \{ab, cd\}$.
If $ab$ and $cd$ share an endpoint, say $a = d$, then $H$ is called a fan with corners $a, b, c$.
The vertex $a$ is called the center of the fan, and the number of chords is called the fan's length.
If $ab$ and $cd$ have no shared endpoints, then for any subset $F \subseteq \{ ab, cd \}$, $H \setminus F$ is called a strip with corners $a, b, c, d$, provided the minimum degree of $H \setminus F$ is at least two.
The radius of a strip $H$ as $\max\{d_H(h,y)\mid h\in H, x\in\{a,b,c,d\}\}$.
Strips are proven to be $K_{2,5}$-minor-free in~\cite{K2tcaract}, and it is not hard to see that if the radius of a strip is large, then its corners form local cuts.

Let $G$ be a graph.
Adding a fan or a strip to $G$ means identifying the corners of a fan or a strip (which is disjoint from $G$) with distinct vertices of $G$.
An augmentation of $G$ is obtained by adding disjoint fans and strips to $G$, with the condition that if two corners are identified with the same vertex of $G$, then one of them is the center of a fan, and the other is either a center of a fan or a corner of a strip.
For any integer $m$, $\B_m$ is defined as the class of graphs on at most $m$ vertices, and $\A_m$ is the class of all augmentations of graphs in $\B_m$. 
\begin{proposition}[Corollary 1.6 of~\cite{K2tcaract}]\label{prop:ding}
    Let $t$ be an integer and $\cC_t$ be the class of $K_{2,t}$-minor-free graphs.
    Then there exists some $m_{\ref{prop:ding}}(t)$ such that $\cC_t\subseteq \A_{m(t)}$.
\end{proposition}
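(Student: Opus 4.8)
The plan is, in effect, to follow Ding: this is Corollary~1.6 of~\cite{K2tcaract}, so at the level of a formal argument one cites it, but it is worth laying out the three-step shape of the proof so that the roles of $\mathcal{P}$, $\B_m$ and $\A_m$ are transparent. Step one is a connectivity reduction. Every block of a $K_{2,t}$-minor-free graph is again $K_{2,t}$-minor-free, and every triconnected component arising from the SPQR/Tutte decomposition of a block is $K_{2,t}$-minor-free as well; the only subtlety is that one adds a virtual edge $uv$ across each $2$-cut $\{u,v\}$, and one must check this cannot create a $K_{2,t}$-minor — it cannot, because a $K_{2,t}$-minor of the piece-plus-edge that routes through $uv$ can be re-routed along a $u$--$v$ path living on the other side of the $2$-cut (which exists by $2$-connectivity), giving a $K_{2,t}$-minor of the whole graph. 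Conversely, an augmentation of a block is an augmentation-friendly bookkeeping of its SPQR tree, and augmentations of the blocks glue along cut-vertices into an augmentation of $G$ — the identification rule in the definition of ``augmentation'' (a fan center may coincide only with another fan center or a strip corner) being exactly what is needed to make the gluing legal.

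Step two is the heart of the matter, and what I expect to be the real obstacle: the dichotomy for $3$-connected graphs, namely that a $3$-connected $K_{2,t}$-minor-free graph is either of bounded order — hence in $\B_{m(t)}$ — or is a type-I graph, hence in $\mathcal{P}$. In the large case one produces a reference (Hamiltonian) cycle $C$ and then argues that the chords must obey the type-I restrictions: each chord crosses at most one other chord, and a crossing pair $ab,cd$ further has $\{ac,bd\}\subseteq E(C)$ or $\{ad,bc\}\subseteq E(C)$. The engine is minor-routing: if a chord crossed two others, or a crossing pair failed the edge condition, one contracts long arcs of $C$ into two hub branch sets and pulls out $t$ internally-disjoint hub-to-hub paths through the chords and arcs — a $K_{2,t}$-minor — once $|V(G)|$ exceeds the threshold $m(t)$. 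Both the construction of a suitable reference cycle and the exhaustive case analysis of admissible chord configurations are carried out in~\cite{K2tcaract} and make up most of that paper; this is precisely the part the present paper (wisely) outsources, and the part I would not expect to reproduce quickly.

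Step three is reassembly with bookkeeping of constants. Inside a block, the SPQR tree splits into a bounded-size kernel of small $R$-nodes (which, amalgamated, land in $\B_{m(t)}$) together with long chains of $S$- and $P$-nodes and the type-I $R$-nodes hung off them; a chain of $2$-cuts sharing a common vertex is exactly a fan, and a chain of $2$-cuts threaded along a cycle is exactly a strip (which, as noted in the excerpt, is then automatically $K_{2,5}$-minor-free, so only its radius survives as a parameter). Tracking how large $m(t)$ must be to absorb every bounded piece across all blocks produces the function $m_{\ref{prop:ding}}(t)$. For our downstream use we need only the qualitative consequence $\cC_t\subseteq\A_{m(t)}$ together with the explicit bounds on $m_{\ref{prop:ding}}(t)$ and on strip radii, both of which~\cite{K2tcaract} supplies (and which eventually feed into $m_{\ref{lem:cc_bounded_radius}}(t)$ through the linear function $g$ of Lemma~6.3 there).
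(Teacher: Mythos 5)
The paper itself gives no argument for \Cref{prop:ding} beyond the citation: it is used as a black box from~\cite{K2tcaract}, so insofar as your proposal ultimately says ``cite Corollary~1.6 of Ding,'' you are doing exactly what the paper does, and for the downstream use in \Cref{sec:bruteforce} that is all that is required.

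However, the expository sketch you wrap around the citation misstates the structure theorem, and its central step is false as you phrase it. You claim the key dichotomy is that a $3$-connected $K_{2,t}$-minor-free graph is either of bounded order or is a type-I graph. Large wheels refute this: the wheel $W_n$ (hub plus an $n$-cycle) is $3$-connected, has no $K_{2,4}$-minor (with hub $h$ in a side of the bipartition, the other side $V$ is a rim arc with only two rim neighbours; with $h$ in no side, the rim splits into at most two complementary arcs, each hosting at most one branch set adjacent to both sides, plus at most one branch set containing $h$ --- so at most three), yet it is not type-I: every Hamiltonian cycle of $W_n$ uses exactly two spokes with consecutive rim endpoints, and the rim edge between those endpoints is then a chord crossing all $n-2$ remaining spoke chords. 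So ``small or type-I'' is not Ding's theorem; his structure for $3$-connected graphs is ``bounded-size core with fans and strips attached'' (the wheel being precisely a bounded core plus one fan), with type-I graphs serving only as the raw material from which fans and strips are defined. Your Step~3, which tries to reassemble type-I $R$-nodes along SPQR chains, is therefore reconstructing a statement Ding does not provide in that form. Your Step~1 gluing is also too quick: in an augmentation, every vertex of degree one must lie in the core or be a corner identified with a core vertex, so one cannot simply merge per-block cores across unboundedly many blocks (stars and trees already show the inclusion $\cC_t\subseteq\A_{m(t)}$ needs the connectivity hypotheses carried by Ding's precise statement, which is also the only regime --- $3$-connected pieces --- in which the paper invokes it). None of this affects the proposition as used, since the citation carries it; but read as a proof outline, your sketch has a genuine gap at its second step.
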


\paragraph{Embedding components into a $3$-connected graph.}
In the following, we prove \Cref{lem:cc_bounded_radius}.

Already, note that our algorithm takes all $1$-cuts. We can therefore assume that the rest of the graph that remains to be solved is $2$-connected: if $C_1$ is the set of all $1$-cuts of $G$, let $V_1, V_2, \dots, V_k$ be the different connected components of $G-C_1$.
Consider for every $i\in\{1,2,\dots, k\}$, $G_i := G[V_i\cup N[V_i]\cap C_1]$ and do the analysis on this graph.
If there exists a constant approximation algorithm for $2$-connected graphs, then there exists a constant approximation algorithm for $1$-connected graphs.
We want to do a similar reasoning for $2$-cuts. However, our algorithm does not take all $2$-cuts, but only interesting vertices of these $2$-cuts.

In $G$, let $X$ be the set of $m_{\ref{lem:cc_bounded_radius}}(t)$-local $1$-cuts, $X_2$ be the set of vertices inside an $m_{\ref{LEM:MDS_INTERESTING_LOCAL_2_CUTS}}(\cC_t)$-local $2$-cuts, $I$ the set of $m_{\ref{LEM:MDS_INTERESTING_LOCAL_2_CUTS}}(\cC_t)$-important $C_1$ the set of $1$-cuts, and $C_2$ the set of $2$-cuts.
Finally, let $Y=X\cup I$ and $U$ be the set of vertices in $\{u\in N[Y] \mid N[u]\subseteq N[Y]\}$ the set of vertices dominated by $Y$ that do not have neighbors not dominated by Y.

Let $C$ be a connected component of $G-(Y \cup U)$.
We want to embed $G[C]$ into a $3$-connected graph $G'$ without creating any $K_{2,t}$ minor in $G'$.  
Let $G_1 = G[C\cup (N[C]\cap Y)]$.
We add the edges $\{u,v\}$ that are formed by contracting the connected components of $G-G_1$.
We claim that $G_1$ is $3$-connected.
Let $c$ be a $1$-cut or $2$-cut of $G_1$, separating it into at least two connected components $A$ and $B$.
It cannot be that $c\subseteq Y\cup U$, as $C$ is connected and would not intersect $c$.
Therefore, $c$ is not a $1$-cut of $G$, nor an interesting cut. If it is not a $2$-cut, there exists an $AB$-path in $G$ that does not intersect $c$, and this path would still exist after the contraction of the definition of $G_1$, contradiction.
If $c$ is a $2$-cut of $G$, one of the vertices $v$ of $c=\set{u,v}$ is not interesting.
There are two possibilities: either $N[v]\subseteq N[u]$, or all connected components of $G-c$ but one (assuming $G$ is connected) are dominated entirely by $u$.
In the first scenario, there are two subcases.
If $u\in Y$, then $v\in U$ and $c\subseteq Y\cup U$, contradiction.
Otherwise, $u\notin Y$. Note that $N[u]\not\subseteq N[v]$ as $u$ and $v$ would be true twins.
$v$ has to dominate all but one component of $G-c$.
Then this also applies for $u$, and we enter the second scenario.
In the second scenario, delete all vertices that are in connected components of $G-c$ that are dominated entirely by $u$, and add the edge $uv$ if it does not exist.
Again, this creates a graph with diameter at least one less, and this does not create any new $2$-cuts.
The resulting graph $G'$ is $3$-connected, as all $2$-cuts from $G'$ have been handled.

\paragraph{Bounding the diameter of components.}
By \Cref{prop:ding}, there exists some $m_{\ref{prop:ding}}(t)$ such that $G'$ is an augmentation of some graph $\overline{G'}$ of bounded size, to which strips and fans are added. 
We would like to say that $G'$ has bounded radius.
$\overline{G'}$ and the attached fans have bounded radius, and we only have to bound the size of the strips of $G'$.
Note that if some strip has radius at least $3 m_{\ref{LEM:MDS_INTERESTING_LOCAL_2_CUTS}}(\cC_t)$, its corners in $\overline{G'}$ form two $m_{\ref{LEM:MDS_INTERESTING_LOCAL_2_CUTS}}(\cC_t)$-local $2$-cuts $c_1$ and $c_2$ in $G'$.
Those cuts are also $m_{\ref{LEM:MDS_INTERESTING_LOCAL_2_CUTS}}(\cC_t)$-local $2$-cuts in $G$, because $G'$ is an induced minor of $G$: suppose that $c$ is a local cut in $G'$ but not in $G$.
There would be a short path in $G$ between two different connected components of $G'-c$. However, as $G'$ is an induced minor of $G$, the path still exists and can only be shorter. 
Furthermore, $c_1$ and $c_2$ are both in $Y\cup U$. 
Suppose they are not. Let $c_1=\set{u,v}$. As the strip has a long radius, it cannot be that $N[u]\subseteq N[v]$ or that $N[v]\subseteq N[u]$.
Without loss of generality, $u$ dominates all but one connected component of $G-c$.
This means that $u$ is connected to some vertex on the other end of its own strip, which is impossible.
Finally, $G'$ cannot include strips of radius at least $3m_{\ref{LEM:MDS_INTERESTING_LOCAL_2_CUTS}}(\cC_tt)$.
Therefore, $G'$ has radius bounded by $3m_{\ref{LEM:MDS_INTERESTING_LOCAL_2_CUTS}}(\cC_tt) + m_{\ref{prop:ding}}(t)$, $C$ also has radius bounded by $m_{\ref{lem:cc_bounded_radius}}(t) = 3m_{\ref{LEM:MDS_INTERESTING_LOCAL_2_CUTS}}(\cC_tt) + m_{\ref{prop:ding}}(t) + 3$. 


\subsection{Proof of Theorem \ref{TH:3ROUND}: A linear approximation in constant rounds}
\label{app:linear_algo}

The goal of this sub-section is to build a simple $(2t-1)$-approximation algorithm that has constant round complexity (that does not depend on $t$) in $K_{2,t}$-minor-free graphs.

We need the following result from Ore.
\begin{lemma}[\cite{O62}]\label{lem:Ore}
    Every $n$-vertex graph without isolated vertices (a vertex with no adjacent edges) has a dominating set of size at most $\frac{n}{2}$.
\end{lemma}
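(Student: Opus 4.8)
The plan is to prove this via the classical ``complementation'' trick for minimal dominating sets. First I would note that $V(G)$ is itself a dominating set, so by repeatedly discarding superfluous vertices we obtain an inclusion-wise minimal dominating set $D \subseteq V(G)$; it then suffices to show that $V(G) \setminus D$ is \emph{also} a dominating set, since then $D$ and $V(G)\setminus D$ partition $V(G)$ into two dominating sets, so $\min\{|D|,\,|V(G)\setminus D|\} \le n/2$, and that minimum is an upper bound on $\MDS(G)$.

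The heart of the argument is the claim that every vertex $v \in D$ has a neighbour in $V(G) \setminus D$. To see this, use minimality of $D$: the set $D \setminus \{v\}$ fails to dominate $G$, so some vertex $w$ satisfies $N[w] \cap D = \{v\}$ (a ``private neighbour'' of $v$). If $w \neq v$, then $w \in V(G) \setminus D$ and $w$ is adjacent to $v$, which is exactly what we want. If $w = v$, then $N[v] \cap D = \{v\}$, i.e. $v$ has no neighbour inside $D$; but $G$ has no isolated vertex, so $v$ has \emph{some} neighbour, and that neighbour must therefore lie in $V(G) \setminus D$. In either case the claim holds.

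Granting the claim, $V(G) \setminus D$ dominates every vertex of $D$ and trivially dominates each of its own vertices, hence $V(G) \setminus D$ is a dominating set, completing the argument and yielding $\MDS(G) \le n/2$. If one prefers a component-wise formulation, note that since $G$ has no isolated vertices every connected component has at least two vertices, and the same reasoning applies verbatim inside each component, with the bounds adding up.

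I do not expect a genuine obstacle here; the only point requiring care is the correct extraction of the private-neighbour witness $w$ from the failure of $D \setminus \{v\}$ to dominate, together with the separate treatment of the degenerate case $w = v$ — and it is precisely this degenerate case where the no-isolated-vertices hypothesis is needed, which also explains why the bound is sharp (e.g.\ a perfect matching on $n$ vertices needs $n/2$).
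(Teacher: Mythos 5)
Your proof is correct: taking an inclusion-wise minimal dominating set $D$, extracting a private neighbour to show every vertex of $D$ has a neighbour outside $D$ (with the no-isolated-vertex hypothesis handling the case where the private neighbour is the vertex itself), and concluding that $V(G)\setminus D$ is also dominating is exactly the classical argument behind this bound. The paper itself gives no proof and simply cites Ore, so there is nothing further to compare; your write-up, including the sharpness example of a perfect matching, is complete and accurate.
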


When some graph $G$ is considered, $\gamma(v)$ will denote the minimum number of vertices different from $v$ needed to dominate $N[v]$.
We will also denote the set of vertices whose neighborhood cannot be dominated by less than two vertices (other than the original vertex) by $D_2(G)=\{v \in V(G)\ \mid \gamma(v) \geq 2\}$.

\begin{lemma}\label{lem:H_existence}
Let $G$ a graph, $S\subseteq V$ be a set of vertices, and $D\subseteq N^2[S]$ a minimum dominating set of $N[S]$ in $G$ of size $k$. Then there exists some minor $H$ of $G[N^2[S]]$ such the following hold:
\begin{enumerate}
    \item $V(H)=A\sqcup B$ and $|B|=k$,
    \item $H[A]$ is edgeless and $d_H(a)\geq 2$ for all $a\in A$ ($d_H(v)$ is the degree of $v$ in $H$),
    \item and $|A| \geq \frac{1}{2}|(D_2\cap S)\setminus D|$.
\end{enumerate}
\end{lemma}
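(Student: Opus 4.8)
The plan is to realise $H$ as a minor of $G':=G[N^2[S]]$ whose ``heavy'' side $B$ is (a mild enlargement of) the fixed minimum dominating set $D$, and whose ``light'' side $A$ consists of roughly half of the vertices of $R:=(D_2\cap S)\setminus D$, each kept as a singleton branch set and each made adjacent to two branch sets of $B$.

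First I would record the local structure of $R$. Fix $v\in R$. Since $v\in S$ we have $N[v]\subseteq N[S]$, so $D$ dominates $N[v]$; since $v\notin D$, the set $D_v:=\{d\in D: N[d]\cap N[v]\neq\emptyset\}$ consists of vertices different from $v$ and dominates $N[v]$, hence $|D_v|\ge\gamma(v)\ge 2$ because $v\in D_2$. In particular $v$ has a neighbour $d_1(v)\in D$ (as $v\in N[S]$ is dominated), and there is a second vertex $d_2(v)\in D_v\setminus\{d_1(v)\}$ at distance at most $2$ from $v$ in $G'$; when $d_2(v)\notin N(v)$ I also fix a ``helper'' $x(v)\in N(v)\setminus D$ with $x(v)\sim d_2(v)$. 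All of these vertices lie in $N^2[S]$.

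Next I would build the branch sets. Every $d\in D$ spans a branch set of $B$, and these are kept pairwise disjoint and connected, so $|B|=|D|=k$; the singleton $\{v\}$ is a branch set of $A$ for every $v\in R$ that we decide to keep. The second edge of such a kept $v$ to $B$ is obtained either from a second direct $D$-neighbour of $v$, or by absorbing the helper $x(v)$ into the branch set of $d_2(v)$, which stays connected since $x(v)\sim d_2(v)$ (and a single absorbed helper can serve several kept $R$-vertices at once). The delicate case is when the only available helpers of $v$ lie in $R$, since a vertex used as a helper is absorbed into $B$ and hence cannot also be kept in $A$. To handle this I would form the auxiliary ``helper graph'' $J$ on the set $R_J\subseteq R$ of vertices that genuinely need an $R$-helper, with $v\sim_J v'$ whenever $v'$ can serve as a helper of $v$; by construction $J$ has no isolated vertices, so \Cref{lem:Ore} yields a set $W$ hitting every vertex of $J$ with $|W|\le |R_J|/2$. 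Absorbing the vertices of $W$ into the appropriate branch sets of $B$ then supplies the missing second edge to every vertex of $R_J\setminus W$, and we set $A:=R\setminus W$, so that $|A|\ge |R|-|W|\ge \tfrac12|R|$.

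Finally, form $H$ from these branch sets and delete every edge of $H$ that lies inside $A$; then $H[A]$ is edgeless, while by construction each $a\in A$ still has at least two surviving edges to branch sets of $B$, so $d_H(a)\ge 2$, and $|V(H)\setminus A|=|B|=k$ — exactly \Cref{lem:H_existence}. The main obstacle is the middle step: one has to choose $d_1$, $d_2$, the helpers, and the branch-set enlargements so that absorbing helpers never forces two distinct branch sets of $B$ to merge (which would break $|B|=k$), never makes a branch set of $B$ meet $A$, and always leaves each kept $R$-vertex adjacent to two \emph{distinct} branch sets of $B$. Making all of this simultaneously consistent while still saving a full factor $\tfrac12$ of $R$ is the crux, and is precisely where Ore's lemma is doing the work.
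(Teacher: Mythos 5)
Your overall strategy is the same as the paper's: contract the closed neighbourhoods of $D$ into $k$ branch sets (following Lemma~1 of \cite{KSV21}), keep the vertices of $R=(D_2\cap S)\setminus D$ as singleton branch sets, and use \Cref{lem:Ore} to absorb at most half of them into $B$ so that the survivors each see two branch sets. However, the step you yourself call ``the crux'' is where the real content lies, and you have not supplied it; Ore's lemma does not do that work. Two concrete problems. First, your auxiliary graph $J$ is defined on $R_J$ (the vertices needing an $R$-helper), but a vertex of $R_J$ may have all of its potential helpers in $R\setminus R_J$ (a helper with two direct $D$-neighbours needs no helper itself), so the hypothesis of \Cref{lem:Ore} -- no isolated vertices -- is not justified as stated. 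Second, and more seriously, dominating $J$ by a set $W$ and absorbing $W$ does not guarantee \emph{distinct} branch sets for the survivors: take $v_1,v_2\in R_J$ with unique dominators $d_a$ and $d_b$ respectively, and a common neighbour $v'$ whose only $D$-neighbours are $d_a$ and $d_b$. Then $v'$ dominates both in your helper graph, but it can be absorbed into only one of the two branch sets, and whichever you choose, one of $v_1,v_2$ gains only a second edge to the branch set it already sees, so its degree-two requirement fails.

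The paper closes exactly this hole with an extra preprocessing step you are missing: before applying Ore, it deletes every edge of $H[A]$ that lies in a triangle $u,v,d$ with $u,v\in A$ and $d\in D$ (checking, via the two internally disjoint paths of length at most $2$ from an $A$-vertex to two distinct vertices of $D$, that degrees stay at least $2$). After this deletion, any surviving $A$--$A$ edge joins two vertices with no common dominator, so when a vertex of the dominating set $D'$ of $H[J]$ is contracted onto the branch set of one of its own dominators, each of its surviving $A$-neighbours automatically becomes adjacent to a branch set different from the one containing its own dominator; it also makes the isolated/non-isolated split of $H[A]$ (the paper's $I$ and $J$) the right one to feed to Ore, avoiding your $R_J$ issue. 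Without this trick, or an equivalent mechanism for resolving conflicting branch-set requirements, your construction does not yield item~2 of the lemma, so the proposal as written has a genuine gap precisely at the point you flagged.
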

\begin{proof}
    Let $G' = G[N^2[S]]$. 
    Let us start by constructing a minor as in Lemma 1 of \cite{KSV21}.
    Take $D$ a dominating set of size $k$, and write $D=\{d_1, d_2, \dots, d_k\}$.
    Define $H$ as the minor of $G'$ by contracting the branch sets $b_i$ defined as follows. $$b_i = N_{G'}[d_i]\setminus (D_2\setminus D \cup \bigcup_{j<i} N[d_i] \cup \{d_{i+1}, \dots, d_k\})$$
    Let $B= \{b_1, b_2, \dots, b_k\}$ and $A = (D_2 \cap S)\setminus D$.
    For any $v\in (D_2\cap S) \setminus D$, we have $d_H(v) \geq 2$, as $D$ is a dominating set of $N[S]$.
    Here we need an additional trick: we remove edges $uv$ in triangles of the form $u,v,d$ where $u,v \in (D_2\cap S)\setminus D$ and $d\in D$.
    This is always possible while keeping $d_H(v) \geq 2$ for any $v\in (D_2 \cap S)\setminus D$.
    Indeed, $D$ is a dominating set of $G$ and $\gamma(v)\geq 2$. Since $\gamma(v)\geq 2$, there exist two paths in $G$ of length at most $2$ from $v$ to distinct vertices of $D$.
    Moreover, these two paths only intersect at $v$.
    One can check that these paths also exist in $H$.
    We now want to contract some edges so that every non-contracted vertex left in $(D_2\cap S)\setminus D$ is adjacent to two vertices of $B$ in $H$.
    Consider the induced subgraph $H[(D_2\cap S)\setminus D]$.
    Let $I$ be its set of isolated vertices and $J=(D_2\cap S)\setminus(D\cup I)$ the rest of the vertices.
    Every vertex of $I$ is adjacent to two vertices of $B$ in $H$.
    However, it is not the case for vertices in $J$.
    By definition, $H[J]$ is a graph with no isolated vertices.
    Let $D'$ be a minimum dominating set of $H[J]$, which is of size at most $\frac{1}{2} \cdot |J|$ by \Cref{lem:Ore}.
    Every vertex $j\in J$ is adjacent to some $b_{k_j}$ (because $D$ is a dominating set).
    Contract the edges $jb_{k_j}$ for $j\in D'$, i.e. for every $j\in D'$, add the vertex $j$ to the branch set $b_{k_j}$ and remove $j$ from $A$.
    Because $D'$ is a dominating set of $J$, every vertex in $J\setminus D'$ is now adjacent to two vertices of $B$ in $H$.
    Moreover, because $|D'|\leq \frac12 \cdot |J| \leq \frac12 \cdot |(D_2\cap S)\setminus D|$, we have $|A| \geq \frac12 \cdot |(D_2\cap S)\setminus D|$.
    Delete all the edges from $H[A]$. This ends the proof.
\end{proof}

Using the same notation as in the above lemma, the next lemma bounds the size of $A$ with respect to $|B|$ on $K_{2,t}$-minor-free graphs.

\begin{lemma}\label{lem:H_size}
    Let $G=(A\sqcup B,E)$ a $K_{2,t}$-minor-free graph such that $G[A]$ is edgeless and every vertex $v\in A$ has degree at least $2$. Then $|A| \leq (t-1)|B|$.
\end{lemma}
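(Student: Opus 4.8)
The plan is to prove the statement by induction on $|V(G)| = |A| + |B|$. The engine of the argument is the elementary observation that $K_{2,t}$-minor-freeness --- in fact merely $K_{2,t}$-subgraph-freeness --- forces any two vertices $u,v \in B$ to have at most $t-1$ common neighbours in $A$, since $t$ common neighbours $a_1,\dots,a_t$ would span a $K_{2,t}$ subgraph on $\{u,v\}\cup\{a_1,\dots,a_t\}$.

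First I would dispose of the easy cases. If $G$ is disconnected, each component inherits the hypotheses (its $A$-part is independent, each of its $A$-vertices keeps all of its $\ge 2$ neighbours, and it is $K_{2,t}$-minor-free), so applying the induction hypothesis componentwise and summing gives the bound. If $G$ is connected with $|B|\le 1$ then $A=\emptyset$, since a vertex of $A$ would need two distinct neighbours, all lying in $B$; and if $|B|=2$ then every vertex of $A$ has exactly the two vertices of $B$ as its neighbourhood, so $|A|\le t-1\le (t-1)|B|$ by the common-neighbour bound.

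The inductive step handles a connected $G$ with $|B|\ge 3$. Because $A$ is independent and $G$ is connected with $|B|\ge 2$, a shortest path of $G$ between two vertices of $B$ has length at most $2$: a shortest such path of length $\ge 3$ would either contain two consecutive internal vertices of $A$ (impossible, as $A$ is independent) or yield a shorter path between two vertices of $B$. Hence there are distinct $b_1,b_2\in B$ that are either adjacent or share a common neighbour $a_0\in A$. Form $G'$ from $G$ by contracting the edge $b_1b_2$ (respectively the path $b_1a_0b_2$) into a single new vertex $b_{12}\in B$, and then deleting every vertex of $A$ whose degree has dropped below $2$. Such a broken vertex $a$ must have had neighbourhood exactly $\{b_1,b_2\}$ --- the only way a vertex of $A$ loses degree is by having two of its neighbours identified, and its whole neighbourhood lies in $B$ --- so it is one of the at most $t-1$ common neighbours of $b_1$ and $b_2$. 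Counting the consumed vertex $a_0$ (when it is used) among those common neighbours, and noting that $a_0$ is distinct from every deleted broken vertex, at most $t-1$ vertices of $A$ disappear overall, so $|A'|\ge |A|-(t-1)$ while $|B'|=|B|-1$. Moreover $G'$ is a minor of $G$, hence $K_{2,t}$-minor-free, $A'$ is still independent, and every vertex of $A'$ has degree $\ge 2$. The induction hypothesis applied to $G'$ then gives $|A|-(t-1)\le |A'|\le (t-1)(|B|-1)$, that is, $|A|\le (t-1)|B|$, completing the induction.

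The parts I would take most care over are the degree bookkeeping in the contraction --- verifying that no vertex of $A$ drops below degree $2$ for any reason other than two of its neighbours being merged, so that the broken vertices are precisely those with neighbourhood $\{b_1,b_2\}$ --- and the off-by-one-sensitive count showing that $a_0$ together with the broken vertices number at most $t-1$ (they are all common neighbours of $b_1$ and $b_2$, of which there are at most $t-1$, and $a_0$ does not coincide with any deleted vertex). Everything else, including checking that $G'$ still satisfies all the hypotheses, is routine.
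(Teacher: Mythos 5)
Your proof is correct, and it takes a genuinely different route from the paper's. The paper preprocesses the graph by repeatedly contracting an edge $ab$ whenever $a\in A$ has two $B$-neighbours in different components of $G[B]$, recording the resulting ``red'' edges in a set $R$; it then inducts on $|B|$, peeling off one vertex $c$ of a component $C$ of $G'[B]$ chosen so that $C\setminus\{c\}$ stays connected, and bounds $|N(c)\cap A'|+|R(c)|\le t-1$ by contracting $C\setminus\{c\}$ into a single hub and exhibiting a $K_{2,t}$ minor. Your argument instead inducts on $|V(G)|$, locates two $B$-vertices $b_1,b_2$ at distance at most $2$, identifies them (possibly through a mediating $a_0\in A$), deletes the $A$-vertices that drop below degree $2$ (necessarily exactly the common neighbours with $N(a)=\{b_1,b_2\}$), and observes that the entire bill of removed $A$-vertices---$a_0$ plus the broken ones---is a set of distinct common $A$-neighbours of $b_1,b_2$, hence of size at most $t-1$ by $K_{2,t}$-subgraph-freeness. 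This avoids the red-edge bookkeeping entirely; the extremal input to your step is the elementary ``two $B$-vertices have $\le t-1$ common $A$-neighbours'' fact, whereas the paper's step needs to fold contracted $A$-vertices into a genuine $K_{2,t}$ minor. Both proofs use minor-freeness to propagate the hypothesis through the induction, but your key numerical bound is a subgraph statement, which makes the argument somewhat more self-contained. The distance-$\le 2$ claim, the degree bookkeeping, and the verification that $a_0$ is disjoint from the broken vertices---the places you flagged as delicate---all check out.
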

\begin{proof}
    First, apply some preprocessing to the graph:
    if there exists some $a\in A$ and $b,b'\in N(a)$ such that $b$ and $b'$ are not connected in $G[B]$, then we can contract the edge $ab$, and from now on consider the newly contracted graph instead of $G$.
    This creates a new ``red'' edge $bb'$. Put it in $R$, the set of red edges.
    Thinking about red edges as paths of length $2$ allows us to do a fine-grain analysis of the size of $|A|$.
    Keep repeating this step whenever some such $a$ exists, and let $G'=(A'\sqcup B,E')$ be the final graph obtained after this procedure.
    \begin{figure}[H]
        \centering
        \begin{tikzpicture}[font=\boldmath,draw=primary,every node/.style=vertex, every path/.style={very thick}]
            \draw[rounded corners,draw=none,fill=secondary] (-2, -1) rectangle (3, 0.5) {};
            \node[fill=none] at (-1,-0.2) {$G[B]$};
            \node[label=above:{$a$}] (1) at (1,2) {};
            \node[label=below:{$b$}] (2) at (0,0) {};
            \node[label=below:{$b'$}] (3) at (2,0) {};
            \node[label=below:{$b$}] (4) at (6,1) {};
            \node[label=below:{$b'$}] (5) at (8,1) {};
        	\draw [-{Stealth[length=3mm, width=2mm]}] (3,1) -- (5,1);
            \draw (1) -- (2);
            \draw (1) -- (3);
            \draw[densely dashed,decorate,decoration=snake] (2) -- (3);
            \draw[tertiary] (4) -- (5);
        \end{tikzpicture}
        \caption{An explanation of the preprocessing procedure. Here the dashed squiggly edge represents the lack of path in $G[B]$.}
    \end{figure}
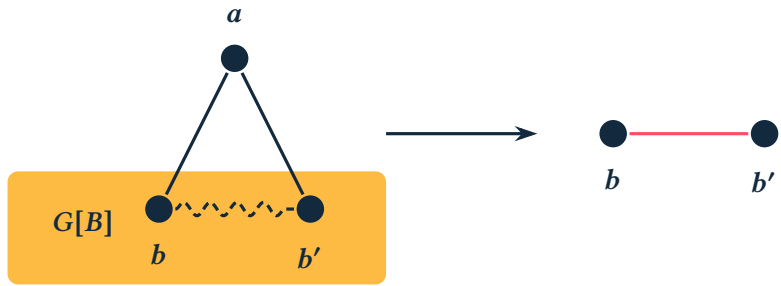

    We prove by induction on $|B|$ that $|A'| + |R| \leq (t-1)|B|$.

    The base case with $B=\emptyset$ is trivial.
    If $B$ is not empty, let $C$ be a connected component of $G'[B]$.
    We can find some $c\in C$ such that $G'[C\setminus\{c\}]$ is connected (for instance, one can take a leaf in a BFS tree of $G'[C\setminus\{c\}]$).
    Let $R(c) := \{cb \in E' | b \in V(B)\}\cap R$ denote the set of red edges that touch $c$, i.e. the red edges that are not in $G'[B\setminus\{c\}]$.
    Let $a \in N(c)\cap A'$ and $b \in N(a)\setminus\{c\}$. Then $b\in C$, i.e. $b$ is in the same connected component of $G'[B]$ as $c$.
    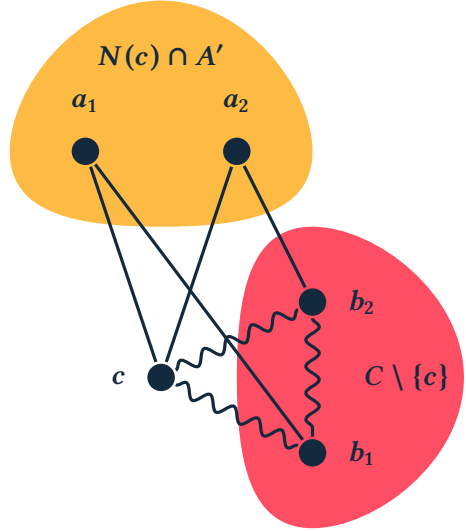
\begin{figure}[H]
    \centering
    \begin{tikzpicture}[font=\boldmath,draw=primary,every node/.style=vertex,every path/.style={very thick}]
		\draw[fill=secondary,draw=none] plot [smooth cycle, tension=1] coordinates {(-2, 3) (0, 5) (2, 3) (0, 2)};
		\draw[fill=tertiary,draw=none] plot [smooth cycle, tension=1] coordinates {(1, 0) (2, 2) (4, 0) (2, -2)};
		\node [label=left:{$c$}] (0) at (0, 0) {};
		\node [label=right:{$b_1$}] (1) at (2, -1) {};
		\node [label=above:{$a_1$}] (2) at (-1, 3) {};
		\node [label=above:{$a_2$}] (3) at (1, 3) {};
		\node [label=right:{$b_2$}] (4) at (2, 1) {};
		\node [fill = none] at (0, 4.25) {$N(c)\cap A'$};
		\node [fill = none] at (3.25, 0) {$C\setminus \{c\}$};

		\draw (2) to (0);
		\draw (0) to (3);
		\draw [decorate,decoration=snake] (0) to (4);
		\draw [decorate,decoration=snake] (4) to (1);
		\draw [decorate,decoration=snake] (1) to (0);
		\draw (2) to (1);
		\draw (3) to (4);

    \end{tikzpicture}
    \caption{The vertices at the top are in $A'$ and the vertices on bottom are in $B$. In the yellow region, the neighbors $a_i$ of $c$ in $A'$, which are all connected to a vertex $b_i \in C\setminus\{c\}$ in the red region. $c$ and the $b_i$ are all in the same connected component of $G[B]$.}
    \end{figure}

    Thus, we have the bound $|N(c)\cap A'| + |R(c)|\leq t-1$, as otherwise we would obtain a $K_{2,t}$ minor by contracting the vertex set $C\setminus\{c\} \neq \emptyset$.
    Remove the vertex $c$ from $B$ and the vertices $N(c)\cap A'$ from $A'$, to maintain the condition on the vertex degrees of $A'$.
    Notice that for each $a\in A'\setminus N(c)$, if $b,b'\in N(a)$ then $b$ and $b'$ are still connected in $G'[B]$. Indeed, if they were connected before but not anymore, we would have $b,b'\in C$, but as $C\setminus\{c\}$ is connected, we would get a contradiction.
    This new smaller graph contains all the red edges except the ones in $R(c)$.
    We have removed at most $t-1$ vertices from $A'$ or edges from $R$, so by applying induction hypothesis we have that $|A'| + |R| \leq (t-1)|B|$.
    Finally,  $|A| = |A'| + |R| \leq (t-1)|B|$.
\end{proof}

Moreover, we prove that in a graph with no true twins, $D_2$ is a dominating set.
\begin{lemma}
    Let $G$ be a graph with no true twins and let $v\in V(G)$ such that $\gamma(v)=1$. Then there exists some $u\in V(G)$ such that $\gamma(u)\geq 2$ and $N[v]\subseteq N[u]$.
\end{lemma}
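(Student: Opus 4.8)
The plan is to work with the family $\mathcal{U} := \{w \in V(G) : N[v] \subseteq N[w]\}$ of all vertices whose closed neighbourhood contains $N[v]$. This set is nonempty, since $v \in \mathcal{U}$, and (as $G$ is finite) we may pick $u \in \mathcal{U}$ with $|N[u]|$ as large as possible. I claim this $u$ witnesses the lemma: $N[v] \subseteq N[u]$ holds by the choice of $u$, so the only thing left to prove is $\gamma(u) \geq 2$.

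To see this I would first observe that $\gamma(u) \neq 0$, since the empty set never dominates $N[u]$ (as $u \in N[u]$ is neither in, nor adjacent to, $\emptyset$). So suppose toward a contradiction that $\gamma(u) = 1$; by definition this means there is a vertex $w \neq u$ such that $\{w\}$ dominates $N[u]$, which, unwound, says exactly that $N[u] \subseteq N[w]$. Then $N[v] \subseteq N[u] \subseteq N[w]$, so $w \in \mathcal{U}$, and maximality of $|N[u]|$ gives $|N[w]| \leq |N[u]|$; combined with $N[u] \subseteq N[w]$ this forces $N[u] = N[w]$. Since $w \neq u$, the vertices $u$ and $w$ are true twins, contradicting the hypothesis that $G$ has no true twins. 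Hence $\gamma(u) \geq 2$, as desired.

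Two remarks for the write-up. The hypothesis $\gamma(v) = 1$ is actually not needed for the displayed conclusion, but it is what makes the lemma useful downstream: it forces $u \neq v$ (otherwise $\gamma(u) = \gamma(v) = 1$), so from $v \in N[v] \subseteq N[u]$ we get that $v$ is adjacent to $u \in D_2(G)$; since a vertex $v$ lies outside $D_2(G)$ precisely when $\gamma(v) = 1$, this is exactly the statement that $D_2(G)$ is a dominating set. The only genuinely delicate point is the equivalence "$\gamma(u) = 1$ iff some $w \neq u$ satisfies $N[u] \subseteq N[w]$", and this is precisely where forbidding true twins is used; the remaining steps are bookkeeping, with the only care needed being for the degenerate values of $\gamma$ ($0$, and, if isolated vertices are permitted, $\infty$, which trivially satisfies $\gamma \geq 2$). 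I do not expect any real obstacle here beyond getting these small cases right.
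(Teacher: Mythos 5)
Your proof is correct and follows essentially the same approach as the paper: both pick a vertex $u$ whose closed neighbourhood contains $N[v]$ and is maximal (the paper maximizes $N[u]$ among supersets of $N[v]$, you maximize $|N[u]|$ over the set $\mathcal{U}$), then derive a contradiction with $\gamma(u)=1$ via the no-true-twins hypothesis. Your write-up is in fact slightly more careful than the paper's, e.g.\ in handling the degenerate values of $\gamma$ and in making the maximality argument fully explicit.
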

\begin{proof}
    Let $v \in V(G)$ such that $\gamma(v) = 1$.
    Therefore, there exists some $u$ such that $N[v]\subsetneq N[u]$ because $G$ contains no true twins.
    Take $u$ such that $N[u]$ is maximal.
    Then $\gamma(u)\geq 2$. Indeed, if we had $\gamma(u) = 1$, there would be some other vertex $u'$ such that $N[u]\subsetneq N[u']$, because $G$ does not contain true twins.
    Contradiction.
\end{proof}
    
Using the preceding lemmas, we get the following.

\begin{corollary}\label{cor:D2_size}
    Let $G$ be a graph, and $S\subseteq V(G)$ a subset of vertices such that $G[N^2[S]]$ is $K_{2,t}$-minor free.
    Then $|D_2(G)\cap S| \leq (2t-1)\MDS(N[S])$.
\end{corollary}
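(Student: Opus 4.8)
Set $k:=\MDS(N[S])$ and fix a minimum dominating set $D$ of $N[S]$ in $G$ with $D\subseteq N^2[S]$ and $|D|=k$ (such a $D$ exists since any dominator of a vertex of $N[S]$ lies in $N^2[S]$). Apply \Cref{lem:H_existence} with this $S$ and $D$: it produces a minor $H$ of $G[N^2[S]]$ with vertex set $V(H)=A\sqcup B$, where $|B|=k$, $H[A]$ is edgeless, every vertex of $A$ has degree at least $2$ in $H$, and
\[
    |A| \;\geq\; \tfrac{1}{2}\,\bigl|(D_2(G)\cap S)\setminus D\bigr|.
\]

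The key observation is that $H$ inherits $K_{2,t}$-minor-freeness: $H$ is a minor of $G[N^2[S]]$, and a minor of a $K_{2,t}$-minor-free graph is again $K_{2,t}$-minor-free. Hence $H=(A\sqcup B,E(H))$ satisfies exactly the hypotheses of \Cref{lem:H_size}, which gives $|A|\leq (t-1)|B|=(t-1)k$. Combining with the inequality from \Cref{lem:H_existence} yields
\[
    \bigl|(D_2(G)\cap S)\setminus D\bigr| \;\leq\; 2|A| \;\leq\; 2(t-1)k.
\]

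Finally we add back the vertices of $D$: since $|D_2(G)\cap S\cap D|\leq |D|=k$, we get
\[
    |D_2(G)\cap S| \;\leq\; \bigl|(D_2(G)\cap S)\setminus D\bigr| + |D_2(G)\cap S\cap D| \;\leq\; 2(t-1)k + k \;=\; (2t-1)k,
\]
which is the claimed bound $(2t-1)\MDS(N[S])$.

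\textbf{On the difficulty.} The conceptual work has already been carried out in \Cref{lem:H_existence} (building a bipartite-like minor whose $A$-side is a constant fraction of $(D_2\cap S)\setminus D$, using \Cref{lem:Ore} to absorb the problematic vertices) and in \Cref{lem:H_size} (the sparsity count ruling out a $K_{2,t}$ minor), so this corollary is essentially just a bookkeeping composition of the two. The only mild subtlety worth a sentence is that $D_2(G)$ is defined via $\gamma(\cdot)$ in $G$, but for $v\in S$ the quantity $\gamma(v)$ only depends on $N[v]\subseteq N^2[S]$ and on dominators adjacent to $N[v]$, all of which lie in $G[N^2[S]]$; thus passing to the induced subgraph $G[N^2[S]]$ (where the minor lives and the $K_{2,t}$-freeness hypothesis is stated) loses nothing.
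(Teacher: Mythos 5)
Your proof is correct and follows essentially the same route as the paper: apply \Cref{lem:H_existence} to build the minor $H$, apply \Cref{lem:H_size} to bound $|A|$ by $(t-1)|B|$, and add back the at most $|D|=\MDS(N[S])$ vertices of $D$ to conclude. Your version is a bit more careful with the $\cap S$ bookkeeping and the remark that $K_{2,t}$-minor-freeness passes to the minor $H$, but this is the same argument.
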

\begin{proof}
    Let $H$ be the minor defined in \Cref{lem:H_existence}.
    Apply \Cref{lem:H_size} on $H$. We get that $|D_2(G)\setminus D| \leq  2|A| \leq 2(t-1)\MDS(G)$, so $|D_2(G)| \leq \MDS(G) + |D_2(G)\setminus D| \leq (2t-1)\MDS(G)$.
\end{proof}

We can now prove that the following is an $(2t-1)$-approximation algorithm in $K_{2,t}$-minor-free graphs.
\begin{enumerate}
    \item Make the graph without true twins. $u$ and $v$ are true twins in $G$ if $N[u]=N[v]$. The true-twin-less graph associated to $G$ is largest subgraph of $G$ with no true twins.
    \item Return the set $D_2 = \{ v\in V(G) \mid \not\exists u\in V(G-v), N[v] \subseteq N[u]\}$. It is the set of all vertices whose neighborhood cannot be dominated by a single vertex other than itself.
\end{enumerate}
The round complexity of this algorithm is constant, and does not depend on $t$.
Moreover, from \Cref{cor:D2_size} and the fact that the true-twin-less graph associated to $G$ has the same domination number as $G$, we get that this is indeed a $(2t-1)$-approximation algorithm.
This proves \Cref{TH:3ROUND}.

\begin{acks}
\begin{itemize}
    \item
    The first three authors have been partially supported by the French ANR projects ENEDISC (ANR-24-CE48-7768) and TEMPOGRAL (ANR-22-CE48-0001).  The fourth author was supported by the Deutsche Forschungsgemeinschaft (DFG, German Research Foundation) under Germany's Excellence Strategy – The Berlin Mathematics Research Center MATH+ (EXC-2046/1, project ID:390685689).
    \item
    The authors would like to thank Linda Cook and Sergey Norin for inspiring discussions, and the reviewers for helpful comments.
\end{itemize}
\end{acks}

\balance

\providecommand{\etalchar}[1]{$^{#1}$}

\balance


\begin{thebibliography}{HKOdM{\etalchar{+}}25}


\ifx \showCODEN    \undefined \def \showCODEN     #1{\unskip}     \fi
\ifx \showDOI      \undefined \def \showDOI       #1{#1}\fi
\ifx \showISBNx    \undefined \def \showISBNx     #1{\unskip}     \fi
\ifx \showISBNxiii \undefined \def \showISBNxiii  #1{\unskip}     \fi
\ifx \showISSN     \undefined \def \showISSN      #1{\unskip}     \fi
\ifx \showLCCN     \undefined \def \showLCCN      #1{\unskip}     \fi
\ifx \shownote     \undefined \def \shownote      #1{#1}          \fi
\ifx \showarticletitle \undefined \def \showarticletitle #1{#1}   \fi
\ifx \showURL      \undefined \def \showURL       {\relax}        \fi

\bibitem[Alipour et~al\mbox{.}(2021)]%
        {AFK21}
\bibfield{author}{\bibinfo{person}{Sharareh Alipour}, \bibinfo{person}{Ehsan Futuhi}, {and} \bibinfo{person}{Shayan Karimi}.} \bibinfo{year}{2021}\natexlab{}.
\newblock \bibinfo{booktitle}{\emph{On Distributed Algorithms for Minimum Dominating Set problem, from theory to application}}.
\newblock \bibinfo{type}{{T}echnical {R}eport}.
\newblock
\showeprint[arxiv]{2012.04883}~[cs.DC]
\urldef\tempurl%
\url{https://arxiv.org/abs/2012.04883}
\showURL{%
\tempurl}


\bibitem[Amiri et~al\mbox{.}(2019)]%
        {ASS19}
\bibfield{author}{\bibinfo{person}{Saeed~Akhoondian Amiri}, \bibinfo{person}{Stefan Schmid}, {and} \bibinfo{person}{Sebastian Siebertz}.} \bibinfo{year}{2019}\natexlab{}.
\newblock \showarticletitle{Distributed Dominating Set Approximations beyond Planar Graphs}.
\newblock \bibinfo{journal}{\emph{ACM Transactions on Algorithms}} \bibinfo{volume}{15}, \bibinfo{number}{3} (\bibinfo{year}{2019}), \bibinfo{pages}{Article No. 39, pp. 1--18}.
\newblock
\urldef\tempurl%
\url{https://doi.org/10.1145/3093239}
\showDOI{\tempurl}


\bibitem[Bonamy et~al\mbox{.}(2024)]%
        {BBEGLPS24}
\bibfield{author}{\bibinfo{person}{Marthe Bonamy}, \bibinfo{person}{Nicolas Bousquet}, \bibinfo{person}{Louis Esperet}, \bibinfo{person}{Carla Groenland}, \bibinfo{person}{Chun-Hung Liu}, \bibinfo{person}{Fran{\c{c}}ois Pirot}, {and} \bibinfo{person}{Alexander Scott}.} \bibinfo{year}{2024}\natexlab{}.
\newblock \showarticletitle{Asymptotic dimension of minor-closed families and Assouad--Nagata dimension of surfaces}.
\newblock \bibinfo{journal}{\emph{Journal of the European Mathematical Society}} \bibinfo{volume}{26}, \bibinfo{number}{10} (\bibinfo{year}{2024}), \bibinfo{pages}{3739--3791}.
\newblock


\bibitem[Bonamy et~al\mbox{.}(2021)]%
        {BCGW21}
\bibfield{author}{\bibinfo{person}{Marthe Bonamy}, \bibinfo{person}{Linda Cook}, \bibinfo{person}{Carla Groenland}, {and} \bibinfo{person}{Alexandra Wesolek}.} \bibinfo{year}{2021}\natexlab{}.
\newblock \showarticletitle{A Tight Local Algorithm for the Minimum Dominating Set Problem in Outerplanar Graphs}. In \bibinfo{booktitle}{\emph{35th International Symposium on Distributed Computing (DISC)}} (Paris, France) \emph{(\bibinfo{series}{LIPIcs}, Vol.~\bibinfo{volume}{209})}, \bibfield{editor}{\bibinfo{person}{Seth Gilbert}} (Ed.). \bibinfo{pages}{13:1--13:18}.
\newblock
\urldef\tempurl%
\url{https://doi.org/10.4230/LIPIcs.DISC.2021.13}
\showDOI{\tempurl}


\bibitem[Chakrabarty et~al\mbox{.}(2002)]%
        {CIQC02}
\bibfield{author}{\bibinfo{person}{Krishnendu Chakrabarty}, \bibinfo{person}{S.~Sitharama Iyengar}, \bibinfo{person}{Hairong Qi}, {and} \bibinfo{person}{Eungchun Cho}.} \bibinfo{year}{2002}\natexlab{}.
\newblock \showarticletitle{Grid coverage for surveillance and target location in distributed sensor networks}.
\newblock \bibinfo{journal}{\emph{IEEE Trans. Comput.}} \bibinfo{volume}{51}, \bibinfo{number}{12} (\bibinfo{date}{Dec.} \bibinfo{year}{2002}), \bibinfo{pages}{1448--1453}.
\newblock
\urldef\tempurl%
\url{https://doi.org/10.1109/TC.2002.1146711}
\showDOI{\tempurl}


\bibitem[Czygrinow et~al\mbox{.}(2008)]%
        {CHW08a}
\bibfield{author}{\bibinfo{person}{Andrzej Czygrinow}, \bibinfo{person}{Micha{\l} Ha{\'n}{\'c}kowiak}, {and} \bibinfo{person}{Wojciech Wawrzyniak}.} \bibinfo{year}{2008}\natexlab{}.
\newblock \showarticletitle{Fast Distributed Approximations in Planar Graphs}. In \bibinfo{booktitle}{\emph{22nd International Symposium on Distributed Computing (DISC)}} \emph{(\bibinfo{series}{Lecture Notes in Computer Science}, Vol.~\bibinfo{volume}{5218})}. \bibinfo{publisher}{Springer}, \bibinfo{pages}{78--92}.
\newblock
\urldef\tempurl%
\url{https://doi.org/10.1007/978-3-540-87779-0_6}
\showDOI{\tempurl}


\bibitem[Czygrinow et~al\mbox{.}(2018)]%
        {CHW18}
\bibfield{author}{\bibinfo{person}{Andrzej Czygrinow}, \bibinfo{person}{Micha{\l} Ha{\'n}{\'c}kowiak}, {and} \bibinfo{person}{Wojciech Wawrzyniak}.} \bibinfo{year}{2018}\natexlab{}.
\newblock \showarticletitle{Distributed Approximation Algorithms for the Minimum Dominating Set in ${K}_h$-Minor-Free Graphs}. In \bibinfo{booktitle}{\emph{29th International Symposium on Algorithms and Computation (ISAAC)}} (Jiaoxi, Yilan, Taiwan) \emph{(\bibinfo{series}{LIPIcs}, Vol.~\bibinfo{volume}{123})}. \bibinfo{pages}{22:1--22:12}.
\newblock
\urldef\tempurl%
\url{https://doi.org/10.4230/LIPIcs.ISAAC.2018.22}
\showDOI{\tempurl}


\bibitem[Ding(2017)]%
        {K2tcaract}
\bibfield{author}{\bibinfo{person}{Guoli Ding}.} \bibinfo{year}{2017}\natexlab{}.
\newblock \bibinfo{title}{Graphs without large ${K}_{2,n}$-minors}.
\newblock
\newblock
\urldef\tempurl%
\url{https://doi.org/10.48550/ARXIV.1702.01355}
\showDOI{\tempurl}


\bibitem[Garey and Johnson(1979)]%
        {GJ79}
\bibfield{author}{\bibinfo{person}{Michael~R. Garey} {and} \bibinfo{person}{David~S. Johnson}.} \bibinfo{year}{1979}\natexlab{}.
\newblock \bibinfo{booktitle}{\emph{Computers and Intractability - A Guide to the Theory of {NP}-Completeness}}.
\newblock \bibinfo{publisher}{W.H. Freeman}.
\newblock


\bibitem[Ghaffari et~al\mbox{.}(2017)]%
        {GKM17}
\bibfield{author}{\bibinfo{person}{Mohsen Ghaffari}, \bibinfo{person}{Fabian Kuhn}, {and} \bibinfo{person}{Yannic Maus}.} \bibinfo{year}{2017}\natexlab{}.
\newblock \showarticletitle{On the Complexity of Local Distributed Graph Problems}. In \bibinfo{booktitle}{\emph{49th Annual ACM Symposium on Theory of Computing (STOC)}} (Montreal, Canada). \bibinfo{publisher}{ACM Press}, \bibinfo{pages}{784--797}.
\newblock
\urldef\tempurl%
\url{https://doi.org/10.1145/3055399.3055471}
\showDOI{\tempurl}


\bibitem[Gromov(1993)]%
        {gromov1993geometric}
\bibfield{author}{\bibinfo{person}{Mikhael Gromov}.} \bibinfo{year}{1993}\natexlab{}.
\newblock \bibinfo{booktitle}{\emph{Geometric Group Theory: Asymptotic invariants of infinite groups}}.
\newblock \bibinfo{publisher}{Cambridge University Press}.
\newblock


\bibitem[Heydt et~al\mbox{.}(2025)]%
        {HKOSV25}
\bibfield{author}{\bibinfo{person}{Ozan Heydt}, \bibinfo{person}{Simeon Kublenz}, \bibinfo{person}{Patrice Ossona~de Mendez}, \bibinfo{person}{Sebastian Siebertz}, {and} \bibinfo{person}{Alexandre Vigny}.} \bibinfo{year}{2025}\natexlab{}.
\newblock \showarticletitle{Distributed domination on sparse graph classes}.
\newblock \bibinfo{journal}{\emph{European Journal of Combinatorics}}  \bibinfo{volume}{123} (\bibinfo{date}{Jan.} \bibinfo{year}{2025}), \bibinfo{pages}{103773}.
\newblock
\urldef\tempurl%
\url{https://doi.org/10.1016/j.ejc.2023.103773}
\showDOI{\tempurl}


\bibitem[Hilke et~al\mbox{.}(2014)]%
        {HLS14}
\bibfield{author}{\bibinfo{person}{Miikka Hilke}, \bibinfo{person}{Christoph Lenzen}, {and} \bibinfo{person}{Jukka Suomela}.} \bibinfo{year}{2014}\natexlab{}.
\newblock \showarticletitle{Brief Announcement: local approximability of minimum dominating set on planar graphs}. In \bibinfo{booktitle}{\emph{33rd Annual ACM Symposium on Principles of Distributed Computing (PODC)}} (Paris, France). \bibinfo{publisher}{ACM Press}, \bibinfo{pages}{344--346}.
\newblock
\urldef\tempurl%
\url{https://doi.org/10.1145/2611462.2611504}
\showDOI{\tempurl}


\bibitem[Kikuno et~al\mbox{.}(1980)]%
        {KYK80}
\bibfield{author}{\bibinfo{person}{Tohru Kikuno}, \bibinfo{person}{Noriyoshi Yoshida}, {and} \bibinfo{person}{Yoshiaki Kakuda}.} \bibinfo{year}{1980}\natexlab{}.
\newblock \showarticletitle{The {NP}-Completeness of the Dominating Set Problem in Cubic Planar Graphs}.
\newblock \bibinfo{journal}{\emph{IEICE Transactions on Fundamentals of Electronics, Communications and Computer Sciences}} \bibinfo{volume}{E63-E}, \bibinfo{number}{6} (\bibinfo{year}{1980}), \bibinfo{pages}{443--444}.
\newblock


\bibitem[Kostochka(1984)]%
        {Kostochka84}
\bibfield{author}{\bibinfo{person}{Alexandr~V. Kostochka}.} \bibinfo{year}{1984}\natexlab{}.
\newblock \showarticletitle{A lower bound for the {Hadwiger} number of graphs by the average degree}.
\newblock \bibinfo{journal}{\emph{Combinatorica}} \bibinfo{volume}{4}, \bibinfo{number}{4} (\bibinfo{date}{Dec.} \bibinfo{year}{1984}), \bibinfo{pages}{307--316}.
\newblock
\urldef\tempurl%
\url{https://doi.org/10.1007/BF02579141}
\showDOI{\tempurl}


\bibitem[Kostochka and Prince(2010)]%
        {KP10}
\bibfield{author}{\bibinfo{person}{Alexandr~V. Kostochka} {and} \bibinfo{person}{Noah Prince}.} \bibinfo{year}{2010}\natexlab{}.
\newblock \showarticletitle{Dense graphs have ${K}_{3,t}$ minors}.
\newblock \bibinfo{journal}{\emph{Discrete Mathematics}} \bibinfo{volume}{310}, \bibinfo{number}{20} (\bibinfo{date}{Oct.} \bibinfo{year}{2010}), \bibinfo{pages}{2637--2654}.
\newblock
\urldef\tempurl%
\url{https://doi.org/10.1016/j.disc.2010.03.026}
\showDOI{\tempurl}


\bibitem[Krishnamachari(2005)]%
        {Krishnamachari05}
\bibfield{author}{\bibinfo{person}{Bhaskar Krishnamachari}.} \bibinfo{year}{2005}\natexlab{}.
\newblock \bibinfo{booktitle}{\emph{Networking Wireless Sensors}}.
\newblock \bibinfo{publisher}{Cambridge University Press}.
\newblock


\bibitem[Kublenz et~al\mbox{.}(2021)]%
        {KSV21}
\bibfield{author}{\bibinfo{person}{Simeon Kublenz}, \bibinfo{person}{Sebastian Siebertz}, {and} \bibinfo{person}{Alexandre Vigny}.} \bibinfo{year}{2021}\natexlab{}.
\newblock \showarticletitle{Constant Round Distributed Domination on Graph Classes with Bounded Expansion}. In \bibinfo{booktitle}{\emph{28th International Colloquium on Structural Information {\&} Communication Complexity (SIROCCO)}} (Wroc{\l}aw, Poland) \emph{(\bibinfo{series}{Lecture Notes in Computer Science}, Vol.~\bibinfo{volume}{12810})}, \bibfield{editor}{\bibinfo{person}{Tomasz Jurdzi{\'n}ski} {and} \bibinfo{person}{Stefan Schmid}} (Eds.). \bibinfo{publisher}{Springer, Cham}, \bibinfo{pages}{334--351}.
\newblock
\urldef\tempurl%
\url{https://doi.org/10.1007/978-3-030-79527-6_19}
\showDOI{\tempurl}


\bibitem[Kuhn et~al\mbox{.}(2016)]%
        {KMW16}
\bibfield{author}{\bibinfo{person}{Fabian Kuhn}, \bibinfo{person}{Thomas Moscibroda}, {and} \bibinfo{person}{Roger Wattenhofer}.} \bibinfo{year}{2016}\natexlab{}.
\newblock \showarticletitle{Local Computation: Lower and Upper Bounds}.
\newblock \bibinfo{journal}{\emph{J. ACM}} \bibinfo{volume}{63}, \bibinfo{number}{2} (\bibinfo{date}{March} \bibinfo{year}{2016}), \bibinfo{pages}{Article No. 17, pp. 1--44}.
\newblock
\urldef\tempurl%
\url{https://doi.org/10.1145/2742012}
\showDOI{\tempurl}


\bibitem[Lenzen et~al\mbox{.}(2013)]%
        {LPW13}
\bibfield{author}{\bibinfo{person}{Christoph Lenzen}, \bibinfo{person}{Yvonne-Anne Pignolet}, {and} \bibinfo{person}{Roger Wattenhofer}.} \bibinfo{year}{2013}\natexlab{}.
\newblock \showarticletitle{Distributed minimum dominating set approximations in restricted families of graphs}.
\newblock \bibinfo{journal}{\emph{Distributed Computing}} \bibinfo{volume}{26}, \bibinfo{number}{2} (\bibinfo{date}{March} \bibinfo{year}{2013}), \bibinfo{pages}{119--137}.
\newblock
\urldef\tempurl%
\url{https://doi.org/10.1007/s00446-013-0186-z}
\showDOI{\tempurl}


\bibitem[Lenzen and Wattenhofer(2008)]%
        {LW08}
\bibfield{author}{\bibinfo{person}{Christoph Lenzen} {and} \bibinfo{person}{Roger Wattenhofer}.} \bibinfo{year}{2008}\natexlab{}.
\newblock \showarticletitle{Leveraging {Linial}'s Locality Limit}. In \bibinfo{booktitle}{\emph{22nd International Symposium on Distributed Computing (DISC)}} \emph{(\bibinfo{series}{Lecture Notes in Computer Science}, Vol.~\bibinfo{volume}{5218})}. \bibinfo{publisher}{Springer}, \bibinfo{pages}{394--407}.
\newblock
\urldef\tempurl%
\url{https://doi.org/10.1007/978-3-540-87779-0_27}
\showDOI{\tempurl}


\bibitem[Linial(1987)]%
        {Linial87}
\bibfield{author}{\bibinfo{person}{Nathan Linial}.} \bibinfo{year}{1987}\natexlab{}.
\newblock \showarticletitle{Distributive graph algorithms -- {G}lobal solutions from local data}. In \bibinfo{booktitle}{\emph{28th Annual IEEE Symposium on Foundations of Computer Science (FOCS)}}. \bibinfo{publisher}{IEEE Computer Society Press}, \bibinfo{pages}{331--335}.
\newblock
\urldef\tempurl%
\url{https://doi.org/10.1109/SFCS.1987.20}
\showDOI{\tempurl}


\bibitem[Linial(1992)]%
        {Linial92}
\bibfield{author}{\bibinfo{person}{Nathan Linial}.} \bibinfo{year}{1992}\natexlab{}.
\newblock \showarticletitle{Locality in Distributed Graphs Algorithms}.
\newblock \bibinfo{journal}{\emph{SIAM J. Comput.}} \bibinfo{volume}{21}, \bibinfo{number}{1} (\bibinfo{year}{1992}), \bibinfo{pages}{193--201}.
\newblock
\urldef\tempurl%
\url{https://doi.org/10.1137/0221015}
\showDOI{\tempurl}


\bibitem[Ore(1962)]%
        {O62}
\bibfield{author}{\bibinfo{person}{{\O}ystein Ore}.} \bibinfo{year}{1962}\natexlab{}.
\newblock \bibinfo{booktitle}{\emph{Theory of Graphs}}.
\newblock \bibinfo{publisher}{Amer. Math. Soc.} 206--212 pages.
\newblock


\bibitem[Reed and Wood(2016)]%
        {RW16}
\bibfield{author}{\bibinfo{person}{Bruce~A. Reed} {and} \bibinfo{person}{David~R. Wood}.} \bibinfo{year}{2016}\natexlab{}.
\newblock \showarticletitle{Forcing a sparse minor}.
\newblock \bibinfo{journal}{\emph{Combinatorics, Probability and Computing}} \bibinfo{volume}{25}, \bibinfo{number}{2} (\bibinfo{date}{March} \bibinfo{year}{2016}), \bibinfo{pages}{300--322}.
\newblock
\urldef\tempurl%
\url{https://doi.org/10.1017/S0963548315000073}
\showDOI{\tempurl}


\bibitem[Rozho{\v{n}} and Ghaffari(2020)]%
        {RG20}
\bibfield{author}{\bibinfo{person}{V{\'a}clav Rozho{\v{n}}} {and} \bibinfo{person}{Mohsen Ghaffari}.} \bibinfo{year}{2020}\natexlab{}.
\newblock \showarticletitle{Polylogarithmic-Time Deterministic Network Decomposition and Distributed Derandomization}. In \bibinfo{booktitle}{\emph{52nd Annual ACM Symposium on Theory of Computing (STOC)}} (Chicago, IL, USA). \bibinfo{publisher}{ACM Press}, \bibinfo{pages}{350--363}.
\newblock
\urldef\tempurl%
\url{https://doi.org/10.1145/3357713.3384298}
\showDOI{\tempurl}


\bibitem[Solava(2019)]%
        {solava2019fine}
\bibfield{author}{\bibinfo{person}{Ryan Solava}.} \bibinfo{year}{2019}\natexlab{}.
\newblock \emph{\bibinfo{title}{On the fine structure of graphs avoiding certain complete bipartite minors}}.
\newblock \bibinfo{thesistype}{Ph.\,D. Dissertation}. \bibinfo{school}{Vanderbilt University}.
\newblock


\bibitem[Suomela(2013)]%
        {Suomela13}
\bibfield{author}{\bibinfo{person}{Jukka Suomela}.} \bibinfo{year}{2013}\natexlab{}.
\newblock \showarticletitle{Survey of Local Algorithms}.
\newblock \bibinfo{journal}{\emph{Comput. Surveys}} \bibinfo{volume}{45}, \bibinfo{number}{2} (\bibinfo{date}{Feb.} \bibinfo{year}{2013}), \bibinfo{pages}{Article No. 24, pp. 1--40}.
\newblock
\urldef\tempurl%
\url{https://doi.org/10.1145/2431211.2431223}
\showDOI{\tempurl}


\bibitem[Thomason(2001)]%
        {Thomason01}
\bibfield{author}{\bibinfo{person}{Andrew Thomason}.} \bibinfo{year}{2001}\natexlab{}.
\newblock \showarticletitle{The Extremal Function for Complete Minors}.
\newblock \bibinfo{journal}{\emph{Journal of Combinatorial Theory, Series B}} \bibinfo{volume}{81}, \bibinfo{number}{2} (\bibinfo{date}{March} \bibinfo{year}{2001}), \bibinfo{pages}{318--338}.
\newblock
\urldef\tempurl%
\url{https://doi.org/10.1006/jctb.2000.2013}
\showDOI{\tempurl}


\bibitem[Wawrzyniak(2014)]%
        {Wawrzyniak14}
\bibfield{author}{\bibinfo{person}{Wojciech Wawrzyniak}.} \bibinfo{year}{2014}\natexlab{}.
\newblock \showarticletitle{A strengthened analysis of a local algorithm for the minimum dominating set problem in planar graphs}.
\newblock \bibinfo{journal}{\emph{Inform. Process. Lett.}} \bibinfo{volume}{114}, \bibinfo{number}{3} (\bibinfo{date}{March} \bibinfo{year}{2014}), \bibinfo{pages}{94--98}.
\newblock
\urldef\tempurl%
\url{https://doi.org/10.1016/j.ipl.2013.11.008}
\showDOI{\tempurl}


\end{thebibliography}
\end{document}